% GEB template: https://www.sciencedirect.com/journal/games-and-economic-behavior/publish/guide-for-authors
\pdfoutput=1 % arXiv produces errors if this is not here
\documentclass[nonacm]{acmart}

%%
%% \BibTeX command to typeset BibTeX logo in the docs
\AtBeginDocument{%
  }

%% Rights management information.  This information is sent to you
%% when you complete the rights form.  These commands have SAMPLE
%% values in them; it is your responsibility as an author to replace
%% the commands and values with those provided to you when you
%% complete the rights form.
\setcopyright{acmlicensed}
\copyrightyear{2025}
\acmYear{2025}
\acmDOI{XXXXXXX.XXXXXXX}
%% These commands are for a PROCEEDINGS abstract or paper.

\acmConference[WWW '26]{}{April 13--17,
  2026}{Dubai, UAE}

%%
%%  Uncomment \acmBooktitle if the title of the proceedings is different
%%  from ``Proceedings of ...''!
%%
%%\acmBooktitle{Woodstock '18: ACM Symposium on Neural Gaze Detection,
%%  June 03--05, 2018, Woodstock, NY}
%\acmISBN{978-1-4503-XXXX-X/2018/06}

% Remove "Preprint submitted to..."
\makeatletter
\def\ps@pprintTitle{%
  \let\@oddhead\@empty
  \let\@evenhead\@empty
  \let\@oddfoot\@empty
  \let\@evenfoot\@oddfoot
}
\makeatother
%\journal{Games and Economic Behavior}
%! TEX root = ../main.tex

% Ifs
\newif\ifcomments   % For our comments
\commentsfalse
 \commentstrue
%\newif\iflong
%\longfalse

%\usepackage[margin=1in]{geometry}
%\usepackage[linesnumbered,ruled,vlined]{algorithm2e}
%\newcommand\eatpunct[1]{}

%% Try to reduce the number of lines
% \everypar{\looseness=-1 }   % Can be reduced (need a space to terminate the number, otherwise if the paragraph started with 3 \looseness would be set to -13)
% \linepenalty=8000           % Can be increased
% \raggedbottom               % This can be used if we don't want text to fill the entire vertical space of the page, and this can also save space in the beginning of the document
% \usepackage[utf8]{inputenc}

% \usepackage{newunicodechar}
% \newunicodechar{ϕ}{\ensuremath{\phi}}
% \usepackage{natbib}
% \setcitestyle{authoryear,open={(},close={)}}
% \setcitestyle{authoryear,open={},close={}}

\renewcommand{\citeauthor}[1]{\citet{#1}}
% maxbibnames is for the number of authors in the bibliography
% maxalphanames,minalphanames is to show more initials in the citation itself
% \usepackage[backend=bibtex,style=numeric-comp]{biblatex}
% \RequirePackage[datamodel=acmdatamodel,style=acmnumeric,maxcitenames=3]{biblatex}
% \addbibresource{main.bib}
% Make sure that URLs are broken
% \setcounter{biburllcpenalty}{7000}
% \setcounter{biburlucpenalty}{8000}

%% Packages
% \usepackage[T1]{fontenc}    % Forces T1 fonts, which are supposed to be nicer than the standard ones
% \usepackage[english]{babel} % Allows to use special characters, translates some elements within the document, automatically activates the appropriate hyphenation rules for the language you choose. 
% \usepackage{csquotes}       % Nicer quotes, uses special commands, see https://tex.stackexchange.com/a/39302

\usepackage{csquotes}

\usepackage{amsmath}        % For paired delimiters
\usepackage{dsfont}         % For big 1
\usepackage{mathtools}      % For creating paired delimiters, such as ceil
\usepackage{enumitem}       % For creating nice enumerated environments
% \usepackage{graphicx}       % For specifying image location
% \graphicspath{ {./images/} }
% \usepackage{booktabs}       % Pretty tables
% \usepackage{array,multirow} % For multiple rows/columns in tables
% \usepackage{tabularx}       % Can specify column length in tables
\usepackage{amsthm}         % To manually create theorem environments
\usepackage{thmtools}
\usepackage{thm-restate}   % To restate theorems (define them in one place, and then restate them someplace else)
\usepackage{amsfonts}       % For mathbb, etc.
\usepackage[unicode]{hyperref} % For linking to websites
\usepackage{xcolor}         % Can set nice colors for hyperlinks
\hypersetup{                % Remove boxes from links, color them instead
    colorlinks,
    linkcolor={red!50!black},
    citecolor={blue!50!black},
    urlcolor={blue!80!black}
}
\usepackage{etoolbox}\appto\UrlBreaks{\do\-}% Breaks long links
%\usepackage{minted} % For embedding code, e.g. python
%\usemintedstyle{colorful} % See available styles here https://www.overleaf.com/learn/latex/Code_Highlighting_with_minted#Reference_guide
\usepackage[capitalise]{cleveref}       % Automatically adds "Section", "Table", etc' to references. Should be imported after all packages
% \usepackage{subfiles} % For working with multiple content .tex files, allows each file to be compiled separately, best loaded last in the preamble

% Some of the packages are needed for arXiv
% \usepackage{xparse}

% Comments
\usepackage{comment}
\ifcomments
\setlength {\marginparwidth }{2cm}  % Increase margins for notes
\usepackage[colorinlistoftodos,prependcaption,textsize=tiny,textwidth=\marginparwidth]{todonotes}
% caption={} is needed as a parameter for \todo to allow using environments within todos, such as enumerate.
\newcommand{\aviv}[1]{\ifcomments {\todo[color=blue!40,inline]{Aviv: #1}} \fi}
\newcommand{\yotam}[1]{\ifcomments {\todo[color=green!40,inline]{Yotam: #1}} \fi}
\else
\newcommand{\aviv}[1]{}
\newcommand{\yotam}[1]{}
\fi

% Theorems
% Set definition numbering to be per-section, and all others to be part of the definition counter
\newtheorem{definition}{Definition}[section]

\newtheorem{claim}[definition]{Claim}

\newtheorem{lemma}[definition]{Lemma}
\newtheorem{corollary}[definition]{Corollary}

\newtheorem{remark}[definition]{Remark}
\newtheorem{example}[definition]{Example}

% Define reference labels, for cleveref
\crefname{definition}{Def.}{Defs.}
\crefname{theorem}{Theorem}{Theorems}
\crefname{claim}{Claim}{Claims}
\crefname{fact}{Fact}{Fact}
\crefname{lemma}{Lemma}{Lemmas}
\crefname{corollary}{Corollary}{Corollaries}
\crefname{example}{Example}{Examples}
\crefname{remark}{Remark}{Remarks}

% Math operators

\newcommand{\define}{\stackrel{\mathclap{\mbox{\text{\tiny def}}}}{=}}

% Glossary (Acronyms and Notations)
% For easily creating a table of glossaries and acronyms. "glossaries-extra" has nice built-in support for symbols, while "glossaries" doesn't.
% If entries are not sorted, nogroupskip prevents adding random whitespace between them.
% Using '\gls[hyper=false]{bitcoin}' will cancel the automatic hyperlink.
% Using the 'numberedsection' option adds numbers to the sections.
% Using the 'section=section' prevents whitespace between glossary sections.
\usepackage[symbols,acronym,nonumberlist,nogroupskip,section=subsection,numberedsection,stylemods={mcols,longbooktabs}]{glossaries-extra}
% \makenoidxglossaries

% Acronyms
\glssetcategoryattribute{acronym}{nohyper}{true} % Disable hyperlinks on acronyms
\setabbreviationstyle[acronym]{long-short}  % Will show the first acronym like so: <long-version> (<short-version>)
\newacronym{DeFi}{DeFi}{decentralized finance}
\newacronym{PoW}{PoW}{Proof-of-Work}
\newacronym{PoS}{PoS}{Proof-of-Stake}
\newacronym{MEV}{MEV}{miner-extractable value}
\newacronym{block-DAG}{block-DAGs}{block directed-acyclic-graph}
\newacronym{PDF}{PDF}{probability density function}
\newacronym{CDF}{CDF}{cumulative density function}
\newacronym{AMM}{AMM}{automated market maker}
\newacronym{USD}{USD}{United States Dollar}
\newacronym{EIP}{EIP}{Ethereum improvement proposal}
\newacronym{mempool}{mempool}{memory pool}
\newacronym{TFM}{TFM}{transaction fee mechanism}
\newacronym{iid}{i.i.d.}{independent and identically distributed}
\newacronym{wrt}{w.r.t.}{with regard to}
\newacronym{wlog}{w.l.o.g.}{without loss of generality}
\newacronym{DSIC}{DSIC}{dominant strategy incentive-compatible}
\newacronym{BIC}{BIC}{Bayesian incentive-compatible}
\newacronym{MMIC}{MMIC}{myopic miner incentive-compatible}
\newacronym{OCA}{OCA}{off-chain agreement}
\newacronym{SCP}{SCP}{side-contract-proof}
\newacronym{PABGA}{PABGA}{pay-as-bid greedy auction}
\newacronym{SPA}{SPA}{second price auction}
\newacronym{UPGA}{UPGA}{uniform-price greedy auction}
\newacronym{BNE}{BNE}{Bayesian-Nash equilibrium}
\newacronym{EPIR}{EPIR}{ex-post individually rational}
\newacronym{EPBB}{EPBB}{ex-post burn balanced}
\newacronym{GTA}{GTA}{good toy auction}
\newacronym{GTFBA}{GTFBA}{good toy finite-blocksize auction}
\newacronym{QoS}{QoS}{quality of service}
\newacronym{CTPA}{CTPA}{constant total probability of allocation}

% Paper notations
\glsxtrnewsymbol[description={
    An auction.
}]{auction}{
    \ensuremath{A}
}
\newcommand{\auction}{{\gls[hyper=false]{auction}}}

\glsxtrnewsymbol[description={
    A transaction.
}]{tx}{
    \ensuremath{tx}
}

\glsxtrnewsymbol[description={
    Payment rule.
}]{pay}{
    \ensuremath{p}
}
\newcommand{\pay}{{\gls[hyper=false]{pay}}}

\glsxtrnewsymbol[description={
    Burning rule.
}]{burn}{
    \ensuremath{\beta}
}
\newcommand{\burn}{{\gls[hyper=false]{burn}}}

\glsxtrnewsymbol[description={
    Update rule.
}]{update}{
    \ensuremath{\upsilon}
}

\glsxtrnewsymbol[description={
    Reserve price of an auction.
}]{reserve}{
    \ensuremath{r}
}
\newcommand{\reserve}{{\gls[hyper=false]{reserve}}}

\glsxtrnewsymbol[description={
    Allocation function.
}]{allocation}{
    \ensuremath{a}
}
\newcommand{\alloc}{{\gls[hyper=false]{allocation}}}

\glsxtrnewsymbol[description={
    Transaction fee of some transaction, in tokens.
}]{fee}{
    \ensuremath{b}
}
\newcommand{\fee}{{\gls[hyper=false]{fee}}}

\glsxtrnewsymbol[description={
    Predefined maximal block-size, in bytes.
}]{blocksize}{
    \ensuremath{\mathcal{B}}
}
\newcommand{\blocksize}{{\gls[hyper=false]{blocksize}}}

\glsxtrnewsymbol[description={
    Miner revenue.
}]{revenue}{
    \ensuremath{u}
}

\glsxtrnewsymbol[description={
    Number of all bids with $b_i \geq \reserve$.
}]{bidsMoreReserve}{
    \ensuremath{S_{\geq \reserve}}
}

\NewDocumentCommand{\expect}{ e{_} s o >{\SplitArgument{1}{|}}m }{%
  \operatorname{E}%     the expectation operator
  \IfValueT{#1}{{\!}_{#1}}% the measure of the expectation
  \IfBooleanTF{#2}{% *-variant
    \expectarg*{\expectvar#4}%
  }{% no *-variant
    \IfNoValueTF{#3}{% no optional argument
      \expectarg{\expectvar#4}%
    }{% optional argument
      \expectarg[#3]{\expectvar#4}%
    }%
  }%
}
\NewDocumentCommand{\expectvar}{mm}{%
  #1\IfValueT{#2}{\nonscript\;\delimsize\vert\nonscript\;#2}%
}
\DeclarePairedDelimiterX{\expectarg}[1]{[}{]}{#1}

\begin{document}
%\title{The $c$-SCP Hierarchy Stops at the Second Level}
\title{A Small  Collusion is All You Need}

\author{Yotam Gafni}
\affiliation{%
 \institution{Weizmann Institute of Science}
 \city{Rehovot 7630031}
 \country{Israel}}

% \cortext[cor]{Corresponding author}

\begin{abstract}
Transaction Fee Mechanisms (TFMs) study auction design in the Blockchain context, and emphasize robustness against miner and user collusion, moreso than traditional auction theory. \cite{chung2023foundations} introduce the notion of a mechanism being $c$-Side-Contract-Proof ($c$-SCP), i.e., robust to a collusion of the miner and $c$ users. Later work \cite{chung2024collusion,welfareIncreasingCollusion} shows a gap between the $1$-SCP and $2$-SCP classes. We show that the class of $2$-SCP mechanisms equals that of any $c$-SCP with $c\geq 2$, under a relatively minor assumption of consistent tie-breaking. In essence, this implies that any mechanism vulnerable to collusion, is also vulnerable to a small collusion. 
\end{abstract}

%%Research highlights
% \begin{highlights}
% \item We fully characterize the auction rules that are robust to welfare-increasing collusion (OCA).
% \item We resolve the conjecture of \citep{roughgarden2024transaction} in the negative, regarding the existence of deterministic OCA-proof mechanisms that are incentive-compatible for both users and miners. 
% \end{highlights}

\begin{CCSXML}
<ccs2012>
   <concept>
       <concept_id>10003752.10010070.10010099.10010107</concept_id>
       <concept_desc>Theory of computation~Computational pricing and auctions</concept_desc>
       <concept_significance>500</concept_significance>
       </concept>
   <concept>
       <concept_id>10003752.10010070.10010099.10010101</concept_id>
       <concept_desc>Theory of computation~Algorithmic mechanism design</concept_desc>
       <concept_significance>500</concept_significance>
       </concept>
   <concept>
       <concept_id>10003752.10010070.10010099.10010102</concept_id>
       <concept_desc>Theory of computation~Solution concepts in game theory</concept_desc>
       <concept_significance>500</concept_significance>
       </concept>
 </ccs2012>
\end{CCSXML}

\ccsdesc[500]{Theory of computation~Computational pricing and auctions}
\ccsdesc[500]{Theory of computation~Algorithmic mechanism design}
\ccsdesc[500]{Theory of computation~Solution concepts in game theory}
\keywords{
Collusion in Auctions,
Blockchains,
Transaction Fee Mechanisms
}

\maketitle

\section{Introduction}

Transaction Fee Mechanisms (TFMs) is the name given to the class of mechanisms responsible for the ordinary operation of a blockchain system, choosing which transactions are confirmed at each issued block. In Bitcoin, this is done by having each transaction commit to a fee that it pays if confirmed. This is in essence a pay-as-bid auction, and a line of work since \cite{lavi2019redesigning,huberman2021monopoly} has considered alternative auction designs. Importantly, this has not remained a theoretical exercise, and major blockchain protocols have adopted different mechanisms, most notably Ethereum, which adopted the EIP-1559 mechanism (in essence, a first-price with a reserve that is fully burned) \cite{roughgarden2020eip1559}. 

TFMs have several unique characteristics that distinguish them from traditional auctions \cite{roughgarden8reasons}: (i) They allow for payments to be ``burned'', i.e., the user pays but the miner (which takes the role of the auctioneer) does not receive the full payment, (ii) Their research is more focused on \textit{robustness} to different forms of attacks, rather than just incentive-compatibility w.r.t. the bidders, and welfare/revenue maximization within this framework, which is the standard classic approach to auctions \cite{vickrey1961, myerson1981optimal}. In particular, the issue of \textit{collusion-resistance} has taken an important role in TFM research \cite{roughgarden2020transaction,chung2023foundations}. Even in traditional auctions, practitioners have long emphasized the importance of collusion-resistance. The following is an emblematic quote:

\begin{displayquote}
``The most important features of an auction are its robustness against collusion and its attractiveness to potential bidders. Failure to attend to these issues can lead to disaster.'' \cite{klempererWhatMatters}
\end{displayquote}

%\begin{displayquote}
%    ``The FCC and others conducting similar auctions should think carefully about the tradeoff between more informed price discovery and the risk of collusive bidding.'' \cite{cramtonSchwarz}
%\end{displayquote}

However, this takes maybe even greater precedence in Blockchain systems, because of their online and ephemeral nature. In an anonymous, unregulated environment, bidders may worry less about facing prosecution for cartel behavior, as they have faced many times in traditional auctions \cite{posner70}. Thus, TFM research strives to make collusion-resistance a built-in part of its design.

We now arrive at the topic at the center of this paper. %Some TFM works
\cite{gafni2022greedy} raise the question, supported by influential work in economic theory \cite{Stigler1964}, of whether collusion is a real risk in TFMs. The argument goes, that since in some systems block times are quick (e.g., $12$ seconds in Ethereum), it is hard to not only establish a collusion, but also coordinate, through possibly several rounds of communication, how each agent should partake in it. Both in traditional auctions and in TFMs, there is some support in theory and empirics that the \textit{smaller} the collusion, the \textit{easier} it is to establish it, coordinate it, and avoid its detection. Thus, it is important to know whether mechanisms are robust to \textit{small}-size collusion. 

Our work does exactly that: We consider the popular collusion notion of $c$-SCP, introduced by \cite{chung2023foundations}. The notion considers whether a coalition of the miner and up to $c$ bidders is able to increase its utility by changing bids, as well as using manipulations by the miner (dropping bids, or adding shill bids). The $c$ parameter naturally establishes a hierarchy: Clearly, the class of $1$-SCP mechanisms (i.e., these robust to a collusion of the miner and at most a single user) is the widest class, while the class of SCP (used as shorthand for $n$-SCP, i.e., a collusion of the miner and any number of bidders) is the narrowest. Moreover, %while it does not explicitly make this point, 
an example of \cite{chung2024collusion} shows a gap between the class of $1$-SCP and $2$-SCP mechanisms. It is thus natural to ask, 

\begin{displayquote}
\textbf{Question:} %How does the hierarchy of $c$-SCP mechanism classes look? 
Is there a separation between each $c$-SCP and $(c+1)$-SCP class? 
\end{displayquote}

We answer this question in the negative, and show that other than the $1$-SCP and $2$-SCP gap, there are no more gaps, and in terms of mechanism classes, $2$-SCP $=$ SCP. Moreover, if we can find an arbitrary collusion against a mechanism, then it is reducible in poly-time to a simple one, using our explicit reduction. Thus, at least in terms of the need to involve many parties in the collusion, we show that this is not an issue. We detail remaining hurdles for the colluders in the discussion.

\subsection{Related Work}

\subsubsection{The Core TFM Framework and Extensions}

Considering blockchain transaction fee mechanisms as an auction design problem was studied by \cite{lavi2019redesigning,huberman2021monopoly}. The early emphasis was on \textit{miner} incentive-compatibility (MIC), i.e., robustness to manipulations by the miner such as dropping bids, and introducing fake bids. \cite{roughgarden2020transaction} was the first to introduce collusion-resilience as the centerpiece of TFM design, through the notion of \textit{OCA}-proof. OCA-proof is a notion that requires the mechanism to be welfare-maximizing for the \textit{grand coalition}%, but it is minimal in the sense that sub-coalitions may seek a collusion that benefits the sub-coalition, but not the overall welfare, and this is not captured by \textit{OCA}-proof
. \cite{chung2023foundations} introduce the $c$-SCP notion, which captures a more game-theoretic group deviation notion, which we study in this paper. The focus of the axiomatic literature that followed was on showing whether a ``dream'' TFM, that combines collusion-resilience, as well as the user and miner incentive-compatibility notions, exists. The overall conclusion is that the answer is no (To mention some results, no user incentive-compatible (UIC) + $1$-SCP mechanism \cite{chung2023foundations} exists, no UIC+MIC+\textit{OCA}-proof deterministic mechanism, scalable randomized one, or randomized where the collusion knows the random coins \cite{chung2024collusion,gafni2024barriers} exists), although some open questions remain. In particular, a non-scalable randomized mechanism where the collusion is oblivious to the random coins may still exist, though its welfare is bounded away from optimal \cite{gafni2024barriers}. However, one might be interested in a subset of the restrictions. For example, if we think that regular users are ephemeral and have bounded rationality, we may omit the user incentive-compatibility requirement. One reason to think so is the common use of fee estimators and calculators by users \cite{zhang2024transactionfeeestimationbitcoin}). With only a subset of restrictions, there are characterizations of the classes of mechanisms that implement them \cite{gafni2024barriers}. Importantly, we are not aware of a past work that examined the notion of SCP in isolation, and drew conclusions and characterizations from it alone, and this is a novelty of our work.

Beyond the core framework, the study of TFMs was extended in many directions: (1) Considering time preferences \cite{gafni2024scheduling,nisan2023serial,babaioff2024optimality}, (2) Considering the problem of Maximal Extractable Value (MEV), i.e., incentive problems that may arise in certain blockchain use-cases where the order of the transactions in the block matters \cite{bahrani2024transaction}, 
(3) Considering more advanced blockchain architectures, such as DAGs and PBS \cite{garimidiPBS,garimidi2025dag}. We emphasize that our work is in the core TFM setting, and does not consider these variations. 

%\yotam{So why is the $2$-SCP $=$ SCP question interesting even without UIC? You say users are boundedly-rational, but SCP include users, so that can't be it. If you think about non-truthful mechanisms, then you need to incorporate that as a function $\sigma$ wrapping SCP as well (and it wouldn't change the basic issue). And why miner and user collusion is the biggest risk? (that's easier: Miners are more long-term and strategic than users, and so on)}

%\yotam{The question of ``what collusion are we protecting against'' (miner-user or user-user) stems from (i) Who are we trying to protect? (ii) What is the threat model? In government auctions, we have in the back of our mind the idea that we are trying to protect the auctioneer which represents the citizenry, and the agents are corporations or traders that might try to work together to manipulate, and this is a bigger worry than having someone ``on the inside'' manipulating. In blockchains, the idea in the back of our mind is that bidders are regular users that want to issue transactions, and this is the class we want to protect, but the miner and a part of the users are more sophisticated long-term agents that may try to manipulate. }
In the appendix, we overview additional related work that details the theory and empirical evidence of collusion, both in Blockchains and traditional auctions.

\subsection{An Overview of Our Proof}

Our proof is constructive and presents a reduction from a given general collusion to a $2$-collusion. First, we make sure the collusion is in what we call ``Canonical Ordering''. This means that bidders in the collusion do not change their relative order, in terms of value, before and after the collusion. This is possible since a collusion can always have even better utility by allowing higher-value agents to be confirmed. Keeping the canonical ordering throughout our reduction is useful in making the problem one-dimensional ('how many colluding bidders are confirmed?') instead of multi-dimensional ('is each of the colluding bidders confirmed?'). 
The main idea is to decompose the collusion into steps, where at each step only one bidder changes their bid. The goal is to isolate one step which is in itself a beneficial collusion. However, this has to be done carefully%, as we demonstrate in Appendix~\ref{} by showing a ``wrong'' decomposition
. Two important insights allow the decomposition:

(i) To maintain $2$-SCP, there must be a monotonicity of the number of confirmed bidders in whether a bidder has increased / decreased their bid. If a bidder in the collusion increases their bid, the number of confirmed bidders must weakly grow, and vice versa. In some sense, this is like incentive-compatibility for a single bidder, but applied to the set of all the colluders.  There are however some subtleties, since the miner is also part of the joint utility of the collusion, and also we need to only use the $2$-SCP condition (and not the more general SCP). %Another issue is that the collusion is oblivious, in terms of utility, to whether a bidder is unconfirmed, or it is confirmed where it pays its bid and this payment is fully burned, and we treat it through a notion of zero-contribution bidders, which introduced a subtlety to the above confirmation intuition. 

(ii) We aim to have some kind of a telescopic argument: If you have a decomposition of the collusion into one-by-one steps, then one of the steps is a collusion in itself. For this to work, the changes at each step should accumulate in a way that corresponds to the shift from the original $A$ to $B$. However, the changes may be incorrectly accounted for in this shift: Every time a bidder changes their bid and is confirmed, the difference in values is accounted into the sum, but these may be different than the bidders confirmed in $B$, whose difference of values is accounted for in the collusion. This goes both ways: There may be bidders confirmed in the decomposition that are not confirmed in $B$, and vice versa. 
%since only some of the changing bidders are confirmed in $B$. Along the decomposition, there might be bidders not confirmed in $B$ who change their bid and are confirmed. Then, they 
%These bidders may contribute a change to the telescopic sum, without anything corresponding to it in the original collusion. This is not a problem if they are decreasing their bid, but it is a problem if they are increasing it. 
Thus, we need to make sure, as Lemma~\ref{lem:car_parking} requires, that ``increasing and ultimately unconfirmed'' bidders are not confirmed throughout the decomposition, and ``decreasing and ultimately confirmed'' are confirmed in the decomposition. 
It turns out that this is guaranteed with a certain decomposition, where the  ``increasing and confirmed'' bidders move first, then all decreasing bidders move, and finally the ``increasing but unconfirmed'' bidders move. We nick-name this construction ``The Salsa Dance Decomposition'' because of the resemblance to the forward-backward-forward nature of the basic Salsa dance step. 

We are thus able to reduce any collusion to a collusion where a single bidder moves, and all other parties to the collusion are passive beneficiaries of this change. It then remains to isolate a single beneficiary of the change, and have the mover and the beneficiary would constitute the collusion, together with the miner. This in itself is not trivial since the $2$-collusion needs improve its joint utility: For example, you could have a ``mover'' that loses utility of $10$ due to the move, and $3$ beneficiaries that gain $4$ each. While it is easy to isolate a single beneficiary (take any of them), this will not suffice for a $2$-collusion. This again requires making small steps, this time w.r.t. how the mover changes their bid. We show that by gradually increasing the bid we can isolate a change where the mover has a small utility loss, but at least one of the beneficiaries has a larger utility gain.

\section{Model and Preliminaries}
\label{sec:Model}
We follow the standard model used by the literature for Transaction Fee Mechanisms (TFMs) \citep{lavi2019redesigning,roughgarden2024transaction,chung2023foundations,shi2023what}. The model assumes a multi-unit, unit-demand setting, i.e., bidders have a fixed-size transaction, and do not care about specific placement in the block. %We sometimes further restrict the discussion to a \emph{single-item} setting, and when we do it is highlighted in the text. In particular, we do so for the discussion of randomized mechanisms. 
%The main technical assumption we make, which we use to simplify the discussion, is that the auction is a \emph{single-item} auction, i.e., that the block size is $1$. 
We interchangeably use the terms \emph{auction}, \emph{mechanism}, or \gls{TFM} to refer to the object of our discussion, depending on what is most appropriate in the context. 
%Some missing proofs are given in \cref{app:proofs}. 

%However, our results should translate to general single parameter environments, for example for the case of $k$ identical items, which is the most commonly considered case in the TFM literature. 

\subsection{Transaction Fee Mechanisms}
A \gls{TFM} consists of a \emph{confirmation rule} $\alloc$, a \emph{payment rule} $\pay$ and a \emph{burning rule} $\burn$.

\paragraph*{Confirmation rule}
A confirmation rule defines the mechanism's confirmation of transactions to the upcoming block, as intended by the mechanism's designers.
\begin{definition}[Confirmation rule]
    \label{def:AllocFunc}
    A \emph{deterministic} confirmation rule $\alloc:R_+^n \rightarrow \{0,1\}^n$ defines which transaction should be included in the upcoming block.
    Generally, a confirmation rule may be \emph{randomized}, i.e., $\alloc:R_+^n \rightarrow \{\Delta(0,1)\}^n$. %In particular, this implies that the feasibility condition holds for the sum of allocation probabilities. 
\end{definition}

\begin{comment}
\begin{remark}
    In our analysis, we omit the identifier $n$ and write $\alloc^\auction$.
    We assume that the correct $\alloc_n^\auction$ is used, based on the length $n$ of the bid vector $\mathbf{b}$.
    We also use two helper notations: 
    \begin{enumerate}
    \item When $\blocksize = 1$ (single-item case) we use the notation $\tilde{\alloc}^\auction:R_+^n\rightarrow [n] \cup \{\emptyset\}$ for deterministic mechanisms to specify the unique bidder who receives the item, or $\emptyset$ if the item remains un-confirmed.
    I.e., if $\tilde{\alloc}^\auction(\mathbf{b}) = i$, then:
    $
        \alloc^\auction(\mathbf{b}) = \{0, \ldots, 0, \underbrace{1}_{i\text{-th index}}, 0, \ldots, 0\}.
    $
        
    \item Instead of writing $[\alloc^\auction(\mathbf{b})]_i$ to specify the confirmation probability for bidder $i$ under $\auction$, we use the notation $\alloc^\auction(b_i, b_{-i})$, where $b_i$ is bidder $i$'s bid, and $b_{-i}$ are the bids of all other bidders.
    While still emphasizing that bidder $i$ is the bidder we are interested in, this notation is better suited for varying $b_i$, while fixing $b_{-i}$.
    With this notation, we write the feasibility constraint of \cref{def:AllocFunc} as:
    \begin{equation}
    \label{eq:feasibility}
    \sum_i \alloc^\auction(b_i, b_{-i}) \leq \blocksize.
    \end{equation}
    \end{enumerate}
\end{remark}
\end{comment}

\paragraph*{Payment \& burn}
The payment and burn rules respectively define the amount of fees paid by transactions when included in a block, and how much of each payment is ``burnt'' and taken out of circulation instead of given to the miner.

\begin{definition}[Payment rule]
    The payment rule $\mathbf{\pay}:R_+^n\rightarrow R_+^n$ receives bids $b_1, \ldots, b_n$, and outputs a payment vector with a payment for each bidder $i$.
\end{definition}
\begin{definition}[Burning rule]
    The burning rule $\mathbf{\burn}:R_+^n\rightarrow R_+^n$ in auction $\auction$ receives bids $b_1, \ldots, b_n$, and and outputs a burn vector with a burn for each bidder $i$, meaning the amount of funds out of $b_i$'s payment that is taken out of circulation%, i.e., do not count towards the miner's revenue
    . 
    %Furthermore, the $i$-th bid's burn amount does not depend on $\mathbf{b_{-i}}$: $\forall \mathbf{b_{-i}'}: \burn^{\auction}_i\left(b_i, \mathbf{b_{-i}}\right) = \burn^{\auction}_i\left(b_i, \mathbf{b_{-i}'}\right)$.
\end{definition}

\begin{comment}
\begin{remark}
To extend the payment and burning rules to the randomized case, for our purposes it suffices to consider the \emph{expected} payment or burn of each bidder under randomized rules.
We extend the definitions and notations similarly to our extension for confirmation functions.
For $\theta \in \{\pay^\auction, \burn^\auction\}$, we write $\theta(b_i, b_{-i})$ instead of $[\theta(\mathbf{b})]_i$, and in the deterministic single-item case, we use $\tilde{\theta}:R_+^n \rightarrow R$ to denote the payment or burn of the \emph{winner}, i.e., if $\tilde{\alloc}^\auction(\mathbf{b}) = i$, then: $\theta(\mathbf{b}) = \{0, \ldots, 0, \underbrace{\tilde{\theta}(\mathbf{b})}_{i\text{-th index}}, 0, \ldots, 0\}$.
\end{remark}
\end{comment}

%\paragraph*{Basic definitions}
%We proceed with several definitions that are used throughout our analysis.

\begin{definition}[Agent Utilities]
    \label{def:utilities}
    Let auction $\auction = (\alloc, \pay, \burn)$. Consider $n$ bidders with true valuations $\mathbf{v} \in R_+^n$ and $n'$ bids 
    $\mathbf{b} \in R_+^{n'}$. 
    W.l.o.g., $\mathbf{b}$ is indexed the same as $\mathbf{v}$, meaning that the first $n$ bids in $\mathbf{b}$ correspond to the bids in $\mathbf{v}$ (possibly as $0$ if omitted), thus w.l.o.g. let $n' \geq n$.
    We consider the $n' - n$ bidders that are in $\mathbf{b}$ but not in $\mathbf{v}$ as the miner's ``fake bids'', and denote them by $\mathbf{b}_F$. 
    Then, the utilities of the various agents are defined as follows.
    \begin{itemize}
    \item  \emph{Bidder Utility:}
    \begin{equation}
    \label{eq:bidder_util}
        u_i(\fee_i, \mathbf{\fee_{-i}} ; v_i)
        \define
            v_i \cdot \alloc(b_i, \mathbf{b}_{-i}) - \pay\left(\fee_i,\mathbf{\fee_{-i}}\right).
      \end{equation}

    \item \emph{Miner Utility:}
    \begin{equation}
    \label{eq:miner_util}
    \begin{split}
    u_{miner} (\mathbf{\fee} ; \mathbf{v})
    &
    \define
    \sum_{i=1}^{n'} \left(\pay(b_i, \mathbf{b}_{-i}) - \burn(b_i, \mathbf{b}_{-i}) \right) + \sum_{i=n+1}^{n'} u_i(\fee_i, \mathbf{\fee_{-i}} ; 0)
    \\&
    =
    \sum_{i=1}^n \left(\pay(b_i, \mathbf{b}_{-i}) - \burn(b_i, \mathbf{b}_{-i}) \right) - \sum_{i=n+1}^{n'} \burn(b_i, \mathbf{b}_{-i}).
    \end{split}
    \end{equation}

    \end{itemize}
\end{definition}

It will sometimes be convenient to use the shorthand 
$$u_C(X ; Y) \stackrel{\text{def}}{=} \sum_{i\in C} u_i(X ; Y).$$

We assume an auction is individually rational: For any $\mathbf{b} \in R_+^n$ and for each bidder $i$,
    \begin{equation}
    \label{eq:IR_util}
    u_i(b_i, \mathbf{b}_{-i} ; b_i) \geq 0.
    \end{equation}
    We also assume it is burn-balanced, i.e., the burn of each bidder does not exceed their payment: \begin{equation}
    %\label{eq:BB_positive}
    \label{eq:BB}
    \pay(b_i, \mathbf{b}_{-i}) \geq \burn(b_i, \mathbf{b}_{-i}) \geq 0. 
    \end{equation}

    We assume that an auction is anonymous, so that a permutation over the bids results in a permutation over the confirmed bids. For any bid vector $\mathbf{b} \in R_+^n$, permutation $\pi \in S_n$, and rule $x \in \{\alloc, \pay, \burn\}$, then
    $x(\pi(\mathbf{b})) = \pi(x(\mathbf{b}))$.

    We say a bidder is a zero-utility bidder if $u_i(b_i, \mathbf{b_{-i}} ; b_i) = 0$: I.e., it is a bidder that is either unconfirmed, or pays their bid. 

We make the following tie-breaking assumption:

\begin{definition}
    \textit{Consistent Tie-Breaking towards zero-utility bidders, w.r.t. unconfirmed bidders.} 

    Consider a bidder $j$ and bids $\mathbf{b}_{-j}$ so bidder $j$ is unconfirmed both when bidding $b_j$, and when bidding $b'_j$. Call the former setting $A$ and the latter $B$. 

    Then any bidder $i$ that is zero-utility in both settings $A$ and $B$, has the same confirmation status in both.

\end{definition}

This definition is natural in the sense that it holds for reasonable auction formats. 

\begin{example}
Consider a single-item \textit{posted-price} auction with price $p$ and lexicographic tie-breaking, i.e., for bidders $b_1, \ldots, b_n$, the lowest index bidder with $b_i \geq p$ wins the item and pays $p$. Then this tie-breaking is consistent towards zero-utility bidders, w.r.t. unconfirmed bidders. Let us unpack the definition directly: Consider some unconfirmed bidder $j$. It either has a higher index than the confirmed bidder, or $b_j < p$. Consider it changes its bid and remains unconfirmed (then, in the latter case, we have $b'_j < p$). Now consider a zero-utility bidder $i$: If $i$ is unconfirmed, then $i$ remains unconfirmed. If $i$ is confirmed, then it must be that $b_i = p$, and it remains confirmed in both settings. 

Alternatively, consider a second-price auction with lexicographic tie-breaking. I.e., the highest bidder wins, and if there are several equal highest bids, the lowest indexed one wins. Similarly, this is consistent towards zero-utility bidders, w.r.t. unconfirmed bidders. 

\end{example}

We note that the property holds for specially-constructed formats, such as the discount auction of \cite{chung2024collusion}. 

%The line of work initiated by \citeauthor{chung2023foundations} uses the \gls{SCP} collusion notion, which compares the aggregate utility of the colluding coalition with the same coalition's ``honest'' aggregate utility, i.e., the coalition's utility if they were to act honestly.

\begin{definition}[$c$-\Glsxtrfull{SCP}, with Active / Passive Miner]
    \label{def:Scp}
    % We start by defining a \emph{Side Contract Agreement}: 

    For a bid vector $b_1, \ldots, b_n$, a side-contract between a miner and a coalition of bidders $C$ is such that bidders in the coalition $i\in C$ may change their bids to $b'_i$. In the active model, the miner may also omit bids (whether by coalition members or not), and add $n' - n$ fake bids. We denote the original honest setting $A$ and the setting adjusted by the collusion $B$. We use the term \textbf{setting} to mean a certain combination of bids. We say that a collusion is beneficial if the joint utility of the miner and the bidders in $C$ is greater in $B$ than in $A$, given that their true valuations are from $A$. 

    \begin{equation}
        \label{eq:scp_cond} 
        u_{miner}(A ; A) + \sum_{i \in C} u_i(A ; A) < u_{miner}(B ; A) + \sum_{i \in C} u_i(B ; A).
    \end{equation}
    
    %For a vector $\mathbf{v}\in R_+^n$ and $n'\geq n$, we say that $\Gamma_{\mathbf{v}} \in R^{n'}$ is a side-contract among a colluding coalition of the miner and a set of bidders, if $\Gamma_{\mathbf{v}}$ can change any of the colluding bidders' bids, as well as omit bids 
    %For each agent $i$, we denote its bid under the contract by $\Gamma_{\mathbf{v}_i}$, and the bids of all other agents by $\Gamma_{\mathbf{v}_{-i}}$.
    %Let $c_{\Gamma_{\mathbf{v}}}$ be the number of bids that $\Gamma_{\mathbf{v}}$ \emph{changes} (rather than omits or keeps without change) such that they differ from the colluding bidders' true valuations, i.e., $c_{\Gamma_{\mathbf{v}}} = \sum_{i=1}^n \mathds{1}_{\Gamma_{\mathbf{v}_i} \not \in \{0,v_i\}}$.

    A TFM is $c$-\gls{SCP}-proof if a coalition of a miner and a coalition $C$ of the bidders with $|C| \leq c$ cannot increase their aggregate utility by deviating from the honest protocol.
    %Formally, for any valuation vector $\mathbf{v}$ of $n$ bidders and collusion $\Gamma_{\mathbf{v}}$ of the miner and a set of bidders $C \subseteq [n]$, 
    % $\{i \text{ s.t. }\Gamma_{\mathbf{v}_i} \not \in \{0, v_i\}\} \subseteq C$, i.e., 
    % 
    
    % 
    %If an auction is $c$-\gls{SCP} for any $c$, we say it is \gls{SCP}. 
\end{definition}

Notice that in the passive miner model, since the miner does not omit or introduce fake bids, we may denote $u_{miner}(X)$ instead of $u_{miner}(X ; Y)$, as $Y$ only matters in the miner's utility to distinguish the fake bids. We sometimes refer to a possible arrangement between a coalition and a miner as a collusion, distinguishing it from a \textit{beneficial} collusion, which satisfies the above condition that their joint utility improves. We also say there exists a $c$-SC ($c$-Side-Contract) when there is a beneficial collusion with $c$ bidders. 

\section{The class of \texorpdfstring{$2$}{}-SCP mechanisms equals the class of SCP mechanisms}

As observed by \cite{welfareIncreasingCollusion}, the Discount auction of \cite{chung2024collusion} shows a nice separation between $1$-SCP and $2$-SCP. 
We show that there are no other gaps between $c$-SCP and $(c+1)$-SCP mechanisms. We start by noting that we can restrict ourselves to the passive miner model of collusion. 

\begin{restatable}[]{lemma}{ActivePassiveMiner}
\label{lem:active_passive_miner}
    If a beneficial collusion exists against a mechanism in the active miner model, then it is either not $1$-SCP, or there exists a beneficial collusion against it in the passive miner model. 
\end{restatable}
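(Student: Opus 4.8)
The plan is to peel the miner's two extra powers in the active model---dropping bids and injecting fake bids---off of a given beneficial active collusion, so that what remains is either a passive collusion of the same flavor or a violation of $1$-SCP. Write $J(X)$ for the joint utility of the miner and $C$ in a setting $X$, evaluated at the true valuations of $A$, so that the collusion being beneficial is exactly $J(B)>J(A)$, the inequality \eqref{eq:scp_cond}. The first thing I would record is the simplification of $J$: since every coalition bidder's payment appears with a $+$ sign in $u_{miner}$ \eqref{eq:miner_util} and a $-$ sign in their own $u_i$ \eqref{eq:bidder_util}, the payments of $C$ cancel, giving $J(X)=\sum_{i\in C}\big(v_i\,\alloc(b_i,\mathbf b_{-i})-\burn(b_i,\mathbf b_{-i})\big)+\sum_{j\notin C}\big(\pay(b_j,\mathbf b_{-j})-\burn(b_j,\mathbf b_{-j})\big)-\sum_{k\ \mathrm{fake}}\burn(b_k,\mathbf b_{-k})$. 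By burn-balancedness \eqref{eq:BB} every present non-coalition bid contributes a nonnegative amount (and $0$ when unconfirmed, by \eqref{eq:IR_util}), whereas every fake bid contributes a nonpositive amount.

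I would then split the deviation $A\to B$ into its passive part $B_0$ (apply only the coalition's bid changes, with no drops and no fakes) and the purely miner-side part $B_0\to B$. If $J(B_0)>J(A)$, then $B_0$ is itself a beneficial passive collusion and we are done. So the interesting regime is $J(B_0)\le J(A)$, where $J(B)>J(A)$ forces $J(B)>J(B_0)$: the strict gain comes entirely from the miner's structural manipulations performed on top of the coalition's bids.

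In that regime I would try to re-cast the manipulations as passive moves. A fake bid is a value-$0$ bid fully controlled by the miner, and by the cancellation above it enters $J$ exactly as a coalition member of true value $0$; so I would model each fake as such a member who in $A$ bids $0$ (unconfirmed, hence contributing nothing) and in $B$ raises its bid, which is a passive action. Symmetrically, a dropped \emph{coalition} bid is replaced by having that member bid down until it is unconfirmed and zero-utility, again passive. For both re-castings to preserve $J$ I must show the remaining bidders see an unchanged outcome, which is precisely what anonymity together with the consistent tie-breaking assumption is meant to supply: an unconfirmed, zero-utility bid must not flip the confirmation status of the bidders that matter. The genuinely active residue is the dropping of \emph{non-coalition} bids, where the gain trades a term $\pay(b_j,\mathbf b_{-j})-\burn(b_j,\mathbf b_{-j})\ge 0$ for the confirmation of a higher-value coalition member or for raising the miner's own revenue; I would argue such a gain is attributable to the miner acting alone, or to the miner together with a single coalition member, i.e. to a coalition of size $\le 1$. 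This is how the ``either not $1$-SCP'' branch of the statement gets discharged.

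The step I expect to be the main obstacle is exactly this last re-casting and attribution: reconciling the \emph{value} accounting---coalition members enter $J$ through their true valuations $v_i$ from $A$, not through their manipulated bids---with the \emph{structural} operation of adding or removing a bid from the vector. The delicate point is showing that ``drop bid $j$'' and ``bidder $j$ bids itself into the unconfirmed, zero-utility region'' leave every other relevant bidder's confirmation and payment intact; this is the reason the lemma needs anonymity and consistent tie-breaking rather than holding for arbitrary mechanisms. Isolating a single beneficiary out of a joint gain shared among several parties, so that a size-$\le 1$ coalition already profits, is the other place where care is required, and it is the natural point at which to invoke $1$-SCP.
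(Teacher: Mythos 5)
Your opening moves match the paper's: converting each fake bid into a value-$0$ coalition member who raises its bid is exactly the paper's first step, and the idea of splitting the deviation into a passive part and a miner-only part and arguing that one of the two pieces must already be profitable is the right skeleton. But the two places you yourself flag as ``the main obstacle'' are genuine gaps, not merely delicate points, and the paper resolves them by a different route. First, the reduction of ``drop bid $j$'' to ``bidder $j$ bids itself into the unconfirmed, zero-utility region'' is not available: such a bid value need not exist (a mechanism may confirm every \emph{present} bid), and the consistent tie-breaking assumption does not equate an absent bid with a present-but-unconfirmed one --- it only constrains how \emph{other} zero-utility bidders react when an unconfirmed bidder moves between two losing bids. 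The active/passive distinction is genuinely substantive (the paper's appendix separates the two models via the fully-burned second-price auction), so no such outcome-equivalence can hold in general. The paper never attempts this recasting; it keeps omission as an irreducibly active move and discharges it through the ``not $1$-SCP'' branch.

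Second, your attribution of the drop-induced gain to ``the miner alone, or the miner together with a single coalition member'' is asserted rather than proved, and this is where the paper's proof does its real work. The paper forms the intermediate setting $X$ = ($A$ with the omitted bids removed, coalition bids unchanged), writes the two SCP inequalities $A\to X$ and $X\to B$, and telescopes them using $u_C(X;X)=u_C(X;A)$ and $u_C(B;X)=u_C(B;A)$ --- identities that hold precisely because the omissions are performed \emph{before} any coalition bid changes, so the true values of all surviving bidders are unchanged at $X$. (Your ordering, with the coalition moving first to reach $B_0$, breaks these identities: $u_C(B_0;B_0)\neq u_C(B_0;A)$ when a coalition member's confirmation depends on its altered bid, so the analogous SCP chaining fails, and your purely arithmetic observation $J(B)>J(B_0)$ does not certify a collusion \emph{from the honest setting} $B_0$.) When the profitable step is $A\to X$ and the miner weakly gains, the paper extracts a single beneficiary by taking the agent $i$ maximizing $u_i(X;A)-u_i(A;A)$ and averaging, yielding a $1$-SC; you would need to supply this argument, and also handle the remaining subcase $u_{miner}(X)\le u_{miner}(A)$, for the proof to close.
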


%where only the bidders are actively manipulating by changing their bids. That is since if there is a collusion in the original model where a miner drops bids and adds fake bids, then there is a collusion in the more restricted model where the bidders that are dropped change their bid to $0$, and the fake bids the miner adds are bidders with value $0$ that change their bid to the fake bid by the miner. 
%Thus, an interesting implication of our result will be that these two models yield the same class of $2$-SCP mechanisms. However, this can does not extend to $1$-SCP mechanisms, and in Appendix~\ref{sec:active-passive_miner} we show a mechanism which is $1$-SCP in the passive miner model but not in \cite{chung2023foundations}'s original active miner model. The restricted model may be of independent interest in settings with more regulation over the auctioneer's actions, where bids are verified through some centralized registry and can not be dropped or faked easily. 
%Going forward, we consider ourselves in the passive miner model. 

\begin{lemma}[$2$-SCP $\implies$ a subset of the highest bids is confirmed]
\label{lem:highest_bidders}
If $a,p,\beta$ is $2$-SCP, then for a bids vector $b_1, \ldots, b_n$ sorted in descending order, it must be that $\exists 0 \leq \ell \leq n$ so that all bids up to and including $\ell$ are confirmed, and all other bids are not.
\end{lemma}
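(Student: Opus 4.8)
The plan is to prove the contrapositive-flavored statement: if the confirmed set of a descending bid vector fails to be an initial segment (even after reordering tied bids), then one can exhibit a beneficial collusion involving only two bidders and a passive miner, contradicting $2$-SCP. Since a passive-miner collusion is a special case of an active-miner one, such a collusion directly violates $2$-SCP, so for this direction I need not even invoke \cref{lem:active_passive_miner}; it suffices to work in the passive model throughout. Concretely, suppose toward contradiction that there are two bidders with $b_i > b_j$ strictly, such that $i$ is unconfirmed while $j$ is confirmed in the honest setting $A$.

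First I would record two consequences of individual rationality \eqref{eq:IR_util}. An unconfirmed bidder pays $0$: from $a_i=0$, \eqref{eq:IR_util} forces $p_i\le 0$, and since payments are nonnegative, $p_i=0$; hence $u_i(A;A)=0$. The confirmed bidder $j$ contributes $u_j(A;A)=b_j-p_j$, where $p_j$ denotes its payment and I use that $v_j=b_j$ in the honest setting.

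The heart of the argument is a single move: let $C=\{i,j\}$ and have these two bidders \emph{swap} their bids, so $i$ now bids $b_j$ and $j$ now bids $b_i$, while every other bid and the (passive) miner are untouched. Applying anonymity to the transposition $\pi=(i\,j)$ shows that the confirmation statuses and payments of positions $i$ and $j$ are exchanged, and that the whole multiset of payments-minus-burns is merely permuted, so $u_{miner}$ is unchanged. Thus in the new setting $B$, bidder $i$ is confirmed and pays exactly $p_j$, giving $u_i(B;A)=b_i-p_j$, while $j$ becomes unconfirmed with $u_j(B;A)=0$. The coalition's joint utility therefore increases by
\[
\bigl(u_i(B;A)+u_j(B;A)\bigr)-\bigl(u_i(A;A)+u_j(A;A)\bigr)=(b_i-p_j)-(b_j-p_j)=b_i-b_j>0,
\]
with the miner term cancelling. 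This is a beneficial $2$-collusion in the sense of \eqref{eq:scp_cond}, contradicting $2$-SCP. Consequently, no strictly-higher bidder can be unconfirmed while a strictly-lower bidder is confirmed. Note this is exactly where two colluders are needed, consistent with the $1$-SCP/$2$-SCP gap: the swap is a genuinely joint move, as anonymity pins down the outcome only when both coordinates are exchanged.

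It remains to upgrade this strict statement to the prefix property in the presence of ties, which I expect to be the only delicate point. The strict conclusion already forces every bidder valued strictly above the confirmation boundary to be confirmed and every one strictly below to be unconfirmed, and it rules out two distinct tie-groups each being split; hence at most one tie-group can contain both confirmed and unconfirmed bidders. Within that single tie-group all bids are equal, so I am free to order its confirmed members ahead of its unconfirmed ones in the descending sort, which turns the global confirmed set into an initial segment and yields the threshold $\ell$. This is precisely where the consistent-tie-breaking assumption enters: it guarantees that this reordering is well-defined and agrees across the settings used later, so the choice of $\ell$ is unambiguous.
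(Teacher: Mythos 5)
Your proof is correct and is essentially the paper's own argument: the paper likewise swaps the bids of the offending pair $b_i>b_j$, notes the miner's utility is unchanged (by anonymity the outcome is just permuted), and observes the coalition $\{i,j\}$ gains $b_i-b_j>0$, contradicting $2$-SCP. Your additional handling of tied bids (reordering a single split tie-group so the confirmed set becomes a prefix) is a reasonable tidying-up of a point the paper's one-line proof leaves implicit, but it is not a different route.
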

\begin{proof}
    Consider if for some two bids $b_i > b_j$, the latter is confirmed while the former is not, and consider the $2$-SC where bidder $i$ bids $b_j$ and bidder $j$ bids $b_i$. The miner utility is unchanged, the sum of utilities for agents $i,j$ increases by $b_i - b_j$, so it is a $2$-SC.
\end{proof}

\begin{comment}
\begin{figure}
\centering
\begin{minipage}{0.45\textwidth}
\includegraphics[width=\linewidth]{}
\label{fig:highest_bidders}
\end{minipage}  
\caption{A demonstration of Lemma~\ref{lem:highest_bidders}. If $b_4$ is confirmed while $b_3$ is not, bidder $3$ and $4$ may coordinate to switch their bids so that the joint value extraction increases. In all our figures, we use the blue color for bidders not participating in the collusion, and purple for participating bidders. A green check-mark denotes a confirmed transaction, while a red cross marks an unconfirmed transaction.}
\end{figure}

\end{comment}

\begin{lemma}[Canonical Ordering]
\label{lem:canonical_ordering}
    Consider a beneficial collusion of a miner and a coalition $C$ so that each $i\in C$ changes their bid from $b_i$ to $b'_i$ (possibly, with $b_i = b'_i$). Then, we can rearrange the collusion so that the highest bid among $\{b_i\}_{i\in C}$ goes to the highest bid among $\{b'_i\}_{i\in C}$, and so on, and this is a beneficial collusion.
\end{lemma}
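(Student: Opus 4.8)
The plan is to view the rearrangement as an \emph{assignment problem} and argue that the canonical matching is joint-utility-optimal among all ways of distributing the coalition's new bids. By \cref{lem:active_passive_miner} I may assume the passive-miner model, so the miner's utility is $u_{miner}(B) = \sum_{i=1}^n \left(\pay(b_i,\mathbf{b}_{-i}) - \burn(b_i,\mathbf{b}_{-i})\right)$. The key observation is that any rearrangement of the new bids $\{b'_i\}_{i\in C}$ among the coalition members leaves the \emph{multiset} of all bids in $B$ unchanged. Since every rule $x \in \{\alloc,\pay,\burn\}$ is anonymous, i.e.\ $x(\pi(\mathbf{b})) = \pi(x(\mathbf{b}))$, the outcome attached to a given bid value depends only on that value and the fixed surrounding multiset. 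In particular, both $u_{miner}(B)$ and the total coalition payment $\sum_{i\in C}\pay(b'_i,\mathbf{b}'_{-i})$ are symmetric functions of the fixed multiset, hence invariant under the rearrangement.

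Consequently, the only term of the joint utility $u_{miner}(B) + \sum_{i\in C} u_i(B;A)$ that a rearrangement can affect is $\sum_{i\in C} v_i\, \alloc(b'_i,\mathbf{b}'_{-i})$. Writing $\sigma$ for the bijection that assigns each coalition member to one of the fixed new-bid values, maximizing the joint utility reduces to maximizing $\sum_{i\in C} v_i\, a_{\sigma(i)}$, where $\{a_{\sigma(i)}\}$ is the fixed multiset of allocation values carried by the coalition's new bids. Note also that the number of confirmed coalition bids is the same across all rearrangements, since it is pinned down by the fixed multiset.

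To identify the optimal $\sigma$ I would invoke \cref{lem:highest_bidders}: because the mechanism is $2$-SCP, confirmation in $B$ is monotone in bid value, so the confirmed coalition bids are exactly its highest ones. By the rearrangement inequality (equivalently, a pairwise exchange argument), $\sum_{i\in C} v_i\, a_{\sigma(i)}$ is maximized by matching the largest valuations to the largest allocation values, i.e.\ to the highest, and therefore confirmed, bids. Since honest bids coincide with valuations ($b_i = v_i$ in $A$), sorting the coalition by $b_i$ is the same as sorting it by $v_i$, and this rearrangement-optimal matching is precisely the canonical ordering of the statement. Hence the canonical collusion attains the maximum joint utility over all rearrangements sharing the multiset of $B$; as the original collusion was beneficial, the canonical one is at least as good and is therefore also a beneficial collusion.

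The main obstacle is the careful use of anonymity to justify that $\alloc$, $\pay$, and $\burn$ act as genuine functions of a bid's value within a fixed multiset. This requires handling ties: equal bids must receive equal outcomes, which follows from anonymity applied to the transposition of two equal bids and is reinforced by the consistent tie-breaking assumption, ensuring the confirmation count and total coalition payment are well-defined and exactly invariant rather than merely stable for generic bids. The remaining work is bookkeeping: confirming $v_i = b_i$ so that the canonical (bid-sorted) matching really is the valuation-sorted matching the rearrangement inequality selects.
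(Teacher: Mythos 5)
Your proof is correct and follows essentially the same route as the paper: the mechanism's outcome depends only on the submitted bid multiset (which a rearrangement preserves), so only the term $\sum_{i\in C} v_i\,\alloc(b'_i,\mathbf{b}'_{-i})$ varies, and by Lemma~\ref{lem:highest_bidders} the confirmed slots are the top bids, so sort-matching values to bids maximizes it. Your added care about ties and the explicit appeal to anonymity merely spell out what the paper compresses into ``the mechanism is oblivious to the underlying values before the collusion.''
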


\begin{proof}
The new collusion we define has the same payments and burns, since the mechanism is oblivious to the underlying values before the collusion. Moreover, by Lemma~\ref{lem:highest_bidders}, the top $\ell$ bids among $b'_i$ (for some $\ell$) are confirmed. The value after the collusion is maximized if the underlying values before the collusion in these confirmed spots are the top $\ell$ values in $b_i$, which is the case after the rearrangement. 
\end{proof}

\begin{figure}
\centering
\begin{minipage}{0.4\textwidth}
\includegraphics[width=\linewidth]{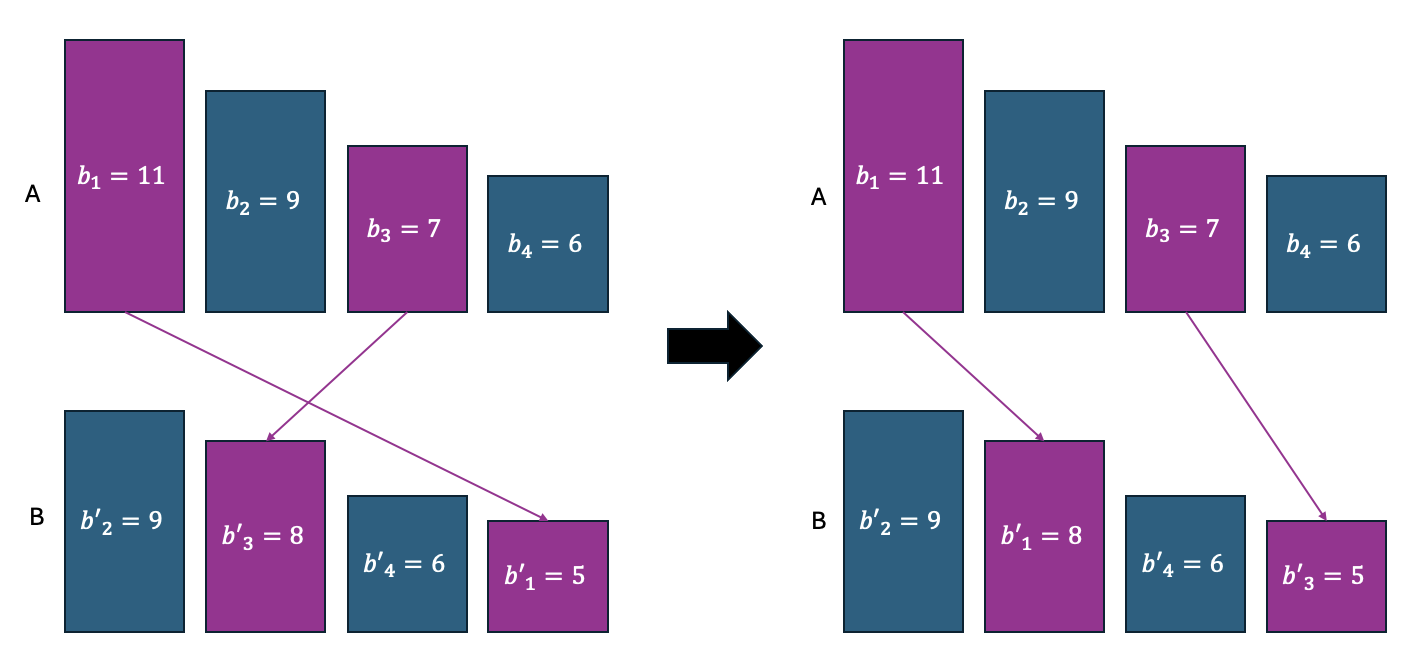}
\label{fig:canonical_ordering}
\end{minipage}  
\caption{A demonstration of Lemma~\ref{lem:canonical_ordering}. Given a collusion where bidders change their order after the collusion, we can find a collusion where the order is maintained.}
\Description[A demonstration of Lemma~\ref{lem:canonical_ordering}. Given a collusion where bidders change their order after the collusion, we can find a collusion where the order is maintained.]{A demonstration of Lemma~\ref{lem:canonical_ordering}. Given a collusion where bidders change their order after the collusion, we can find a collusion where the order is maintained.}
\end{figure}

%Introducing this slight generalization of an unconfirmed bidder is important in order to not restrict the mechanism to resolve tie-breaking in a certain way. %For our proofs, w.r.t. the SCP notion, there is no real difference between the two forms of zero-contribution bidders, and so one might intuitively think of them as unconfirmed bidders. 

\begin{lemma}[$2$-SCP $\implies$ A Form of Non-Bossiness]
\label{lem:2scp-non-bossy}
Consider a bidder $i$ bidding $b_i$ in $A$ and $b'_i$ in $B$, and this is the only change in bids between the settings. Assume bidder $i$ is unconfirmed in both settings. 
Then, the number of \textit{non} zero-utility bidders in $B$ is the same as in $A$. 
\end{lemma}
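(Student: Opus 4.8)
The plan is to work throughout in the passive-miner model, which is justified by \cref{lem:active_passive_miner}, so that the miner's utility $u_{miner}(X)$ depends only on the setting $X$. Since $A$ and $B$ differ only in bidder $i$'s bid, every bidder $j \neq i$ submits the same bid in both settings, so its own-bid utility equals its true-value utility in each; write $u_j^A$ and $u_j^B$ for these. A bidder is \emph{non} zero-utility exactly when it is confirmed and pays strictly below its bid, i.e.\ when its utility is strictly positive (by individual rationality and burn-balancedness an honest bidder always has non-negative utility). Hence it suffices to show that the sets $\{j : u_j^A > 0\}$ and $\{j : u_j^B > 0\}$ have the same size, and I will in fact prove the stronger statement $u_j^A = u_j^B$ for every $j \neq i$.

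The central object is the net effect of bidder $i$'s move on the \emph{miner-plus-$i$} pair, $\delta_{AB} := \bigl(u_{miner}(B) - u_{miner}(A)\bigr) + u_i(B;A)$, together with its reverse $\delta_{BA} := \bigl(u_{miner}(A) - u_{miner}(B)\bigr) + u_i(A;B)$. First I would extract from the hypotheses the identity $u_i(B;A) + u_i(A;B) = 0$: writing $a_i \in \{0,1\}$ for bidder $i$'s common confirmation status, an unconfirmed bidder pays $0$ (IR and burn-balancedness) while a confirmed zero-utility bidder pays its bid, so $i$'s payment is $a_i b_i$ in $A$ and $a_i b'_i$ in $B$, giving $u_i(B;A) = a_i(b_i - b'_i)$ and $u_i(A;B) = a_i(b'_i - b_i)$, which sum to $0$; thus $\delta_{AB} + \delta_{BA} = 0$. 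Then, using only the $1$-collusion case of $2$-SCP (coalition $C = \{i\}$), non-beneficiality of the deviation $A \to B$ reads $u_{miner}(A) + u_i(A;A) \geq u_{miner}(B) + u_i(B;A)$, and since $u_i(A;A) = 0$ this is exactly $\delta_{AB} \leq 0$; the symmetric deviation $B \to A$ (using $u_i(B;B) = 0$) gives $\delta_{BA} \leq 0$. Combined with $\delta_{AB} + \delta_{BA} = 0$, this forces $\delta_{AB} = \delta_{BA} = 0$, i.e.\ bidder $i$'s move is utility-neutral for the miner-plus-$i$ pair.

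Finally I would bring in a second bidder. For an arbitrary $j \neq i$, take $C = \{i,j\}$, which is legal since $|C| = 2$. The joint-utility change of the deviation $A \to B$ is $\delta_{AB} + (u_j^B - u_j^A) = u_j^B - u_j^A$, and $2$-SCP forces it to be $\leq 0$; the deviation $B \to A$ symmetrically forces $u_j^A - u_j^B \leq 0$. Hence $u_j^A = u_j^B$ for every $j \neq i$, so the two settings share exactly the same set of positive-utility bidders and, in particular, the same number of non zero-utility bidders.

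The main obstacle is the middle step, establishing $\delta_{AB} = 0$, since this is where the hypotheses are genuinely used: the zero-utility assumption together with the equal confirmation status is precisely what makes bidder $i$'s own contribution cancel ($u_i(B;A) + u_i(A;B) = 0$), and it is essential to invoke the SCP condition in \emph{both} directions, as a single direction only bounds $\delta_{AB}$ from one side and adding the two-bidder inequalities alone cancels the $u_j$ terms and yields nothing. A secondary point requiring care is the bookkeeping of true valuations versus submitted bids—distinguishing $u_i(B;A)$ from $u_i(B;B)$—because the collusion's joint utility is always evaluated at the honest setting's valuations.
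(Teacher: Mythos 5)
Your proof is correct, and it takes a genuinely different route from the paper's. The paper argues by contradiction: it fixes a bidder $j$ that is allegedly zero-utility in $A$ but non-zero-utility in $B$, writes the $2$-SCP inequalities for the coalition $\{i,j\}$ in both directions, and combines them into its Eq.~\ref{eq:2scp_baseline}. You instead first isolate the miner-plus-$i$ subsystem: the two $1$-SCP inequalities together with the cancellation $u_i(B;A)+u_i(A;B)=0$ (which is precisely where the zero-utility and equal-confirmation hypotheses enter) force $u_{miner}(B)-u_{miner}(A)+u_i(B;A)=0$, and only then do you invoke each direction of the two-bidder inequality separately to sandwich $u_j(A;A)=u_j(B;B)$ for every $j\neq i$. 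This buys two things. First, a strictly stronger conclusion: every bidder other than $i$ has identical utility in both settings, so the \emph{set} of non-zero-utility bidders is preserved, not merely its cardinality --- which is closer to what the later lemmas need when combined with consistent tie-breaking. Second, your remark that naively adding the forward and backward $\{i,j\}$ inequalities cancels the $j$-terms and yields only $0\geq 0$ is on point: the paper's derivation of Eq.~\ref{eq:2scp_baseline} passes through the equality $u_{miner}(B)=u_{miner}(B)+u_i(B;B)+u_j(B;B)$, which fails under the very contradiction hypothesis being entertained (there $u_j(B;B)>0$), so the two-sided $1$-SCP step you insert is not a stylistic choice but the ingredient that actually closes the argument. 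In short, your version both repairs and strengthens the paper's proof.
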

\begin{proof}

Assume towards contradiction there is a bidder $j$ that is a zero-utility bidder in $A$ and a non-zero-utility bidder in $B$. 

First, consider a collusion between the miner and bidder $i$. Then, by the $1$-SCP condition from $A$ to $B$ and vice versa,

\[
\begin{split}
    & u_{miner}(A) = u_{miner}(A) + u_i(A ; A) \\
    & \geq u_{miner}(B) + u_i(B ; A) = u_{miner}(B), \\
    & u_{miner}(B) = u_{miner}(B) + u_i(B ; B) \\
    & \geq u_{miner}(A) + u_i(A ; B) = u_{miner}(A), 
\end{split}
\]

which yields
$u_{miner}(A) = u_{miner}(B)$. 

Next, consider a collusion between the miner and bidders $i,j$. Then, by the $2$-SCP conditions from $A$ to $B$ and vice versa,

\[
\begin{split}
    & u_{miner}(A) = u_{miner}(A) + u_i(A ; A) + u_j(A ; A) \\
    & \geq u_{miner}(B) + u_i(B ; A) + u_j(B ; A), \\
    & u_{miner}(B) = u_{miner}(B) + u_i(B ; B) + u_j(B ; B) \\
    & \geq u_{miner}(A) + u_i(A ; B) + u_j(A ; B), 
\end{split}
\]

which yields:

\begin{equation}
\label{eq:2scp_baseline}
\begin{split}
& u_{miner}(A) \geq u_{miner}(A) + u_j(A ; B) + u_j(B ; A),
\end{split}
\end{equation}

i.e., 
$$u_j(A ; B) + u_j(B ; A) \leq 0.$$

However, notice that for bidder $j$, since it does not change its bid between $A$ and $B$, and it is a zero-utility bidder in $A$ and non-zero-utility bidder in $B$, it holds that:
$$u_j(A ; B) + u_j(B ; A) = 0 + b_j - p_i^B > 0,$$

in contradiction. 
\end{proof}

%In the rest of the proof, we ignore zero utility bidders, in the sense that we always assume the above monotonicity lemma holds for all bidders. We can do so w.l.o.g. since ultimately (In Lemma~\ref{lem:single_mover}) we use the monotonicity to argue about a chain of utility inequalities, and the arbitrary choice of whether to confirm zero utility bidders or not 

\begin{lemma}[Increase Monotonicity]
\label{lem:increase_monotonicity}
    If a collusion has a single unconfirmed bidder $\ell$ that increases their bid from $A$ to $B$, so that it remains lower than the lowest confirmed bidder in $A$, then either:

    (i) Bidder $\ell$ is confirmed in $B$. 

    (ii) The set of confirmed bidders weakly increases from $A$ to $B$. 

  %  (i) Bidder $\ell$ becomes non zero-contribution bidder in $B$ after being a zero-contribution bidder in $A$.

 %   (ii) Bidder $\ell$ is zero-contribution in both $A$ and $B$ and becomes confirmed in $B$ after being unconfirmed in $A$. 

 %   (iii) Bidder $\ell$ confirmation status is the same in $A$ and $B$ and the set of non zero-contribution colluders weakly increases. 
\end{lemma}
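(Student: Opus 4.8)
The plan is to prove the statement in the regime where (i) fails, so assume bidder $\ell$ is \emph{unconfirmed} in $B$, and derive (ii). Concretely, I will fix an arbitrary bidder $k$ that is confirmed in $A$ and show it must also be confirmed in $B$; since $k$ is arbitrary, this gives that the confirmed set weakly grows. The key enabling observation is that $\ell$ is the only bidder whose bid changes between $A$ and $B$, and under the failure of (i) it is unconfirmed (hence a zero-utility bidder) with the \emph{same} confirmation status in both settings. This is exactly the hypothesis required to invoke both Lemma~\ref{lem:2scp-non-bossy} and the Consistent Tie-Breaking assumption, with $\ell$ in the role of the ``moving'' bidder. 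The argument then splits on the type of $k$ in $A$.

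In the first case, suppose $k$ is a \emph{non}-zero-utility bidder in $A$ (confirmed and paying strictly below its bid). Here I would apply Lemma~\ref{lem:2scp-non-bossy}: it yields that the \emph{number} of non-zero-utility bidders is the same in $A$ and $B$, and its proof additionally establishes that no bidder can pass from zero-utility in $A$ to non-zero-utility in $B$. A short counting argument combines these two facts into the stronger statement that the \emph{set} of non-zero-utility bidders is identical in $A$ and $B$. In particular $k$ stays non-zero-utility in $B$, and a non-zero-utility bidder is by definition confirmed, so $k$ is confirmed in $B$.

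In the second case, suppose $k$ is a zero-utility bidder in $A$. If $k$ were unconfirmed in $B$, it would be zero-utility in $B$ as well (unconfirmed bidders are zero-utility by definition), making $k$ a zero-utility bidder in both $A$ and $B$. But $\ell$ is unconfirmed in both settings with all other bids held fixed, so the Consistent Tie-Breaking assumption applies and forces every bidder that is zero-utility in both $A$ and $B$ — in particular $k$ — to share the same confirmation status across the two settings. This contradicts $k$ being confirmed in $A$ and unconfirmed in $B$, so $k$ must be confirmed in $B$.

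Taken together, the two cases show that every bidder confirmed in $A$ remains confirmed in $B$, which is precisely (ii). The structural hypotheses play a supporting role: Lemma~\ref{lem:highest_bidders} together with the assumption that $b'_\ell$ stays below the lowest confirmed bid of $A$ guarantees that the bidders of the $A$-confirmed set remain the top bids in $B$, so that ``$\ell$ unconfirmed in $B$'' is consistent only with the confirmed set expanding rather than reshuffling. I expect the main obstacle to be the bookkeeping in the first case — upgrading Lemma~\ref{lem:2scp-non-bossy} from ``same \emph{number} of non-zero-utility bidders'' to ``same \emph{set},'' since only the latter lets me track the fate of the specific bidder $k$ rather than merely count. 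Once that is in hand, the case split is clean and precisely matched to the two available tools: non-bossiness handles confirmed bidders paying below their bid, while Consistent Tie-Breaking handles the remaining (pay-your-bid or unconfirmed) zero-utility bidders.
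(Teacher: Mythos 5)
Your proof is correct and follows essentially the same route as the paper's: in the case where (i) fails, the mover $\ell$ is a zero-utility bidder with unchanged (unconfirmed) status in both settings, so Lemma~\ref{lem:2scp-non-bossy} pins down the non-zero-utility bidders and the consistent tie-breaking assumption pins down the zero-utility ones, yielding (ii). Your explicit upgrade of Lemma~\ref{lem:2scp-non-bossy} from ``same number'' to ``same set'' (via the one-sided inclusion established in its proof plus equal cardinality) is a point the paper glosses over but implicitly relies on, so flagging and closing it is a welcome refinement rather than a deviation.
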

\begin{proof}
    Consider a collusion of the miner and all agents in a coalition $C$, and some single bidder $\ell\in C$ so that $b'_{\ell} > b_{\ell}$. Let $A$ be the setting before the collusion (i.e. with all agents bidding $\mathbf{b}$) and $B$ the setting after the collusion. 

If bidder $\ell$ is a %zero-contribution bidder at $A$ and a 
non-zero-utility bidder at $B$, condition (i) is satisfied. 

Assume towards contradiction that bidder $\ell$ is a non-zero-utility bidder at $A$ but is a zero-utility bidder at $B$. 
By the $1$-SCP condition for the miner and agent $\ell$, w.r.t. going from $A$ to $B$ and vice versa, it holds that
$$u_{miner}(A) + b_{\ell}-p_{\ell}^A \geq u_{miner}(B) \geq u_{miner}(A) + b'_{\ell} - p_{\ell}^A,$$
which is a contradiction as $b'_{\ell} > b_{\ell}$. 

If bidder $\ell$ is a zero-utility bidder at both $A$ and $B$:
\begin{itemize}
    \item If bidder $\ell$ is confirmed in  $B$, then condition (i) is satisfied.

    \item If bidder $\ell$ is unconfirmed in both $A$ and $B$, then by Lemma~\ref{lem:2scp-non-bossy} the set of non-zero-utility bidders remains the same at $A$ and $B$. By consistent tie-breaking, so do zero-utility bidders, and so condition (ii) is satisfied.  

    \item If bidder $\ell$ is confirmed in $A$ and unconfirmed in $B$, consider the $1$-SCP condition from $A$ to $B$:

        \begin{equation}
        \label{eq:1scp_first}
    \begin{split}
        & u_{miner}(A) = u_{miner}(A) + u_i(A ; A) \geq u_{miner}(B) + u_i(B ; A), \\
        \end{split}
        \end{equation}
        Then, combine it with the $1$-SCP condition from $B$ to $A$:
        \[
        \begin{split}
        & u_{miner}(B) = u_{miner}(B) + u_i(B ; B) \geq u_{miner}(A) + u_i(A ; B) \\
        & \stackrel{\text{Eq.~\ref{eq:1scp_first}}}{\geq} u_{miner}(B) + u_i(B ; A) + u_i(A ; B) = u_{miner}(B) + b'_i - p_i^A \\
        & > u_{miner}(B) + b_i - p_i^A = u_{miner}(B),
    \end{split}
    \]

    which is a contradiction. 
\end{itemize}
\end{proof}

\begin{comment}
\begin{figure}
\centering
\begin{minipage}{0.25\textwidth}
\includegraphics[width=\linewidth]{}
\label{fig:increase_monotonicity}
\end{minipage} 
\caption{A demonstration of Lemma~\ref{lem:increase_monotonicity}. The small black arrow pointing up marks that the bidder increased their bid. The set of confirmed non-zero-contribution bids increases appropriately.}
\end{figure}
\end{comment}

A similar (but opposing) statement holds for decreasing bids. 

\begin{lemma}[Decrease Monotonicity]
\label{lem:decrease_monotonicity}

    If a collusion has a single \textit{confirmed} bidder $\ell$ that decreases their bid from $A$ to $B$, so that it remains higher than the highest unconfirmed bidder in $A$, then either:

    (i) Bidder $\ell$ is unconfirmed in $B$. 

    (ii) The set of unconfirmed bidders weakly increases from $A$ to $B$. 

\end{lemma}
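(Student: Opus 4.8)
The plan is to mirror the proof of \cref{lem:increase_monotonicity} under the time-reversal $b_\ell \mapsto b'_\ell$, while being careful that the two directions are genuinely asymmetric. First I would dispatch the easy branch: if $\ell$ is unconfirmed in $B$ then condition (i) holds and we are done, so for the remainder assume $\ell$ is confirmed in $B$ (and, by hypothesis, in $A$). Since only $\ell$ changes its bid and, by the hypothesis that $b'_\ell$ stays above the highest unconfirmed bid of $A$ together with \cref{lem:highest_bidders}, the confirmed set remains a prefix of the sorted bids, the partition of the bidders into ``the top block'' and ``the rest'' is identical in $A$ and $B$. Hence the only way condition (ii) can fail is that some bidder $j$ moves from unconfirmed in $A$ to confirmed in $B$, and the whole goal reduces to ruling this out.

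Next I would apply the $1$-SCP condition to the pair $\{\text{miner},\ell\}$ in both directions $A \to B$ and $B \to A$ (exactly as in \cref{lem:increase_monotonicity}, but now with $b'_\ell < b_\ell$), which yields the identity $u_{miner}(A) - p_\ell^A = u_{miner}(B) - p_\ell^B$; applying the same pair of inequalities to $\{i,\ell\}$ for any bidder $i$ confirmed in both settings gives $p_i^A = p_i^B$, so payments on the frozen block are unchanged. If in addition $\ell$ is zero-utility in both $A$ and $B$ (confirmed and paying its bid in each), then \cref{lem:2scp-non-bossy} applies verbatim and freezes the \emph{set} of non-zero-utility bidders; it then remains only to freeze the confirmation status of the zero-utility bidders, which is where consistent tie-breaking enters, giving an unchanged confirmed set and therefore (ii). The case where $\ell$ is non-zero-utility I would fold in by the same payment-equality bookkeeping, reducing it to the zero-utility anchor.

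The step I expect to be the main obstacle is precisely this last one, and it is where the duality with \cref{lem:increase_monotonicity} breaks. In the increasing case the changer $\ell$ is \emph{unconfirmed} in both settings, so consistent tie-breaking applies directly; here $\ell$ is \emph{confirmed} in both, so the assumption in its stated (unconfirmed-changer) form does not immediately bite. Worse, unlike the increasing case, the $1$-SCP inequalities produce no contradiction for any of the four confirmed-to-confirmed utility transitions of $\ell$: a candidate bad bidder $j$ can be shown to satisfy $p_j^B = b_j$ (it is zero-utility in $B$ as well as in $A$), and with that equality every local $2$-collusion through $\{\ell,j\}$ on the move $A \leftrightarrow B$ is exactly utility-neutral, so the scenario is locally consistent with $2$-SCP. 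Closing it therefore requires invoking consistent tie-breaking in the form appropriate to a confirmed changer, i.e.\ the dual of the stated assumption, which is the delicate point of the argument; I would verify that this dual form holds for the auction formats the paper cares about, exactly as is claimed for the stated form.
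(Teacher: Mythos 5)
The paper does not actually prove this lemma: it is asserted to follow from \cref{lem:increase_monotonicity} as ``a similar (but opposing) statement,'' with no argument given. Your attempt is therefore more detailed than the paper's, and your diagnosis of where the duality breaks is, as far as I can verify, correct. The easy reduction is right (condition (ii) fails exactly when some $j$ is unconfirmed in $A$ and confirmed in $B$; the appeal to \cref{lem:highest_bidders} is not needed for that, it is just the definition). The identity $u_{miner}(A) - p_\ell^A = u_{miner}(B) - p_\ell^B$ does follow from the two $1$-SCP inequalities once $\ell$ is confirmed in both settings, and combining it with the two $2$-SCP inequalities for the pair $\{\ell, j\}$ forces $p_j^B = b_j$ and makes both inequalities tight, so the pairwise $2$-SCP constraints on the move $A \leftrightarrow B$ cannot exclude such a $j$. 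I checked this arithmetic and it holds regardless of whether $\ell$ is zero-utility.

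The obstacle you isolate is genuine: the consistent tie-breaking assumption is stated only for a changer that is \emph{unconfirmed} in both settings, whereas in the surviving case of this lemma the changer $\ell$ is \emph{confirmed} in both, so neither that assumption nor \cref{lem:2scp-non-bossy} (which only pins down the non-zero-utility bidders) rules out a zero-utility bidder $j$ flipping from unconfirmed in $A$ to confirmed at price $b_j$ in $B$. Your proposed repair --- the dual tie-breaking property for confirmed changers --- is the natural one, but it is an assumption the paper does not make, so your proof (and the paper's implicit by-symmetry proof) is incomplete under the stated hypotheses; note also that showing the lemma actually \emph{fails} without the dual assumption would require exhibiting a complete auction realizing the locally consistent configuration, which you do not do. Two smaller points: the claim that the ``partition into the top block and the rest is identical in $A$ and $B$'' is stronger than what \cref{lem:highest_bidders} gives (the prefix length may change) and should be dropped, and the final sentence folding the non-zero-utility case of $\ell$ into the zero-utility anchor ``by the same bookkeeping'' is not actually carried out --- though since the gap already appears in the zero-utility case, these are secondary.
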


\begin{definition}[The ``Salsa Dance'' Decomposition.]
\label{def:car_parking}
Let $A$ and $B$ be two bid vectors%, and consider a canonical ordering of the bid changes between the two
. Denote $U$ the bidders increasing their bid from $A$ to $B$, and $D$ the bidders decreasing their bids. Let $W_B$ be the confirmed bidders in $B$.  Let $U_I = U \cap W_B, U_O = U \cap L_B. $ Let $D_I = D \cap W_B, D_O = D \setminus W_B$. 

We construct $SEQ$ in the following way:

1. $SEQ = \{A\}$.

2. For each bidder $i\in U_I$ in descending order of their bids in $A$, let $X$ be the current last vector added to $SEQ$, then add $SEQ = SEQ \cup \{\{b'_i, X_{-i}\}\}$. 

3. For each bidder $i\in D$ in increasing order of their bids in $A$, let $X$ be the current last vector added to $SEQ$, then add $SEQ = SEQ \cup \{\{b'_i, X_{-i}\}\}$. 

4. For each bidder $i\in U_O$ in descending order of their bids in $A$, let $X$ be the current last vector added to $SEQ$, then add $SEQ = SEQ \cup \{\{b'_i, X_{-i}\}\}$. 
    
\end{definition}

The following lemma shows that the canonical ordering is maintained throughout the decomposition.

\begin{lemma}[Order-Maintaining Decompositions]
\label{lem:one_by_one_decomposition}
    Consider a canonical ordering collusion of a miner and a coalition $C$ so that each $i\in C$ changes their bid from $b_i$ to $b'_i$. Let $U$ be the bidders that raise their bid, and $D$ the bidders that decrease their bid. 
    Consider any decomposition where the subsequence of changes of bidders in $U$ is in descending order (higher bidders are earlier in the sequence), and the subsequence of changes of bidders in $D$ is in ascending order. 
Then, every two subsequent bid vectors, as a (not necessarily beneficial) collusion of the miner and the coalition $C$, are in canonical ordering.
\end{lemma}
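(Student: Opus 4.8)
The plan is to prove a stronger, pointwise invariant: that in \emph{every} bid vector appearing in the decomposition $SEQ$, the coalition members' current bids are sorted in one fixed order, namely the canonical order of the full collusion. Index the members of $C$ as $1,\dots,m$ so that $b_1 \ge b_2 \ge \cdots \ge b_m$; since the full collusion from $A$ to $B$ is in canonical ordering (Lemma~\ref{lem:canonical_ordering}), the same indexing also satisfies $b'_1 \ge b'_2 \ge \cdots \ge b'_m$. I would then show that at every vector $X \in SEQ$ and for all $i<j$, the current bid of $i$ is at least the current bid of $j$. This immediately yields the lemma: the first vector $A$ is sorted by construction, and if every vector is sorted in this same index order, then any two consecutive vectors induce the same order on $C$, which is exactly what it means for that single-step collusion of the miner and $C$ to be in canonical ordering.

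To establish the invariant, I first characterize which ``switched sets'' can arise. At a vector $X$ each member is either still at $b_i$ or already at $b'_i$; let $S$ be the set of members who have switched. The hypotheses --- $U$ changes in descending order of $A$-bid and $D$ changes in ascending order of $A$-bid --- say precisely that $S\cap U$ is a prefix of $U$ in descending $b$-order and $S\cap D$ is a prefix of $D$ in ascending $b$-order, while the interleaving of $U$- and $D$-moves is otherwise unconstrained. Fix $i<j$; the current bid of $i$ is $b'_i$ if $i\in S$ and $b_i$ otherwise, and similarly for $j$. The invariant can fail only in two configurations: (X) $i\in S$, $j\notin S$, with $b'_i < b_j$; or (Y) $i\notin S$, $j\in S$, with $b_i < b'_j$ (the two cases where $i,j$ are on the same side of $S$ are immediate from $b_i\ge b_j$ and $b'_i\ge b'_j$). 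I would rule out (X) and (Y) by a short case split on the $U/D$ membership of $i$ and $j$.

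The cases split into two kinds. When $i$ and $j$ lie in the \emph{same} group the within-group ordering makes the offending configuration unreachable: if $i,j\in D$, ascending order forces the lower bidder $j$ to switch no later than $i$, so $i\in S \Rightarrow j\in S$, killing (X); symmetrically if $i,j\in U$, descending order gives $j\in S \Rightarrow i\in S$, killing (Y). When $i$ and $j$ lie in \emph{different} groups the configuration may be reachable, but canonical ordering closes it: for (X) with $i\in U$ we have $b'_i > b_i \ge b_j$, and with $i\in D, j\in U$ we have $b'_i \ge b'_j > b_j$; for (Y) with $j\in D$ we have $b'_j < b_j \le b_i$, and with $i\in D, j\in U$ we have $b_i > b'_i \ge b'_j$. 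Members with $b_i=b'_i$ never move and are covered by the same canonical inequalities. I expect the main obstacle to be exactly this cross-group case $i\in D, j\in U$ (a high decreasing bidder and a low increasing one), where neither within-group argument applies and one must lean on the canonical inequality $b'_i\ge b'_j$ together with $b_i>b'_i$ and $b'_j>b_j$; the one remaining bookkeeping point is consistent tie-breaking when several coalition members share an $A$-bid, which I resolve by using the canonical index to order their moves so that no tie is broken inconsistently between two consecutive vectors.
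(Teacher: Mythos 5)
Your proof is correct and follows essentially the same route as the paper's: both reduce to a case analysis on pairs of coalition members, using the descending-$U$/ascending-$D$ move order to handle same-group pairs and the canonical inequalities $b_i\ge b_j$, $b'_i\ge b'_j$ (together with the direction of each bidder's move) to handle cross-group pairs. Your reformulation as a global sortedness invariant over all of $SEQ$, rather than the paper's per-step check, is only an organizational difference, and your explicit handling of ties is if anything slightly more careful than the paper's.
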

\begin{proof}
    Consider any step $j$ in which a bidder raises their bid from $b_i$ to $b'_i$. 
    
    For all bidders $k$ who are higher in the canonical ordering (and have $b_k > b_i$), they either raise their bid, decrease their bid, or maintain it. If they raise their bid, then they already raised it in a previous step. By the canonical ordering, it holds that $b'_k > b'_i$, and so this is maintained after raising to $b'_i$, as the agent already moved to value $b'_k$. If the agent maintains or decreases their bid, this means that $b_k \geq b'_k > b'_i$, and so the order is maintained after the change to $b'_i$. 

    For all bidders $k$ who are lower in the canonical ordering, we have $b'_i > b_i \geq b_k$, and so they remain lower after the raise.

    Now consider a bidder which decreases their bid from $b_i$ to $b'_i$. 

    For all bidders $k$ who are lower in the canonical ordering, if they decrease their bid, then they already did it before agent $i$. Since $b'_k \leq b'_i$ by the canonical ordering, then even after agent $i$ decreases to $b'_i$, the order is maintained. If agent $k$ maintains or raise their bid, then they either already changed to $b'_k$, or maintained $b_k$, in either case by the canonical ordering at $B$, $b'_i \geq b'_k \geq b_k$, so the order is maintained after the decrease. 

    For all bidders $k$ who are higher in the canonical ordering, decreasing from $b_i$ to $b'_i$ maintains this order, as it only further decreases bidder $i$'s value.
\end{proof}

%\begin{comment}
\begin{figure}
\centering
%\begin{minipage}{0.45\textwidth}
%\includegraphics[width=\linewidth]{car_parking_decomposition.png}
%\label{fig:car_parking_decomposition}
%\end{minipage} 
\begin{minipage}{0.45\textwidth}
\includegraphics[width=\linewidth]{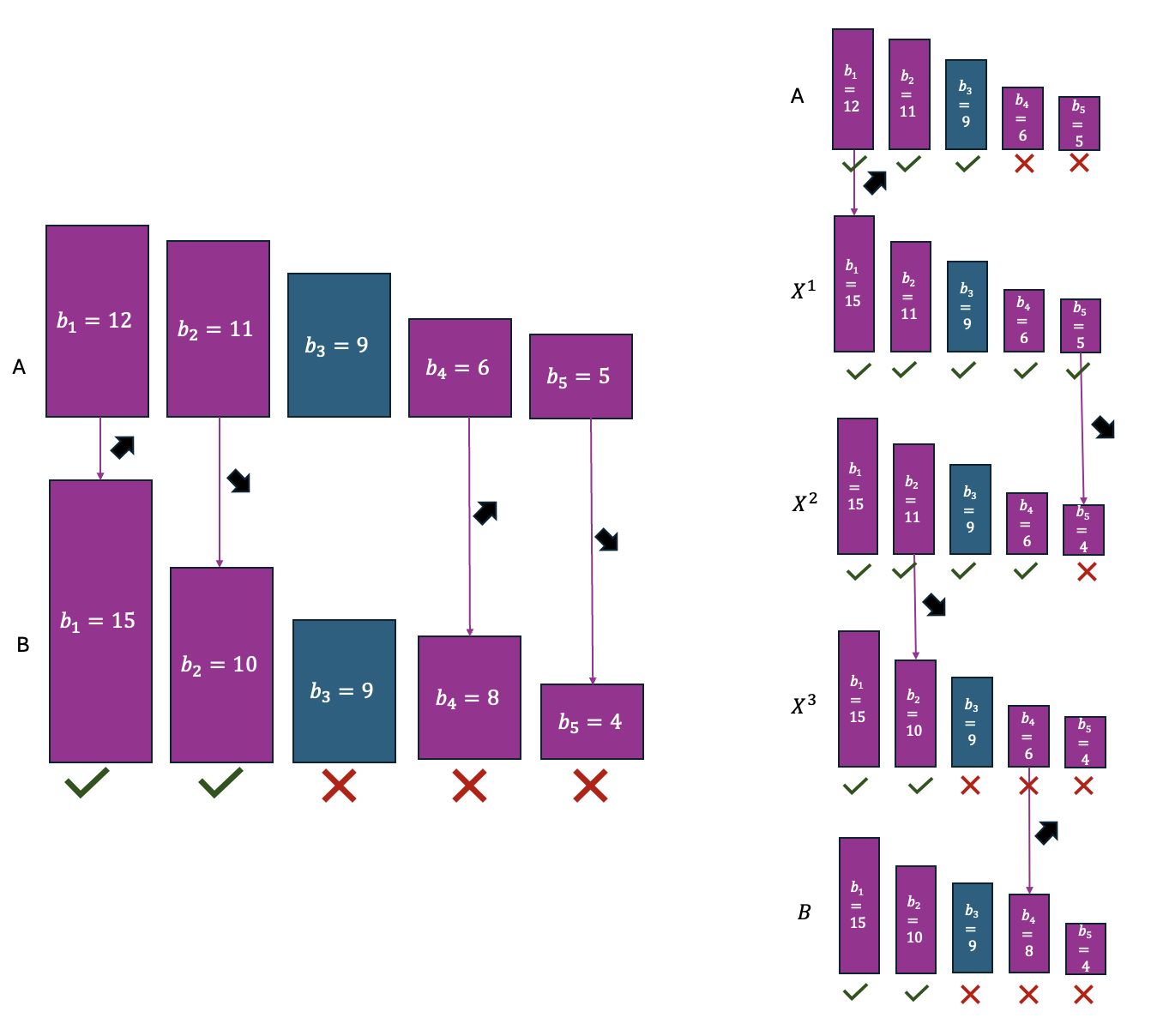}
\label{fig:car_parking_lemma}
\end{minipage} 
\caption{The ``Salsa Dance'' Lemma. We separate the bidders into $U_I, D, U_O$, and let each bidder change their bid one-by-one following Definition~\ref{def:car_parking}. Lemma~\ref{lem:car_parking} restricts the possible confirmations of each step in the decomposition, and the figure shows one such valid sequence of confirmations.}
\Description[The ``Salsa Dance'' Lemma. We separate the bidders into $U_I, D, U_O$, and let each bidder change their bid one-by-one following Definition~\ref{def:car_parking}. Lemma~\ref{lem:car_parking} restricts the possible confirmations of each step in the decomposition, and the figure shows one such valid sequence of confirmations.]{The ``Salsa Dance'' Lemma. We separate the bidders into $U_I, D, U_O$, and let each bidder change their bid one-by-one following Definition~\ref{def:car_parking}. Lemma~\ref{lem:car_parking} restricts the possible confirmations of each step in the decomposition, and the figure shows one such valid sequence of confirmations.}
\end{figure}
%\end{comment}

Before we prove the main structural lemma for the ``Salsa Dance'' decomposition, we demonstrate that the specific choice of decomposition matters for its success. 

\begin{example}
Consider an auction with $k=2$ items with the allocation rule that the two highest bidders $v_1, v_2$ are both confirmed if and only if $v_1+v_2 \geq2$. Regarding payments, $v_2$ pays their bid, and $v_1$ pays the bids’ average $\frac{v_1 + v_2}{2}$. All payments are burned. 
Consider bidders with true values $b_1 = 1, b_2 = 1$ that bid $b'_1 = \frac{5}{4}, b'_2 = \frac{3}{4}$. This is a $2$-SC that brings the joint utility of the two bidders and the miner from $0$ $\frac{1}{4}$. 
 Under the Salsa Decomposition, the two steps are $(1,1)\rightarrow (5/4,1)\rightarrow (5/4,3/4)$, and the second step is a $2$-SC (the joint utility changes from $\frac{1}{8}$ to $\frac{1}{4}$ (recall that for each decomposed step, we consider the beginning of the step to be the true values and the end of the step to be the collusion bids). 
 The other possible decomposition is $(1,1) \rightarrow (1,3/4) \rightarrow (5/4,3/4)$, where no step is a $2$-SC. 
\end{example}

\begin{lemma}[The ``Salsa Dance'' Lemma]
\label{lem:car_parking}
Every collusion from setting $A$ to $B$ has a one-by-one decomposition given by Definition~\ref{def:car_parking}, so that the following holds:
\begin{itemize}

    \item The initial setting is $A$ and the final setting is $B$, and in the decomposition each bidder only changes their bid once.

    \item At steps where bidders in $U_O$ change their bids, the bidder that changed their bid is unconfirmed after the step.  

    \item At steps where bidders in $D_I$ change their bids, the bidder that changed their bid is confirmed in the subsequent step.

\end{itemize}
\end{lemma}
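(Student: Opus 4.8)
The plan is to verify the three bullets in turn, using the threshold structure of confirmation (\cref{lem:highest_bidders}), canonical ordering, and the two monotonicity lemmas as the engine. By \cref{lem:canonical_ordering} I may assume throughout that the collusion is in canonical ordering, and I apply \cref{def:car_parking} to this canonical version.

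The first bullet is immediate from \cref{def:car_parking}: the three phases range over $U_I$, $D=D_I\cup D_O$, and $U_O$, which partition the movers, each mover is set to its target $b'_i$ exactly once, and concatenating the phases starts at $A$ and ends at $B$. Since the increasing bidders ($U_I$, then $U_O$) are processed in descending order of their $A$-bids and the decreasing bidders ($D$) in ascending order, the hypothesis of \cref{lem:one_by_one_decomposition} is met, so every consecutive pair of settings is in canonical ordering; in particular the coalition keeps its relative $A$-order at every intermediate setting, and by \cref{lem:highest_bidders} confirmation there is governed by a single value threshold.

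For the $U_O$ bullet I would run \cref{lem:increase_monotonicity} forward through phase $4$. Each such sub-step raises a $U_O$ bidder to a target $b'_i$ that is below the confirmation threshold of $B$ (as $U_O\subseteq L_B$), and canonical ordering lets me check that this target stays below every currently confirmed bidder; hence the lemma applies and, because a bidder whose new bid lies below all confirmed bidders can enter the confirmed set only by taking a newly opened bottom slot rather than by displacing anyone, the confirmed set weakly grows at every phase-$4$ sub-step. Consequently the confirmed set only grows from the step at which a $U_O$ bidder $i$ moves until the terminal setting $B$; were $i$ confirmed right after its step it would stay confirmed in $B$, contradicting $i\in L_B$. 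Thus $i$ is unconfirmed after its step.

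The $D_I$ bullet is the main obstacle, and it is precisely where the order ``$U_I$ first'' is used. A naive shrink-monotonicity argument through phase $3$ fails, since a bidder that is unconfirmed in the middle of phase $3$ could reappear as confirmed during the growth of phase $4$, so being unconfirmed mid-phase does not contradict $i\in W_B$. Instead I would compare the state $X'$ right after a $D_I$ bidder $i$ moves directly against $B$: using canonical ordering one shows that the set of bidders ranked strictly above $i$ is the same in $X'$ and in $B$ (the $U_O$ and $D_O$ bidders lie below $i$ in both, the coalition bidders above $i$ remain above it, and non-coalition bidders are unchanged), so $i$ has the same rank in both. It then suffices to show that the number of confirmed bidders at $X'$ is at least this rank, which is the crux. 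This is exactly the point at which raising all of $U_I$ before any decrease matters: having first moved the eventually-confirmed increasers into their slots, the decreases of phase $3$ up to and including $i$ cannot lower the threshold enough to evict $i$ at bid $b'_i$, so $i$ retains a confirmed slot. I expect the delicate parts to be (a) turning this intuition into a clean invariant, carried through phases $2$ and $3$, that every bidder ranked at or above $i$ is confirmed at $X'$, thereby lower-bounding the count; (b) discharging the hypotheses of \cref{lem:increase_monotonicity} and \cref{lem:decrease_monotonicity} at each sub-step from canonical ordering; and (c) handling equalities at the threshold via consistent tie-breaking, where \cref{lem:2scp-non-bossy} is the natural tool.
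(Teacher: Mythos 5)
Your handling of the first two bullets matches the paper's proof: the construction gives the first bullet for free, and the $U_O$ bullet is obtained by propagating \cref{lem:increase_monotonicity} forward through the last phase until a confirmed $U_O$ bidder would have to remain confirmed in $B$, a contradiction. (Your extra remark that case (i) of that lemma also forces the confirmed set to weakly grow, via the prefix structure of \cref{lem:highest_bidders}, is a fine micro-variant.)

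The genuine gap is in the $D_I$ bullet, which you explicitly leave unproved. You assert that ``a naive shrink-monotonicity argument through phase 3 fails'' because a $D_I$ bidder unconfirmed mid-phase~3 might be re-confirmed during phase~4 --- but that is exactly the argument the paper runs, and the obstacle you raise is closed by a byproduct of the $U_O$ analysis that your proposal never extracts: since every $U_O$ bidder is unconfirmed both before and after its own step, \cref{lem:2scp-non-bossy} together with the consistent tie-breaking assumption implies that confirmation statuses do not change at any phase-4 step. Hence if some $D_I$ bidder is unconfirmed at the end of phase~3 (which follows by propagating \cref{lem:decrease_monotonicity} through phase~3 exactly as you propagated \cref{lem:increase_monotonicity} through phase~4), it remains unconfirmed in $B$, contradicting $D_I\subseteq W_B$. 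Your substitute --- a rank comparison between the post-move state $X'$ and $B$, resting on an invariant that ``every bidder ranked at or above $i$ is confirmed at $X'$'' --- is precisely the crux, and you concede you have not established it; the available lemmas only give disjunctions about how the confirmed set evolves, so nothing in your sketch pins down the confirmed count at $X'$. As written, the third bullet is not proved, and the missing ingredient is the frozen-confirmation property of phase~4.
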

\begin{proof}

    The first proposition holds by our explicit construction. %due to the construction of the decomposition: The sets $U_I, U_O$ and $D$ constitute all the bidders that change their bids from $A$ to $B$, and the algorithm exhausts all these sets in the construction of $SEQ$, has one bidder change their bid at each step, and each bidder only changes once and is then removed from consideration.

    For the second proposition, examine the implications of the monotonicity Lemma~\ref{lem:increase_monotonicity}. If a bidder in $U_O$ is confirmed at some point during the stage where bidders in $U_O$ raise their bids, then the property that some bidder in $U_O$ is confirmed is maintained until the end of the process. This is true since at each step either Lemma~\ref{lem:increase_monotonicity}'s condition (i) or (ii) holds. If (i) holds, then the raising bidder is in $U_O$ and is confirmed. If (ii) holds, then the bidder in $U_O$ that was confirmed at the end of the step before remains confirmed. However, by the definition of $U_O$, no bidder in $U_O$ is confirmed at $B$, which is the last element of $SEQ$, so this is a contradiction.  

    This further implies that all bidders in $U_O$ are unconfirmed throughout the stage where bidders in $U_O$ raise their bids. By the non-bossy Lemma~\ref{lem:2scp-non-bossy} together with the consistent tie-breaking assumption, this means that the number of confirmed bidders remains the same throughout that stage, and so at the end of the decreasing stage by bidders in $D$, it is the same amount as in $B$. 

    For the third proposition, assume towards contradiction that a bidder in $D_I$ decreases their bid and is then unconfirmed. Similarly (in an opposing way) to the $U_O$ case, this implies that by the end of the decreasing stage there is a bidder in $D_I$ which is unconfirmed: At each step, either by condition (i) the decreasing bidder in $D_I$ is unconfirmed, or by (ii) a bidder in $D_I$ that is unconfirmed from the last step, remains so. However, we previously concluded that during the last stage where bidders in $U_O$ raise their bids, the number of confirmed bidders remains the same, in contradiction to all bidders in $D_I$ being non-confirmed in $B$. 
\end{proof}

\begin{lemma}[Single Mover Reduction]
\label{lem:single_mover}
    Given a one-by-one decomposition with the properties of Lemma~\ref{lem:car_parking}, there are two subsequent steps in the sequence so that there is a beneficial collusion from the preceding step to the subsequent step for the coalition $C$, where a single bidder changes their bid.  
\end{lemma}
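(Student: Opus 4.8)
The plan is to run a telescoping argument on the joint utility, with the \emph{Salsa Dance} properties of \cref{def:car_parking} ensuring the value accounting stays consistent. Writing $v_j = b_j$ for the true ($A$-)value of a colluder $j\in C$, and letting $a_j(X)\in\{0,1\}$, $p_j(X)$, $\beta_j(X)$ be the confirmation, payment and burn of $j$ in a setting $X$, I would first (working in the passive-miner model, justified by \cref{lem:active_passive_miner}) record the potential
\[
\Phi(X) = u_{miner}(X) + \sum_{i\in C} u_i(X;A) = \sum_{j\notin C}\bigl(p_j(X)-\beta_j(X)\bigr) + \sum_{j\in C}\bigl(v_j\,a_j(X)-\beta_j(X)\bigr),
\]
where the colluders' payments cancel against the miner's revenue. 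By \cref{def:Scp} the original collusion being beneficial is exactly $\Phi(B)>\Phi(A)$, so along the sequence $SEQ=(S_0=A,\dots,S_m=B)$ we have the telescoping identity $\sum_{k}\bigl[\Phi(S_{k+1})-\Phi(S_k)\bigr]=\Phi(B)-\Phi(A)>0$.

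The subtlety is that a single step $S_k\to S_{k+1}$ is only a genuine collusion if its benefit is measured against a \emph{valid honest baseline}, so I would \emph{re-base} at $S_k$: for the step in which the single bidder $i$ moves from $b_i$ to $b'_i$, treat $S_k$ as honest, i.e.\ take each colluder's true value to be its bid $s^{(k)}_j$ in $S_k$. The local benefit is then $\delta_k=\Psi_k(S_{k+1})-\Psi_k(S_k)$, where $\Psi_k$ is $\Phi$ with $v_j$ replaced by $s^{(k)}_j$; a step with $\delta_k>0$ is precisely a beneficial collusion of $C$ and the miner in which a single bidder moves and the others are passive. Since $\Psi_k(X)-\Phi(X)=\sum_{j\in C}(s^{(k)}_j-v_j)a_j(X)$, and $s^{(k)}_j=v_j$ both for the mover and for any colluder that has not yet moved, a direct computation gives
\[
\delta_k - \bigl[\Phi(S_{k+1})-\Phi(S_k)\bigr] = \sum_{j\ \mathrm{moved\ before}\ k}(b'_j - b_j)\bigl[a_j(S_{k+1})-a_j(S_k)\bigr].
\]
If each such discrepancy is nonnegative, then $\sum_k\delta_k\ge\Phi(B)-\Phi(A)>0$, so some $\delta_k>0$ and the matching step is the desired single-mover beneficial collusion.

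The heart of the argument — and the step I expect to be the main obstacle — is proving this discrepancy nonnegative term-by-term, which is exactly what the three-phase ($U_I$, then $D$, then $U_O$) ordering buys us. Each summand is the sign of $b'_j-b_j$ times a confirmation change, so for an already-moved increaser ($j\in U$) I need $a_j$ not to drop, and for an already-moved decreaser ($j\in D$) I need $a_j$ not to rise. I would argue by phase: in the $U_I$ phase \cref{lem:increase_monotonicity} makes the confirmed set only grow, so no already-moved increaser loses confirmation; in the $U_O$ phase \cref{lem:car_parking} keeps the confirmed set constant, killing every term; and in the $D$ phase \cref{lem:decrease_monotonicity} makes the confirmed set only shrink, so no decreaser gains confirmation. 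The delicate point is ruling out a \emph{transient} loss of confirmation by a $U_I$ bidder during the $D$ phase: here I would propagate confirmation backwards from $B$, using that the confirmed sets are nested (monotone) through the $D$ phase and constant through the $U_O$ phase (\cref{lem:car_parking}), so any colluder confirmed in $B$ is already confirmed at the end of the $D$ phase and hence throughout it. With termwise nonnegativity established, the telescoping conclusion of the previous paragraph finishes the proof.
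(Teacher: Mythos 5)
Your skeleton is the same as the paper's: telescope the coalition-plus-miner value over the one-by-one decomposition, re-base each step at its own honest setting, and use the Salsa Dance properties to control the mismatch between the re-based per-step benefit $\delta_k$ and the true ($A$-valued) increment $\Phi(S_{k+1})-\Phi(S_k)$. Your potential $\Phi$, the discrepancy identity, and the conclusion ``total discrepancy $\ge 0$ implies some $\delta_k>0$'' all match the paper, which runs the identical computation as a proof by contradiction (summing the negated SCP inequalities step by step). The gap is in how you bound the discrepancy. You insist on nonnegativity \emph{term-by-term at every step}, which requires that no already-moved increaser ever loses confirmation and no already-moved decreaser ever gains it at \emph{any} later step. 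The lemmas you invoke do not deliver this: \cref{lem:increase_monotonicity} and \cref{lem:decrease_monotonicity} are disjunctions (either the \emph{mover's own} status changes in the stated direction, \emph{or} the confirmed/unconfirmed set is monotone --- you do not get to choose which), and their hypotheses (the mover is unconfirmed beforehand and stays below the lowest confirmed bid, resp.\ confirmed and stays above the highest unconfirmed bid) need not hold for $U_I$ or $D_O$ movers at all. So ``in the $U_I$ phase the confirmed set only grows'' and ``in the $D$ phase it only shrinks'' are unsupported, and the transient-confirmation-loss scenario you flag is not actually ruled out by your argument.

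The repair --- and this is in effect what the paper does --- is to swap the order of summation rather than fight for per-step monotonicity. Summing your discrepancy over $k$ and collecting per bidder gives
\begin{equation*}
\sum_{k}\Bigl(\delta_k-\bigl[\Phi(S_{k+1})-\Phi(S_k)\bigr]\Bigr)=\sum_{j\in C}(b'_j-b_j)\bigl[a_j(B)-a_j(X^{j}_{after})\bigr],
\end{equation*}
where $X^{j}_{after}$ is the setting immediately after $j$ moves. Each summand is nonnegative using only what \cref{lem:car_parking} actually states: for $j\in U_I$ it equals $(b'_j-b_j)(1-a_j(X^{j}_{after}))\ge 0$ by sign alone; for $j\in D_O$ it equals $-(b'_j-b_j)\,a_j(X^{j}_{after})\ge 0$ by sign alone; for $j\in U_O$ it equals $-(b'_j-b_j)\,a_j(X^{j}_{after})$, which needs $a_j(X^{j}_{after})=0$ --- exactly the second bullet of \cref{lem:car_parking}; and for $j\in D_I$ it equals $(b'_j-b_j)(1-a_j(X^{j}_{after}))$, which needs $a_j(X^{j}_{after})=1$ --- exactly the third bullet. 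With that substitution your telescoping argument closes, and it is then the same proof as the paper's.
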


\begin{proof}

    We assume towards contradiction that there are \textit{no} two subsequent steps that allow a beneficial collusion between them. We can write the condition for the miner and the coalition $C$ that prohibits the collusion for every two subsequent steps. We also write the condition of the original collusion (that gains a benefit of $\delta$). We show that summing all the inequalities, and the collusion equality with the term $\delta$, leads to a contradiction. 

    We introduce the following previously introduced notations: Let $A$ and $B$ be the settings of the collusion $\Omega_v$, and let $X^1, \ldots, X^k$ be all the intermediate steps of the decomposition $SEQ$. 
    %Let $\eta(X)$ for some $X \in SEQ$ be the number of confirmed bids in $X$ from the coalition $C$. Let $u_C(X ; Y) = \sum_{i\in C} u_i(X ; Y) = b_1^Y - p_1^{X} + \ldots + b^Y_{\eta(X)} - p_{\eta(X)}^X$. 

    When we argue for some bidder $i$, let $X^{i}_{before}$ be the setting before it changes its bid, and $X^i_{after}$ after. Since bidder $i$ is the only bidder that changes their bid in that step, it holds that $u_C(X^i_{after} ; X^i_{after}) - u_C(X^i_{after} ; X^i_{before}) = u_i(X^i_{after} ; X^i_{after}) - u_i(X^i_{after} ; X^i_{before})$. 
    We note the following: 

    \begin{itemize}
        \item For every bidder $i\in D_I$, by Lemma~\ref{lem:car_parking}, the bidder is confirmed in $X^i_{after}$. Thus, $$u_C(X^i_{after} ; X^i_{after}) - u_C(X^i_{after} ; X^i_{before}) = b'_i - b_i.$$ 

        \item For every bidder $i\in U_O$, by Lemma~\ref{lem:car_parking}, the bidder is unconfirmed in $X^i_{after}$, and thus
$$ u_C(X^i_{after} ; X^i_{after}) - u_C(X^i_{after} ; X^i_{before})  = 0.
$$

        \item For every bidder $i\in U_I$, it is either confirmed or not confirmed in $X^2$, and we have 
        \[
        \begin{split}
            & u_C(X^i_{after} ; X^i_{after}) - u_C(X^i_{after} ; X^i_{before})  \\
            & \leq \max \{ b'_i - p_i^{X^2} - (b_i - p_i^{X^2}), 0\} = b'_i - b_i.
        \end{split}
        \]

        \item For every bidder $i\in D_O$, it is either confirmed or not confirmed in $X^2$, and we have 
        \[
        \begin{split}
            & u_C(X^i_{after} ; X^i_{after}) - u_C(X^i_{after} ; X^i_{before}) \\
            & \leq \max \{ b'_i - p_i^{X^2} - (b_i - p_i^{X^2}), 0\} = 0.
        \end{split}
        \]
        
    \end{itemize}

    %\yotam{Now, write all the inequalities, the $u_{miner}$ stuff cancel out, we can use the above to replace all the other terms, and then it's just the first inequality, the last equality, and these substitute terms, and it all works out. Take what's needed from below and delete the rest.}
    
    Now, we can write the SCP inequalities:
    \begin{equation}
\begin{split}
     & u_{miner}(A) + u_C(A ; A) \geq u_{miner}(X^1) + u_C(X^1 ; A), \\
    & u_{miner}(X^1) + u_C(X^1 ; X^1) \geq u_{miner}(X^2) + u_C(X^2 ; X^1), \\
    & \ldots \\
    & u_{miner}(X^k) + u_C(X^k ; X^k) \geq u_{miner}(B) + u_C(B ; X^k), \\
    & u_{miner}(B) + u_C(B ; A) = u_{miner}(A) + u_C(A ; A) + \delta. \\
     \end{split}
\end{equation}

First, notice that all $u_{miner}$ terms cancel out between the LHS and RHS when summing all the inequalities, and $u_C(A ; A)$ on both sides cancel as well. Let $i^*$ be the last agent in $U \cup D$ that changes their bid in the sequence (i.e., $X^{i^*}_{after} = X^k$). 
We thus get, when summing all the inequalities:

\begin{equation}
\begin{split}
    & u_C(B ; A) - u_C(B ; X_{after}^{i^*}) \\
    & + \sum_{i\in U \cup D \setminus \{i^*\}} u_C(X_{after}^i ; X_{after}^i) - u_C(X_{after}^i ; X_{before}^i) \geq \delta,
    \end{split}
\end{equation}

%We note an important thing about bidders in $U_O$ that become confirmed during the last stage of decomposition (where bidders in $U_O$ raise their bids), is that Lemma~\ref{lem:car_parking} guarantees they do not then become unconfirmed after being confirmed. This means that every such bidder is also confirmed in $B$. We thus partition $U_O$ into $U_O^c, U_O^o$, where the former are bidders in $U_O$ that are confirmed in $B$, and the latter are unconfirmed in $B$. Let $\sigma = U_I \cup D_I \cup U_O^c$ be all the bidders confirmed in $B$. 

Then, using our conclusions for each bidder type %in $U_I, U_O^c, U_O^o, D_I, D_O$
, we get:

\begin{equation}
\label{eq:using_conclusions}
\begin{split}
    & u_C(B ; A) - u_C(B ; X_{after}^{i^*}) + \sum_{i\in (U_I \cup D_I) \setminus \{i^*\}} (b'_i - b_i) + \sum_{i\in D_O \cup U_O } 0 \\
    & \geq u_C(B ; A) - u_C(B ; X_{after}^{i^*})  \\
    & + \sum_{i\in U \cup D \setminus \{i^*\}} u_C(X_{after}^i ; X_{after}^i) - u_C(X_{after}^i ; X_{before}^i) \geq \delta.
    \end{split}
\end{equation}

Let $\sigma = U_I \cup D_I$. 
Since exactly the bidders in $\sigma$ are confirmed in $B$, we have:

%Since $U_I \cup D_I \cup U_O^c$ are exactly the bidders confirmed in $B$, we have:

\begin{equation}
\begin{split}
&     u_C(B ; A) = \sum_{i\in \sigma} (b_i - p_i^B), \\
&     u_C(B ;  X_{after}^{i^*}) = \sum_{i\in \sigma \setminus \{i^*\}} (b'_i - p_i^B) + 1[i^* \in \sigma] \cdot (b_{i^*} - p_{i^*}^B)
\end{split}
\end{equation}

%\yotam{But that's not totally true because there's one bidder that might be different in $X^k$..? Anyway, that's a small issue, I'll handle it later. AHH ok so the thing is that also the last $u_C(X^k ; X^k) - u_C(X^k ; X^{k-1})$ will have this missing change, if indeed the last move was from $D_I \cup U_I$. So we will miss it on both expressions, it cancels out, no need for special treatment (and if we needed, we just do an epsilon argument or whatever. but it's not needed)}

Thus, we can continue Eq.~\ref{eq:using_conclusions} with:
\begin{equation}
\label{eq:final_form}
\begin{split}
    & 0 \stackrel{(*)}{=} \sum_{i\in \sigma} (b_i - p_i^B) - \left( \sum_{i\in \sigma \setminus \{i^*\}} (b'_i - p_i^B) + 1[i^* \in \sigma] \cdot (b_{i^*} - p_{i^*}^B) \right) \\
    & + \sum_{i\in \sigma \setminus \{i^*\}} (b'_i - b_i) \\
    & = u_C(B ; A) - u_C(B ; X_{after}^{i^*}) + \sum_{i\in \sigma \setminus \{i^*\}} (b'_i - b_i) + \sum_{i\in \sigma } 0 \geq \delta.
    \end{split}
\end{equation}

The proof of $(*)$ is a matter of case-analysis and we show it in the appendix. 
We conclude that Eq.~\ref{eq:final_form} shows $0 \geq \delta$, but since $\delta > 0$, this is a contradiction, and there must be some two subsequent steps constituting a beneficial collusion. 
\end{proof}

\begin{restatable}[Single-Jump Property of Single Mover Collusion Coalition Value]{lemma}{SingleMoverIncrease}
\label{lem:single_mover_increase}

    In a $2$-SCP mechanism with a collusion where a single bidder $i$ increases their bid from $b_i$ to $b'_i$ and a coalition $C$ gains $\Delta$ in value, there is a single value $b_i^*$, so that for any small enough $\epsilon > 0$, there is a collusion for coalition $C$ where only bidder $i$ increases their value from $b_i^* - \frac{\epsilon}{2}$ to $b_i^* + \frac{\epsilon}{2}$, and the coalition gains $\Delta$ in value. 
\end{restatable}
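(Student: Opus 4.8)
The plan is to track the coalition's joint utility as a function of the single moving bid and to show it is a two-step function whose only jump carries the entire gain $\Delta$. Write $X_x \define (x, \mathbf{b}_{-i})$ for the setting in which bidder $i$ bids $x$ while every other bidder bids as in $A$, and set $f(x) \define u_{miner}(X_x) + u_C(X_x ; A)$, the joint value of the coalition scored against the fixed true valuations of $A$ (so $v_i=b_i$ and $v_j=b_j$ throughout). By hypothesis $f(b'_i)-f(b_i)=\Delta$. First I would record that $i$'s confirmation status is monotone in $x$: pairing the $1$-SCP inequality for $\{miner,i\}$ from $X_x$ to $X_{x'}$ with its reverse shows that if $i$ is confirmed at some $x$ it is confirmed at every $x'>x$, in the spirit of Lemma~\ref{lem:increase_monotonicity}. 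Hence there is a single threshold $b_i^*$ with $i$ unconfirmed for $x<b_i^*$ and confirmed for $x>b_i^*$; and since $\Delta>0$ the threshold is interior, i.e. $i$ is unconfirmed at $b_i$ and confirmed at $b'_i$ (if $i$ were already confirmed at $b_i$ the whole interval is one regime and, by the next step, $\Delta=0$).

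The core of the argument is that $f$ is constant on each side of $b_i^*$. On the unconfirmed regime $i$ pays $0$ and contributes nothing to $f$, so pairing the $1$-SCP condition for $\{miner,i\}$ from $X_x$ to $X_{x'}$ (both below $b_i^*$) with its reverse gives $u_{miner}(X_x)=u_{miner}(X_{x'})$. Now for each beneficiary $j\in C\setminus\{i\}$ I would apply the $2$-SCP condition for $\{miner,i,j\}$, again in both directions: the miner-and-$i$ contribution is identical on both sides by the equality just derived and cancels, leaving exactly $u_j(X_x;A)\ge u_j(X_{x'};A)$ together with its reverse, hence $u_j(X_x;A)=u_j(X_{x'};A)$ (note that $j$ never moves, so $u_j(X;X)=u_j(X;A)$). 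Summing over $j$ shows $f$ is constant on $[0,b_i^*)$, equal to $f(b_i)$ there.

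I would then repeat the same two-step cancellation on the confirmed regime. There the $\{miner,i\}$ pair yields that $u_{miner}(X_x)-p_i^{X_x}$ is constant (within each SCP inequality $i$'s anchor valuation multiplies the same confirmation indicator on both sides and cancels), which is precisely the miner-and-$i$ contribution to $f$ since $u_i(X_x;A)=b_i-p_i^{X_x}$ when $i$ is confirmed. Feeding this into the $\{miner,i,j\}$ conditions as before isolates $u_j(X_x;A)=u_j(X_{x'};A)$ for every beneficiary, so $f$ is constant on $(b_i^*,b'_i]$ and equals $f(b'_i)=f(b_i)+\Delta$ there. Taking $b_i^*$ as the threshold, for every small enough $\epsilon>0$ the bid $b_i^*-\tfrac{\epsilon}{2}$ lies in the unconfirmed regime (value $f(b_i)$) and $b_i^*+\tfrac{\epsilon}{2}$ in the confirmed regime (value $f(b_i)+\Delta$), so the single-mover collusion across this window earns the coalition exactly $\Delta$.

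The step I expect to be most delicate is the true-value bookkeeping: the coalition's gain is always scored against the original valuations of $A$, whereas each SCP inequality is anchored at whatever setting is treated as honest and uses \emph{that} setting's valuation for $i$. The trick that makes everything go through is to always use an SCP inequality together with its reverse, so that the miner-and-mover contribution — the only part depending on $i$'s shifting anchor valuation — telescopes away and leaves a clean two-sided bound on each beneficiary's $A$-scored utility. I would also note that consistent tie-breaking is what pins down behaviour exactly at $b_i^*$, which is harmless since we only ever evaluate at $b_i^*\pm\tfrac{\epsilon}{2}$.
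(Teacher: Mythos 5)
Your proposal is correct and follows essentially the same route as the paper's proof: both use the forward-and-reverse $1$-SCP pair for $\{miner,i\}$ to rule out a confirmed-to-unconfirmed flip and to show that $u_{miner}+u_i$ (scored against $A$) is invariant whenever $i$'s confirmation status is unchanged, then invoke $2$-SCP to conclude the beneficiaries' utilities cannot move either, so the coalition value is constant on each side of the critical bid $b_i^*$ and the jump carries all of $\Delta$. The only cosmetic difference is that you derive per-beneficiary constancy $u_j(X_x;A)=u_j(X_{x'};A)$ from two-sided $\{miner,i,j\}$ inequalities, whereas the paper argues that a nonzero aggregate change over $C\setminus\{i\}$ would expose a single gaining beneficiary and hence a beneficial $2$-SC.
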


A similar statement holds when the single mover decreases their value:

\begin{lemma}
\label{lem:single_mover_decrease}
    In a $2$-SCP mechanism with a collusion where a single bidder $i$ decreases their bid from $b_i$ to $b'_i$ and a coalition $C$ gains $\Delta$ in value, there is a single value $b_i^*$, so that for any small enough $\epsilon > 0$, there is a collusion for coalition $C$ where only bidder $i$ decreases their value from $b_i^* + \frac{\epsilon}{2}$ to $b_i^* - \frac{\epsilon}{2}$, and the coalition gains $\Delta$ in value. 
\end{lemma}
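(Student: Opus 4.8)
The plan is to prove this as the mirror image of Lemma~\ref{lem:single_mover_increase}: everywhere the increase proof raises a bid I lower one, and everywhere it speaks of a bidder becoming \emph{confirmed} I speak of a bidder becoming \emph{unconfirmed}, invoking the Decrease Monotonicity Lemma~\ref{lem:decrease_monotonicity} in place of Lemma~\ref{lem:increase_monotonicity}.

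First I would freeze every true value and every bid except bidder $i$'s at its value in $A$, and study the coalition value $V(x)$, defined as $u_{miner} + \sum_{j \in C} u_j(\,\cdot\, ; A)$ evaluated in the setting where $i$ bids $x$, as a function of $x$ over the interval $[b'_i, b_i]$. By definition $V(b_i) = u_{miner}(A) + u_C(A ; A)$ and $V(b'_i) = u_{miner}(B) + u_C(B ; A)$, so $V(b'_i) - V(b_i) = \Delta$. This makes the problem one-dimensional, matching the ``how many colluders are confirmed'' viewpoint stressed in the overview: I only need to locate a single value $b_i^*$ at which an infinitesimal decrease of $x$ raises $V$ by exactly $\Delta$, with $V$ flat immediately on each side of $b_i^*$.

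Next I would establish that $V$ is a step function. Applying Lemma~\ref{lem:decrease_monotonicity} locally as $x$ decreases shows that $i$'s own confirmation status is monotone (confirmed above some value, unconfirmed below) and that the confirmed set evolves monotonically, so the confirmation pattern changes at only finitely many thresholds $b_i \ge t_1 > \dots > t_m \ge b'_i$. On each maximal interval between consecutive thresholds the allocation is fixed, and there I would pin $V$ to a constant by combining the $1$-SCP inequality for the miner and $i$ (in both directions, as in the baseline inequality used in Lemma~\ref{lem:2scp-non-bossy}) with the $2$-SCP inequality for the miner together with $i$ and one non-mover $j \in C$; subtracting the former from the latter isolates each non-mover's contribution and forces it to be constant on the interval, while the Non-Bossiness Lemma~\ref{lem:2scp-non-bossy} and consistent tie-breaking guarantee that no non-mover silently changes its zero-utility status inside the interval.

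Finally I would concentrate all of $\Delta$ at one threshold. Here I would telescope the two-sided $2$-SCP inequalities across the thresholds exactly as in Lemma~\ref{lem:single_mover}: summing them makes the miner terms cancel and leaves the total change equal to $\Delta$, while Lemma~\ref{lem:2scp-non-bossy} together with consistent tie-breaking forces the per-threshold change to vanish at every threshold where the mover stays a zero-utility bidder and only other zero-utility bidders switch status. The unique surviving threshold is the value $b_i^*$ at which $i$ itself transitions between being a zero-utility and a non-zero-utility bidder, and there $V$ must jump by exactly $\Delta$; choosing $\epsilon$ small enough that $b_i^* \pm \tfrac{\epsilon}{2}$ lie strictly inside the two neighboring intervals then yields the claimed collusion. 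I expect this concentration step to be the main obstacle, since the Non-Bossiness Lemma controls only the \emph{number} of non-zero-utility bidders whereas $V$ depends on the non-movers' actual payments and burns; closing that gap — and handling the true-value bookkeeping when the two-sided inequalities are written about intermediate settings rather than about $A$ — is where anonymity, burn-balancedness, and the $2$-SCP conditions must be combined most carefully.
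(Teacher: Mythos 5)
Your plan is right in spirit and matches the paper's intent exactly: the paper gives no separate proof of this lemma, presenting it only as the mirror image of Lemma~\ref{lem:single_mover_increase}, which is precisely what you set out to construct. Where you diverge is in the machinery. The increase proof never builds a step function over thresholds of the whole confirmed set and never invokes the monotonicity lemmas; it needs only two facts, both extracted from the two-sided $1$-SCP inequalities for the pair consisting of the miner and $i$: (a) bidder $i$'s confirmation cannot flip the ``wrong way'' (in your setting, unconfirmed at the higher bid but confirmed at the lower bid is impossible, since adding the two inequalities yields $b'_i \geq b_i$), so $i$'s confirmation is monotone in its own bid and a single critical value $b_i^*$ exists; and (b) whenever $a_i$ is the same at both endpoints of a sub-move, $u_{miner} + u_i$ is invariant, so a strict change in the coalition value would expose a single non-mover $j$ with $u_j(B;A) - u_j(A;A) > 0$ and hence a beneficial collusion of the miner, $i$ and $j$, contradicting $2$-SCP. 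Fact (b) also dissolves the ``main obstacle'' you flag at the end: you never need to track the non-movers' individual payments and burns, because every non-mover's net payment already sits inside $u_{miner}$, which is part of the coalition value $V$; the pairing argument you yourself describe in your middle step pins the sum $\sum_{j \neq i} u_j(\cdot\,;A)$ from both directions, and the true-value bookkeeping at intermediate settings is harmless because, when $i$'s confirmation status is unchanged, the change in $u_i$ is a pure payment difference independent of which true value is plugged in. So your middle step already contains the entire proof, while the threshold enumeration, the telescoping, and the appeal to Lemma~\ref{lem:decrease_monotonicity} (whose precondition that $i$'s bid stay above the highest unconfirmed bidder need not hold along your path) can all be deleted.
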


Finally, we can reduce any SCP to one with a single mover, and show by the single-jump property that we can gain a fixed value of $\Delta$ for the coalition while losing at most $\epsilon$ for the moving bidder, and can further isolate a single bidder, possibly different than the mover, that gains and allows a $2$-SCP. 

\begin{restatable}[One Mover, One Beneficiary]{theorem}{SCPTwoSCP}
\label{thm:scp_2scp}
    The class of $2$-SCP mechanisms equals the class of SCP mechanisms.
\end{restatable}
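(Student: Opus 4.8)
The plan is to prove equality by two inclusions. The inclusion $\mathrm{SCP} \subseteq 2\text{-SCP}$ is immediate, since a mechanism robust to a coalition of the miner with arbitrarily many bidders is in particular robust to a coalition of size at most two. The whole content is the reverse inclusion, which I would prove by contradiction: assume the mechanism is $2$-SCP and that, contrary to being SCP, it admits \emph{some} beneficial collusion of the miner with a coalition $C$ of arbitrary size; I then construct a beneficial collusion of the miner with at most two bidders, contradicting $2$-SCP. This is exactly the regime in which all the preceding lemmas apply, as they assume $2$-SCP throughout.

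First I would put the collusion into a normal form. By Lemma~\ref{lem:active_passive_miner}, since the mechanism is $2$-SCP and hence $1$-SCP, I may assume the beneficial collusion is in the passive-miner model (the ``not $1$-SCP'' branch of that lemma is ruled out by our standing assumption). By Lemma~\ref{lem:canonical_ordering} I may further assume it is in canonical ordering, while keeping its positive joint gain $\delta$. Next I would break the multi-bidder deviation into single-bidder moves using the Salsa Dance decomposition of Definition~\ref{def:car_parking}: Lemma~\ref{lem:one_by_one_decomposition} guarantees canonical ordering is preserved step-by-step, and Lemma~\ref{lem:car_parking} guarantees that the $U_O$ movers end unconfirmed and the $D_I$ movers end confirmed. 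Lemma~\ref{lem:single_mover} then extracts two consecutive settings $A',B'$ that differ only in the bid of a single bidder $i$ and already form a beneficial collusion for the miner and $C$, with gain $\Delta>0$. It remains only to turn this single-mover collusion into a two-bidder one.

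For the final step I would isolate one beneficiary. Assume without loss of generality that $i$ increases its bid (the decreasing case is symmetric through Lemma~\ref{lem:single_mover_decrease}); and if $|C|\le 2$ we are already done, so assume $|C|\ge 3$. Working in the passive model, the settings $A',B'$ depend only on the bids, and only $i$'s bid differs between them; hence for \emph{any} subcoalition $C' \ni i$ the joint gain of the miner and $C'$ is simply $\Delta_{miner} + \sum_{k\in C'}\Delta_k$, where $\Delta_{miner}$ and $\Delta_k$ denote the respective utility changes from $A'$ to $B'$. Writing $P = \Delta_{miner} + \Delta_i$ for the part forced to stay, the full gain is $\Delta = P + \sum_{k\in C\setminus\{i\}}\Delta_k$, and the gain of the two-bidder coalition $\{i,j\}$ is exactly $P + \Delta_j$. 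I would then invoke the single-jump property of Lemma~\ref{lem:single_mover_increase}: there is a threshold $b_i^*$ such that for every small enough $\epsilon$ the tiny move $b_i^*-\tfrac{\epsilon}{2}\to b_i^*+\tfrac{\epsilon}{2}$ still produces the full coalition gain $\Delta$ while shrinking $i$'s own contribution. Since $C$ is finite, some $j^\star$ has $\Delta_{j^\star}\ge (\Delta-P)/(|C|-1)$; once $\epsilon$ is small enough that $P$ is above the fixed threshold $-\Delta/(|C|-2)$, the two-bidder gain $P+\Delta_{j^\star}$ is strictly positive, so the miner together with $\{i,j^\star\}$ is a beneficial $2$-collusion, the desired contradiction.

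The main obstacle is precisely this isolation. Picking an arbitrary beneficiary does not work: as the overview's ``mover loses $10$, three beneficiaries gain $4$'' example shows, the forced part $P$ — the mover's own loss, together with the miner term that does not wash out when a beneficiary is brought into the coalition — can exceed any single beneficiary's gain. The single-jump lemmas exist exactly to make $P$ small while pinning the coalition gain at the \emph{fixed} discrete value $\Delta>0$, so that $P$ can be pushed below the per-capita beneficiary gain $\Delta/(|C|-1)$. The delicate points I expect to argue carefully are (i) that the jump value $\Delta$ is genuinely independent of the window width $\epsilon$, so that shrinking the move does not shrink the gain, and (ii) that over such a window the mover's contribution $\Delta_i$, and the residual part of $\Delta_{miner}$, can be controlled; this quantitative trade-off is exactly the structural output that Lemmas~\ref{lem:single_mover_increase} and~\ref{lem:single_mover_decrease} are built to supply.
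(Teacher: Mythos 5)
Your proposal is correct and follows essentially the same route as the paper: normal form (passive miner, canonical ordering), the Salsa Dance decomposition and Lemma~\ref{lem:single_mover} to obtain a single mover with gain $\Delta>0$, then the single-jump lemmas to shrink the move and an averaging/pigeonhole argument to isolate one beneficiary. The only cosmetic difference is in the bookkeeping: the paper obtains the control you flag as delicate point (ii) by applying the two $1$-SCP inequalities directly to the $\epsilon$-window, giving $-\epsilon \le \Delta_{miner}+\Delta_i \le 0$, and then takes $\epsilon < \Delta/(2|C|+1)$ so that some beneficiary gains at least $2\epsilon$ — the same quantitative trade-off as your threshold $-\Delta/(|C|-2)$.
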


\section{Corollaries Of The Main Characterization}

We present two applications of the above characterization. 

\begin{lemma}
\label{lem:single_item_char}
    With a block size of $1$ (equivalently, a single-item auction), the class of $2$-SCP TFMs is exactly the class of first-price auctions with reserve $r$ that is burned. 
    \end{lemma}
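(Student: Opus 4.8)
The plan is to prove both inclusions. For the easier direction ($\supseteq$) I would show directly that a first-price auction with a burned reserve $r$ is in fact SCP (hence $2$-SCP, since SCP is the narrowest class). The key observation is that, because the reserve $r$ is always burned and the remainder of the winner's payment flows to the miner, the joint utility of the miner together with \emph{any} coalition $C$ depends only on the value realized by the confirmed transaction: if a coalition member of true value $v_w$ is made to win, the joint utility is exactly $v_w - r$ regardless of the bid placed, and if a non-coalition bidder wins it is $b_w - r$. Since in the honest profile the highest bidder (of value $v_{(1)}$) is confirmed exactly when $v_{(1)} \geq r$, the honest joint utility equals $\max(v_{(1)} - r,\,0)$. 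A coalition can at best either push one of its own members to win (giving $v_{\max}^{C} - r$) or leave the highest outside bidder to win (giving $v_{\max}^{nc} - r$); either way the best achievable value is $\max(v_{(1)} - r,\,0)$, so no deviation is beneficial. I would additionally check that active-miner moves do not help: a winning shill bid nets the miner $-r$, and dropping outside bids only lowers the realized value, so neither improves on the honest outcome.

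For the harder direction ($\subseteq$), let $A=(a,p,\beta)$ be $2$-SCP. By Lemma~\ref{lem:highest_bidders} and block size $1$, the confirmation rule confirms exactly the single highest bidder, or no one; so the rule has threshold behavior and I first identify the reserve as the confirmation threshold $r$. I would pin down the burn in two steps, using only passive collusions (justified by Lemma~\ref{lem:active_passive_miner}). Taking the coalition of the miner and the winner $i$, and letting $i$ vary its own bid while remaining the winner, the joint utility equals $v_i - \beta_i$; applying the no-deviation inequality in both directions forces $\beta_i$ to be constant in $i$'s own bid. Taking the $2$-coalition of the miner, the winner, and one loser, and letting the loser vary its (still-losing) bid, the joint utility $v_i - \beta_i$ is again pinned both ways, so $\beta_i$ is independent of every other bid; hence the burn is a global constant. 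A short argument shows this constant equals the threshold $r$: if the threshold strictly exceeded the burn, a bidder whose bid lies between them could collude with the miner to get confirmed at strictly positive joint utility, while burn-balance and individual rationality force the threshold to be at least the burn.

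It remains to show the payment is first-price, $p(b_i)=b_i$, which I expect to be the main obstacle. Using the $1$-coalition of the miner and a loser $j$, and varying $b_j$ while $j$ stays losing, the miner's revenue $p(b_i)-r$ (with $\beta$ already known to be the constant $r$) is pinned both ways, so the winner's payment depends only on the winner's own bid. I then consider the $1$-coalition of the miner and the second-highest bidder $j$, and let $j$ outbid the winner to become the winner itself: its joint utility becomes $b_j - r$ whereas the honest joint utility (with $i$ winning) is $p(b_i)-r$, so $2$-SCP yields $p(b_i)\geq b_j$. Since the second-highest value $b_j$ can be taken arbitrarily close to $b_i$ from below while $p(b_i)$ depends only on $b_i$, this gives $p(b_i)\geq b_i$; individual rationality gives $p(b_i)\leq b_i$, so $p(b_i)=b_i$. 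Together with the confirmation and burn characterizations, $A$ is exactly a first-price auction with burned reserve $r$.

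The delicate point is this last step: neither the miner-plus-winner coalition (whose joint utility $b_i-r$ is independent of the payment) nor any single inequality determines the payment on its own. What makes it work is the combination of (i) decoupling payment from burn via the $2$-coalition, so that the payment can be shown to depend only on the winner's own bid, and (ii) the limiting argument letting the runner-up value approach the winner's bid, which upgrades the family of lower bounds $p(b_i)\geq b_j$ into the exact equality $p(b_i)=b_i$. The boundary case of bids equal to the threshold $r$, and the corresponding tie-breaking, are handled by the consistent tie-breaking assumption.
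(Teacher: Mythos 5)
Your forward direction ($\supseteq$) matches the paper's argument essentially verbatim, and your limiting argument pinning the payment to first-price via the runner-up coalition is the same idea as the paper's (the paper picks an explicit $v'$ strictly between $f(v)$ and $v$ rather than taking a limit). The interesting difference is in the harder direction: the paper does \emph{not} derive the confirmation and burn structure from scratch. It first invokes its main theorem ($2$-SCP $\Rightarrow$ SCP), then imports the characterization of Global-SCP single-item auctions from \cite{welfareIncreasingCollusion}, which already delivers that the auction confirms the highest bidder if and only if that bid clears a fixed reserve $r$, and that exactly $r$ is burned. Only the payment rule is then pinned down by fresh coalition arguments. Your proposal replaces all of that imported machinery with direct $1$- and $2$-coalition inequalities, which would be a more self-contained and elementary proof if it went through.

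The gap is precisely at the step you pass over in one clause: ``so the rule has threshold behavior and I first identify the reserve as the confirmation threshold $r$.'' Lemma~\ref{lem:highest_bidders} only gives that at most the single highest bidder is confirmed; it does not give that \emph{whether} the highest bidder is confirmed depends only on its own bid relative to a fixed $r$. A priori the confirmation of the top bidder could depend on the losing bids (this is exactly the phenomenon in the discount auction of \cite{chung2024collusion} and in the Example of the paper's Discussion section), or depend non-monotonically on the top bid itself. Your subsequent arguments quietly rely on this threshold form in several places: the burn-constancy argument needs the winner to remain confirmed as bids vary, the payment-decoupling step needs a loser's bid change not to flip the winner's confirmation, and the identification of the burn constant with the threshold presupposes a threshold exists. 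To close the gap you would need an additional argument, e.g., a $2$-coalition of the miner, the top bidder $i$, and a loser $j$: if changing $j$'s losing bid flips $i$ from unconfirmed to confirmed, the coalition's joint utility jumps from $0$ to $v_i - \beta_i$, which (once $\beta_i < v_i$ is arranged) is a beneficial $2$-SC; a symmetric argument handles the other direction, and a miner-plus-$i$ coalition handles monotonicity in $i$'s own bid. This is doable but it is the genuinely non-trivial content that the paper outsources to the Global-SCP characterization, so as written the proposal is incomplete.
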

\begin{proof}

    ($2$-SCP $\implies$ first-price with burned reserve)
    
    Consider any $2$-SCP auction $a, p, \beta$. By our characterization, it is also SCP. \cite{welfareIncreasingCollusion} show that SCP implies Global-SCP, and provide a characterization of Global-SCP auctions in the single-item case (notice that this means SCP implies the characterization, and we do not need to worry about Global-SCP, which we did not define). The characterization is as follows: For some reserve price $r$, the auction confirms the item to the highest bidder above $r$, and burns $r$. The payment function may be determined arbitrarily (as long as it respects individual rationality and burn-balance). 

    Now, consider the payment function for our auction, which is not only Global-SCP, but SCP. The function can not depend on losing bids: If it does, there is an SCP of the losing bids to increase the payment, which benefits the miner (and does not harm the losing bids). Thus, there is $f: R_+ \rightarrow R_+$ so that $p(\mathbf{b}) = f(\max \mathbf{b})$, i.e., the payment only depends on the winning bid. Now, assume towards contradiction that for some $v$, $f(v) < v$. Consider some $f(v) < v' < v$, and let $\mathbf{b} = (v, v')$ (call it setting $A$). Consider the $1$-SCP between the miner and bidder $2$ where bidder $2$ bids $2 \cdot v$ (call it setting $B$). We have $u_{miner}(A) + u_2(A ; A) = f(v) - r$, where $r$ is the reserve price that is burned. We also have $u_{miner}(B) + u_2(B ; A) = v' - r > f(v) - r$, and so this is a $1$-SC. We conclude that we must have $f(v) = v$. 

    (first-price with burned reserve $\implies$ SCP)

    If all bidders bid honestly, we have $$u_{miner}(\mathbf{b}) = 1[\max \mathbf{b} \geq r] \cdot (\max \mathbf{b} - r),$$ and $u_i(\mathbf{b}) = 0$ for any bidder $i$. Under any collusion, if bidder $i \in C$ ends up the winner, the joint utility of the collusion is $b_i - r$. If no bidder in the coalition wins, the coalition utility is either $p_i - r$ for some $i\not \in C$, or $0$. In any case, this is less than $1[\max \mathbf{b} \geq r] \cdot (\max \mathbf{b} - r)$, and so the SCP condition holds. 

\end{proof}

%\yotam{Nice to have: Extend to any finite size block, both the characterization, and the $1$-SCP $=$ $2$-SCP (if it's true?)}

This characterization is interesting as it shows that the current popular implementations (Bitcoin's pay-as-bid auction, and Ethereum's EIP-1559, which is essentially a first-price auction with burned reserve) fit exactly the class of collusion-free auctions, in the SCP sense. 
Moreover, in the single-item case, the $2$-SCP characterization can be tightened to $1$-SCP. 

\begin{lemma}
\label{lem:single_item_1scp_2scp}
    With a block size $1$, $1$-SCP $\implies$ $2$-SCP.
\end{lemma}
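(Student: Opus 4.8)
The plan is to reduce the statement to the characterization already obtained in Lemma~\ref{lem:single_item_char}. That lemma shows that, for block size $1$, the class of $2$-SCP auctions is exactly the class of first-price auctions with a burned reserve, and, crucially, its ``first-price with burned reserve $\Rightarrow$ SCP'' direction is self-contained (it never invokes the main characterization). Hence it suffices to prove the single implication that every $1$-SCP single-item auction is itself a first-price auction with burned reserve: chaining this with the reverse direction of Lemma~\ref{lem:single_item_char} gives $1\text{-SCP} \Rightarrow \text{first-price with burned reserve} \Rightarrow \text{SCP} \Rightarrow 2\text{-SCP}$. Before starting I would invoke Lemma~\ref{lem:active_passive_miner} so that all collusions considered may be taken to be passive-miner, single-bidder side-contracts.

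The second step is to revisit the forward direction of Lemma~\ref{lem:single_item_char} and observe that two of the three ingredients it uses are already driven by a single colluding bidder, so they carry over verbatim under the weaker $1$-SCP hypothesis. Specifically, the argument that the payment cannot depend on losing bids changes only one losing bid at a time (a losing bidder who would raise, or lower, the winner's payment yields a beneficial miner-plus-loser $1$-SC), and the argument pinning $f(v)=v$ is exactly a miner-plus-bidder $1$-SC on the profile $(v,v')$ with the deviation to $2v$. Both of these I would reuse unchanged.

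The real work, and the main obstacle, is to recover the allocation and burn structure that Lemma~\ref{lem:single_item_char} imported for free from the SCP/Global-SCP characterization of \cite{welfareIncreasingCollusion}, now using only single-bidder collusions rather than the two-bidder swap of Lemma~\ref{lem:highest_bidders}. First, fixing the other bids and varying one bidder's report, a miner-plus-bidder $1$-SC comparing a winning report against a losing report forces that bidder's winning set to be upward closed (a threshold/reserve structure) and forces the burn to be constant across all of that bidder's winning reports. Combined with anonymity, upward-closedness immediately yields that the highest bidder above the reserve is the one confirmed: if a strictly higher bidder lost while a lower one won, anonymity would show the higher bidder's honest report also lies in its own winning set, a contradiction. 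Finally, I would pin the constant burn $\beta^{*}$ to the reserve $r$ from two opposing $1$-SCs: a bidder whose value lies strictly between $\beta^{*}$ and $r$ could, together with the miner, profitably bid above the reserve and win whenever $\beta^{*} < r$, forcing $\beta^{*} \ge r$; and a bidder whose value is just above $r$ must, by the individual-rationality content of its winning-versus-losing comparison, satisfy $\beta^{*} \le r$. Hence $\beta^{*} = r$, the winner pays their bid, and the auction is first-price with burned reserve. The delicate points to get right are the anonymity/universality argument making the reserve a single global constant independent of the competing bids, and ensuring each structural claim is witnessed by a collusion of size exactly one rather than the pairwise swaps used earlier in the paper.
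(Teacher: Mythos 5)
There is a genuine gap, and it is fatal rather than cosmetic. Your opening move --- invoking Lemma~\ref{lem:active_passive_miner} so that ``all collusions considered may be taken to be passive-miner, single-bidder side-contracts'' --- discards exactly the ingredient the lemma needs. The statement $1$-SCP $\implies$ $2$-SCP for block size $1$ holds only in the \emph{active}-miner model; the paper's Appendix (Lemma~\ref{lem:active_passive_miner_separation} and the remark following it) exhibits the fully-burned second-price auction, which passes every passive miner-plus-one-bidder test yet is not $2$-SCP and is certainly not a first-price auction with burned reserve. That auction is a concrete counterexample to your central claim that passive single-bidder side contracts force the first-price-with-burned-reserve structure. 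Tracing where your argument breaks on it: the step ``the payment cannot depend on losing bids'' requires that raising the winner's payment benefits the miner, which fails when payments are fully burned (the miner's utility is identically zero); and your constant burn $\beta^*$ is only constant \emph{holding $b_{-i}$ fixed}, so nothing you derive prevents the burn from equalling the second-highest bid. The only deviations that kill this auction are a two-bidder collusion (the loser lowers their bid to shrink the winner's burn) or an active-miner $1$-SC (the miner drops the second bid) --- precisely the moves you have excluded.

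The paper's proof takes a much shorter and essentially different route: it does not re-derive the $1$-SCP characterization at all. Given a beneficial $2$-collusion, Lemma~\ref{lem:single_mover} reduces it to one where a single bidder $i$ moves; then, with block size $1$, either the second colluder $j$ loses in $B$ (so by individual rationality $j$ contributes nothing and can be dropped, leaving a $1$-SC with mover $i$), or $j$ wins in $B$ (so $i$ loses in $B$ and $i$'s bid change can be simulated by the \emph{active} miner dropping $b_i^A$ and injecting a fake bid $b_i^B$, leaving a $1$-SC with $j$ alone). If you want to salvage your approach, you would have to reinstate active-miner $1$-SCs throughout the structural arguments; as written, the passive restriction makes the intermediate characterization you rely on provably false.
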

\begin{proof}
    We prove the contrapositive, i.e., we show that if $2$-SCP is violated, then so is $1$-SCP. Consider a beneficial $2$-SCP collusion $\Omega$, going from setting $A$ to $B$. By Lemma~\ref{lem:single_mover}, we can assume there is a single bidder $i$ who changes their bid. If the other bidder in the collusion $j$ does not win the item under $B$, then their utility in $B$ is $0$, and by individual rationality of $A$, they do not gain from the collusion. Thus, the $1$-SC that only involves the miner and agent $i$ is beneficial as well. 

    Otherwise, the winner in $B$ is bidder $j$. Then, the collusion can be implemented without the involvement of bidder $i$: Instead of bidder $i$ changing their bid (from $b_i^A$ to $b_i^B$), the miner drops their bid $b_i^A$, and introduces a fake bid $b_i^B$. Call this collusion, that involves only bidder $j$, $\Omega'$. Now, notice that the utility of the miner and bidder $j$ under $B$ with is the same as the utility of the miner and bidders $i,j$ under $B$ (since $i$ has utility $0$). The utility of the miner and bidder $j$ under $A$ is, by individual rationality, at most the utility of the miner and the bidder $i,j$ under $A$. Thus, if $\Omega$ is beneficial for the colluders, then so is $\Omega'$. 
\end{proof}

The corollary immediately follows from the two lemmas:

\begin{corollary}
\label{corr:single_item_char}
      With a block size of $1$ (equivalently, a single-item auction), the class of $1$-SCP TFMs is exactly the class of first-price auctions with reserve $r$ that is burned.  
\end{corollary}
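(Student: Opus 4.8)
The plan is to obtain the corollary purely by composing the two preceding single-item lemmas, so no new technical machinery is required. The key observation is that the $c$-SCP hierarchy collapses at its bottom two levels in the single-item case. On the one hand, $2$-SCP trivially implies $1$-SCP: a coalition of the miner with at most one bidder is a special case of a coalition of the miner with at most two bidders, so any mechanism robust against the latter is a fortiori robust against the former, giving the set inclusion $\{2\text{-SCP}\} \subseteq \{1\text{-SCP}\}$. On the other hand, \cref{lem:single_item_1scp_2scp} supplies the converse implication $1\text{-SCP}\implies 2\text{-SCP}$ when $\blocksize = 1$. Together these yield that the classes of $1$-SCP and $2$-SCP mechanisms coincide in the single-item setting.

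Having recorded this equality of classes, I would then invoke \cref{lem:single_item_char}, which already identifies the $2$-SCP single-item class with exactly the first-price auctions whose reserve $r$ is burned. Chaining the two set equalities $\{1\text{-SCP}\} = \{2\text{-SCP}\} = \{\text{first-price with burned reserve}\}$ gives the corollary by transitivity. Spelling out the two inclusions: for ``$1$-SCP $\implies$ first-price with burned reserve'', I would take a $1$-SCP mechanism, upgrade it to $2$-SCP via \cref{lem:single_item_1scp_2scp}, and apply the forward direction of \cref{lem:single_item_char}; for the reverse inclusion, I would take a first-price-with-burned-reserve mechanism, obtain $2$-SCP (indeed full SCP) from the backward direction of \cref{lem:single_item_char}, and then downgrade to $1$-SCP using the trivial monotonicity of the SCP hierarchy noted above.

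There is essentially no deep obstacle here, since all the difficulty lives in the two cited lemmas. The only point that requires explicit care is the trivial direction $2\text{-SCP}\implies 1\text{-SCP}$: without stating it, one would only conclude that first-price-with-burned-reserve mechanisms are $2$-SCP, and would be left without the final step establishing that they are $1$-SCP. Making this monotonicity of the hierarchy explicit closes the reverse inclusion and completes the characterization.
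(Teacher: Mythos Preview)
Your proposal is correct and matches the paper's approach exactly: the paper simply states that the corollary ``immediately follows from the two lemmas'' (\cref{lem:single_item_char} and \cref{lem:single_item_1scp_2scp}), and your argument spells out precisely that composition, including the trivial $2$-SCP $\implies$ $1$-SCP direction needed for the reverse inclusion.
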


%\yotam{My current suspicion: $1$-SCP and $2$-SCP are equivalent with a single item. %, not equivalent with more. With more: Do a version of the discount auction that is first-price payment instead of posted-price, and limited spots overall. 
%With a single item: You analyze cases based on whether the single mover (which you can always assume exist by the relevant lemma) goes from (un)confirmed to (un)confirmed. About more than one item, it can be nice to know, but maybe I'll get more intuition by proving for the single item case. }

%The discount auction of \cite{chung2024collusion} shows a separation between $1$-SCP and $2$-SCP TFMs with \textit{infinite} block size. We show the separation holds in the finite block size as well.
In \cite{chung2023foundations}, they show an impossibility for UIC + $1$-SCP mechanisms with a finite block size. However, with an infinite block size, \cite{chung2024collusion} present the discount auction, which is UIC+$1$-SCP. We complement these results by characterizing UIC+$2$-SCP mechanisms in the infinite block-size case:

\begin{definition}
    \textit{Fully-burned Posted-price Mechanisms} are the class of mechanisms where for some constant $r$, all bidders with $b_i \geq r$ are confirmed, they each pay $r$, and all payments are burned.  
\end{definition}

\begin{restatable}[]{theorem}{FullyBurnedPostedPrices}
\label{thm:fully_burned_posted_price}
    The class of UIC + $2$-SCP mechanism for infinite block size is exactly the class of fully burned posted-price mechanisms.  
    
\end{restatable}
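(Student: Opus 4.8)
The plan is to prove both inclusions. For the easy direction (fully-burned posted-price $\Rightarrow$ UIC $+$ $2$-SCP), I would observe that because every payment is fully burned, the miner's utility is identically non-positive: real bidders contribute $p_i - \beta_i = 0$, and fake bids only cost their (non-negative) burn. Hence for any coalition $C$ the joint utility is at most $\sum_{i \in C}\max(v_i - r, 0)$, and honest bidding already attains this bound — each bidder is confirmed iff $b_i \ge r$ and pays exactly $r$, so $u_{miner}=0$ and $u_i=\max(v_i-r,0)$. This gives SCP (hence $2$-SCP), and the same per-bidder bound gives UIC.

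For the hard direction I would first invoke \cref{lem:active_passive_miner} to pass to the passive-miner model and \cref{thm:scp_2scp} to upgrade $2$-SCP to full SCP, making the grand coalition available. The key structural step is to write the joint utility of the grand coalition $C=[n]$ together with the miner as $\sum_{i \text{ confirmed}} v_i - \mathrm{TotalBurn}$, where all payments cancel as internal transfers; SCP then says honest bidding maximizes this quantity. From this I would extract two facts. First, $\mathrm{TotalBurn}$ depends only on the confirmed set: honest must minimize total burn among all bid vectors inducing the same confirmed set, and letting the true values range over that set forces equality, so by anonymity $\mathrm{TotalBurn}=\tau(m)$ depends only on the number $m$ of confirmed bidders. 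Second, with an infinite block any subset can be made the confirmed set (bid its members high and the rest $0$, using \cref{lem:highest_bidders}), so the honest confirmed set must maximize $\sum_{i\in S} v_i - \tau(|S|)$ — i.e. the top $m^*$ bidders, where $m^*$ maximizes (sum of top $m$ values) $-\,\tau(m)$. This is precisely a uniform-price ``supply curve'' structure with marginal cost $\Delta\tau(m)=\tau(m)-\tau(m-1)$.

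Next I would bring in UIC, which forces the payment of each confirmed bidder to equal its critical value; for the top-$m^*$ rule this critical value is the uniform price $\Delta\tau(m^*)$, so the total payment is $m^*\,\Delta\tau(m^*)$. I would then pin $\tau$ to be linear from two opposing constraints. Burn-balance $\beta_i\le p_i$ (the paper's \cref{eq:BB}) gives $\tau(m)\le m\,\Delta\tau(m)$, which fails whenever $\Delta\tau$ is strictly decreasing, ruling out ``concave'' $\tau$. Conversely, if $\Delta\tau$ is strictly increasing, then in suitable profiles a confirmed bidder's critical value equals the bid of a losing bidder (the runner-up that controls $m^*$); the miner and that losing bidder can then raise the losing bid to pump the confirmed bidder's payment while keeping the confirmed set — and hence $\tau$ — fixed, a beneficial collusion contradicting SCP. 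This rules out ``convex'' $\tau$. Therefore $\tau(m)=r\,m$ for a constant $r$, so the confirmed set is exactly $\{i:v_i\ge r\}$, each confirmed bidder pays the critical value $r$, and each burns $\tau(m)-\tau(m-1)=r$: a fully-burned posted-price mechanism.

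I expect the main obstacle to be the decoupling that the infinite block size should grant but that must be earned — namely showing that $\mathrm{TotalBurn}$ and each critical value depend only on the confirmed set, not on the exact competing bids, and then converting strict convexity of $\tau$ into an explicit losing-bid pumping collusion. The monotonicity lemmas (\cref{lem:increase_monotonicity,lem:decrease_monotonicity}) and the non-bossiness lemma (\cref{lem:2scp-non-bossy}) should do the heavy lifting for the first part, while the payment-independence argument inside \cref{lem:single_item_char} is the template for the second. A secondary point is ruling out genuine randomization: I would argue the maximizer of $\sum_{i\in S}v_i-\tau(m)$ is unique off the measure-zero tie $v_i=r$, so any randomized UIC $+$ SCP mechanism agrees with the deterministic posted price except on ties, where consistent tie-breaking collapses it into the stated class.
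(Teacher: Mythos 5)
Your overall skeleton matches the paper's: the easy direction is argued the same way (all payments burned, so miner utility is zero and the coalition's value is just the sum of bidder utilities, which honest bidding at the posted price already maximizes), and the hard direction starts, as in the paper, by upgrading $2$-SCP to SCP via \cref{thm:scp_2scp}. Where you genuinely diverge is in how the hard direction is finished. The paper outsources the structure: it cites \cite{welfareIncreasingCollusion} for SCP $\Rightarrow$ Global-SCP and for their Theorem~3.5 characterization of UIC $+$ Global-SCP mechanisms (burn is a function $\beta^{cardinal}(k)$ of the number of confirmed bids, $\beta^{avg}$ nondecreasing, payments pinned down), and then kills the strictly-increasing case of $\beta^{avg}$ by exhibiting a profile of $k+1$ equal bids $\rho$ where total payment $k\rho$ strictly exceeds total burn, contradicting the zero-miner-revenue theorem for UIC $+$ $1$-SCP from \cite{chung2023foundations}; Theorem~3.9 of \cite{welfareIncreasingCollusion} then converts constant $\beta^{avg}$ into the posted-price form. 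You instead rebuild this structure internally (grand-coalition SCP forces total burn to depend only on the confirmed set, anonymity reduces it to $\tau(m)$, burn-balance gives $\tau(m)\le m\,\Delta\tau(m)$ in place of the cited monotonicity of $\beta^{avg}$) and you replace the zero-revenue citation with an explicit losing-bid-pumping collusion; notably your pumping profile is essentially the same witness profile the paper uses, just with a different punchline. Your route is more self-contained; the paper's is shorter and rests on established results.

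The one place your plan is materially thinner than a proof is the decoupling step you yourself flag: with an infinite block you assert that any subset can be realized as the confirmed set and that $\tau(m)$ is well-defined for every $m$, and you need the critical value of a confirmed bidder to take the form $\max$ of the losing bid and the marginal-burn threshold. None of these follow directly from \cref{lem:increase_monotonicity,lem:decrease_monotonicity,lem:2scp-non-bossy}, which concern confirmation status under single-bidder moves, not payment or burn structure; establishing them is essentially reproving the \cite{welfareIncreasingCollusion} characterization that the paper imports. So the proposal is a viable alternative outline rather than a complete substitute proof, with the real work concentrated exactly where you predicted.
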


\section{Discussion}

Our work shows that the class of $2$-SCP mechanisms is the same as the class of SCP mechanisms. This, in the view of the miner, means that when a mechanism is subject to collusion, they only need to find $2$ co-conspirators to implement a beneficial one. This shows that in some sense of \textit{coalitional} complexity, collusion is simple. 

One non-trivial assumption we make is that of consistent tie-breaking towards \textit{zero-utility} bidders w.r.t. unconfirmed bidders. Let us describe an auction which violates this assumption, and for which our single-mover argument fails.
\begin{example}
    Consider an auction where payments are fully burned. A winning bid is only confirmed if there is a losing bid with value at least $8$. If a bid with value at least $10$ exists, all bids of value $8$ and more are confirmed and pay $6.5$. 

    Consider two bidders $b_1 = 10, b_2 = 1$ (call this setting $A$), that collude to change their bids to $b'_1 = 9, b'_2 = 8$ (call this setting $B$). This is a beneficial collusion of the miner and two bidders as their joint utility increases from $0$ to $1$. However, no single-move decomposition of the collusion is beneficial. Let $X$ be the setting with bids $9, 1$, and $Y$ the setting with bids $10, 8$. The joint utility of the coalition moving from $A$ to $Y$ goes from $0$ to $-2$. From $A$ to $X$ it goes from $0$ to $0$. From $X$ to $B$ it goes from $0$ to $0$, and from $Y$ to $B$ it goes from $5$ to $1$. 
\end{example}

Going beyond our discussion of collusion size, there are other aspects of the collusion that may still be hard. In terms of \textit{computational} complexity, it may be hard to identify how to implement a successful collusion. This depends on the computational model of how the problem is represented. Importantly, at least at surface-level, it seems like our result does not change the basic complexity. Consider two natural choices. If we assume that the problem is given as a full discrete table mapping all possible types (values) to outcomes (confirmation, payment, and burn), then the problem of finding a collusion is a matter of iterating over all possible types, and finding two settings where a beneficial collusion exists. This is polynomial (quadratic) in the input size, and true even without our reduction. On the other hand, if one assumes as in \cite{golowichLi2022} that the problem is given as a Boolean circuit, then the problem of finding a beneficial $2$-SCP is coNP-complete. 

\begin{definition}
    \textit{Boolean circuit representation} of $a, p, \beta$: Consider a finite set of values $v^1, \ldots, v^k$, and let $\bar{v} \in \{0,1\}^{\log k}$ be binary representation of them. With $n$ bidders, let $C^a_1, \ldots, C^a_n$ be Boolean circuits that receive $n$ boolean vectors of size $\log k$ (so overall $n \log k$ bits), where each $C^a_i$ outputs $a_i(v_1, \ldots, v_n) \in \{0,1\}$. $C^p_1, \ldots, C^p_n, C^{\beta}_1, \ldots, C^{\beta}_n$ similarly each output a binary representation of the payment and burn for each bidder (which we assume, for simplicity, are in $v^1, \ldots, v^k$ as well). 
\end{definition}

We are interested in the search problem of finding a $2$-SC (or reporting there is none, if none exists). To understand its complexity, we can define the decision problem of whether the auction is $2$-SCP:

\begin{definition}
    $2$-SCPDP ($2$-SCP Decision Problem): Given circuits $C^a_1, \ldots, C^a_n, C^p_1, \ldots, C^p_n, C^{\beta}_1, C^{\beta}_n$ that represent an individually rational, burn-balanced single-item auction, decide whether the auction represented by the circuits is $2$-SCP.
\end{definition}

The problem is in coNP, since a witness to show that it is \textit{not} $2$-SCP is simply two sets of bidder values $v_1, \ldots, v_n, v'_1, \ldots, v'_n$, and (at most) two bidder indices $i,j$. Given this polynomial-size witness, a beneficial $2$-SC can be verified in polynomial time. We next define Boolean tautology, which is coNP-complete. 

\begin{definition}
    BOOLEAN TAUTOLOGY: Given a Boolean circuit $c$ with $n$ input bits, decide whether the circuit always outputs $1$.  
\end{definition}

%\balance 

\begin{restatable}[]{lemma}{coNPComplete}
\label{lem:coNPcomplete}
    $2$-SCPDP is coNP-complete.
\end{restatable}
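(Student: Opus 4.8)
The plan is to prove the two inclusions separately: that $2$-SCPDP lies in coNP, and that it is coNP-hard. Membership is essentially the argument sketched just before the statement. Since the circuits fix the number of bidders $n$, both the honest setting $A$ and the manipulated setting $B$ are described by $n$ values each, so a certificate that an auction is \emph{not} $2$-SCP consists of the true values $v_1,\dots,v_n$, the deviated bids $v'_1,\dots,v'_n$, and the $\le 2$ coalition indices. Evaluating the circuits $C^a_i,C^p_i,C^\beta_i$ on $A$ and on $B$ lets one check the beneficial-collusion inequality Eq.~\ref{eq:scp_cond} in polynomial time, so the complement is in NP and $2$-SCPDP is in coNP.

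For hardness I would reduce from BOOLEAN TAUTOLOGY. Given a circuit $c$ on $n$ input bits, I build a single-item auction on $n+2$ bidders over the value set $\{0,1,2,3,4\}$, where bidders $1,\dots,n$ are \emph{input bidders} and bidders $n+1,n+2$ are \emph{prize bidders}. On a profile $\mathbf{b}$ I read $x_i \define \1{b_i \ge 2}$ for $i \le n$. If $c(x)=1$ the auction behaves as a plain first-price auction: it confirms the overall highest bid (lowest index breaking ties), the winner pays its own bid, and nothing is burned. If $c(x)=0$ the auction instead confirms the prize bidder holding the \emph{lower} of the two bids $b_{n+1},b_{n+2}$, again paying its own bid with no burn. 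This is a deterministic single-item rule (exactly one winner on every profile), it is individually rational (every winner pays its own bid, so honest utility is $0$) and burn-balanced (all burns are $0$), and since it invokes only one copy of $c$ together with comparison logic, it is computable by circuits of size polynomial in $|c|$ and $n$.

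Correctness rests on two claims. First, if $c$ is a tautology then $c(x)=1$ on every profile, the trap branch is never taken, and the auction is \emph{exactly} a first-price auction with a fully-burned reserve of $0$; this format is anonymous and, by the ``first-price with burned reserve $\implies$ SCP'' direction of Lemma~\ref{lem:single_item_char}, it is SCP and hence $2$-SCP. Second, if $c$ is not a tautology, fix $x^\ast$ with $c(x^\ast)=0$ and let $A$ be the profile where input bidder $i$ bids $2$ if $x^\ast_i=1$ and $1$ otherwise, bidder $n+1$ bids $4$, and bidder $n+2$ bids $3$. The prize bidders are then the two highest bids, the trap fires, and the lower prize bidder ($n+2$, bidding $3$) is confirmed while the higher one ($n+1$, bidding $4$) is not. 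This is exactly the configuration exploited in the proof of Lemma~\ref{lem:highest_bidders}: the coalition of the miner and the two prize bidders swaps their bids to obtain $B$, which leaves $x^\ast$ and hence the trap unchanged, now confirms $n+1$ at price $3$, and raises the coalition's joint utility from $3$ to $4$. Thus a beneficial $2$-SC exists and the auction is not $2$-SCP. Together these give that the auction is $2$-SCP if and only if $c$ is a tautology, a polynomial-time many-one reduction, so $2$-SCPDP is coNP-hard and therefore coNP-complete.

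The point I expect to need the most care is the first correctness claim: guaranteeing that when $c\equiv 1$ there is \emph{no} beneficial $2$-SC on \emph{any} profile, not merely on the designed ones. The construction deliberately sidesteps a direct case analysis by arranging that the tautology case degenerates to a genuine first-price auction, for which global collusion-resistance is already available off the shelf via Lemma~\ref{lem:single_item_char}; the only delicate points are then that the trap branch truly never executes and that the degenerate auction matches a format that lemma covers. A secondary subtlety is that the constructed auction is intentionally \emph{non-anonymous} (input versus prize roles), which is precisely what lets it evade the single-item characterization of Corollary~\ref{corr:single_item_char} and be hard at all; I would confirm that the problem statement permits non-anonymous auctions (the per-bidder circuits $C^a_i$ strongly suggest it does), and note that even under an anonymity restriction the problem remains coNP-hard, since recognizing the first-price-with-burned-reserve format reduces to circuit-equivalence testing.
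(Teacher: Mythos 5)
Your proof is correct and follows essentially the same route as the paper's: coNP membership via a polynomial-size witness consisting of the two bid profiles and the coalition indices, and coNP-hardness by a reduction from BOOLEAN TAUTOLOGY in which the constructed auction collapses to a first-price auction (hence $2$-SCP via the characterization of Lemma~\ref{lem:single_item_char}, modulo the same anonymity/discreteness caveat the paper itself flags) when $c\equiv 1$, and otherwise admits a bid-swap $2$-SC in the spirit of Lemma~\ref{lem:highest_bidders}. The only difference is cosmetic gadgetry -- you use two dedicated ``prize'' bidders over a five-value alphabet where the paper wires $C$ directly into the allocation circuits of bidders $1$ and $2$ over binary values -- so no further comparison is needed.
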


In terms of \textit{communication} complexity, it is important to understand how many rounds of communication are needed between the colluders to establish the collusion, given that the values of the bidders are private information. This is especially important when the collusion coordination must be very quick, as Ethereum mines a block every $12$ seconds on average \cite{ethereumLatency}. Ideally (for the colluders), no communication would be needed, and a \textit{tacit} collusion strategy exists that each may apply on their own. It is clear that at most $2$ additional rounds of communication are required for a collusion: One round where all bidders share their values with the miner, another one where the miner tells each bidder what value to bid, and then the mandatory round where all bidders make their bids. It is interesting to understand in what settings the first two rounds can be avoided. Moreover, this suggestion leaves all the bargaining power in the hands of the miner, and this may not be acceptable for the bidders. It is interesting to know the ``price of bargaining'' in terms of communication, if one was to instead follow an established process of bargaining \cite{rubinsteinBargaining, compteJehiel}. 

\subsection{Open questions.}

We list some remaining open questions. 

\begin{itemize}
\item 
\textit{Removing the consistent-tie-breaking assumption.} Although we find the assumption natural, we would ideally wish to avoid limiting the range of auction formats we consider. Moreover, it remains possible that a non consistent-tie-breaking auction \textit{does} separate $2$-SCP from higher $c$-SCP, and this could be interesting for collusion prevention. 

    \item \textit{A separation of $1$-SCP and $2$-SCP with a finite block size greater than $1$.} The discount auction \cite{chung2024collusion} shows a separation in the infinite block size case. Our Lemma~\ref{lem:single_item_1scp_2scp} shows they are the same with a block size of $1$. What happens in-between?

    \item \textit{A Global-SCP mechanism which is $1$-SCP, but not $2$-SCP.} This is an open question of \cite{welfareIncreasingCollusion} that remains open. Global-SCP is weaker than SCP, so this is not ruled out by our results. 

    \item \textit{Characterizing SCP and Global-SCP beyond auctions.} The TFM literature is focused on auctions, but the collusion notions are general and can apply to other mechanism design settings.

\end{itemize}

%In some cases of interest such collusion strategies are known. For example, for the English auction, \cite{} describes the following strategy, they term PAKT: Members of a bidding ring follow the following two rules: (i) Bidders in the ring do not bid against each other in the public auction, (ii) They arrive at the efficient confirmation within the ring by pre-auctioning the right to speak in the public auction. 

\bibliographystyle{ACM-Reference-Format}
\bibliography{main}

@misc{babaioff2024optimality,
  author = {Babaioff, Moshe and Nisan, Noam},
  title  = {On the Optimality of EIP-1559 for Patient Bidders},
  year   = {2024},
  url    = {https://cs.huji.ac.il/~noam/publications/patient1559.pdf},
}

@inproceedings{bahrani2024transaction,
  author    = {Bahrani, Maryam and Garimidi, Pranav and Roughgarden, Tim},
  booktitle = {6th Conference on Advances in Financial Technologies},
  title     = {{Transaction Fee Mechanism Design in a Post-MEV World}},
  doi       = {10.4230/LIPIcs.AFT.2024.29},
  isbn      = {978-3-95977-345-4},
  pages     = {29:1--29:24},
  publisher = {Leibniz-Zentrum f{\"u}r Informatik},
  volume    = {316},
  abstract  = {The incentive-compatibility properties of blockchain transaction fee mechanisms have been investigated with passive block producers that are motivated purely by the net rewards earned at the consensus layer. This paper introduces a model of active block producers that have their own private valuations for blocks (representing, for example, additional value derived from the application layer). The block producer surplus in our model can be interpreted as one of the more common colloquial meanings of the phrase ``maximal extractable value (MEV).''
               
               We first prove that transaction fee mechanism design is fundamentally more difficult with active block producers than with passive ones: With active block producers, no non-trivial or approximately welfare maximizing transaction fee mechanism can be incentive-compatible for both users and block producers. These impossibility results can be interpreted as a mathematical justification for augmenting transaction fee mechanisms with additional components such as orderflow auctions, block producer competition, trusted hardware, or cryptographic techniques. 
               
               We then proceed to a more fine-grained model of block production that is inspired by current practice, in which we distinguish the roles of ``searchers'' (who actively identify opportunities for value extraction from the application layer and compete for the right to take advantage of them) and ``proposers'' (who participate directly in the blockchain protocol and make the final choice of the published block). Searchers can effectively act as an ``MEV oracle'' for a transaction fee mechanism, thereby enlarging the design space.  Here, we first consider a transaction fee mechanism that resembles how searchers have traditionally been incorporated into the block production process, with each transaction effectively sold off to a searcher through a first-price auction.  We then explore the design space with searchers more generally, and design a mechanism that circumvents our impossibility results for mechanisms without searchers. Our mechanism (the ``SAKA'' mechanism) is deterministic, incentive-compatible (for users, searchers, and the block producer), and sybil-proof, and it guarantees roughly 50\% of the maximum-possible welfare when transaction sizes are small relative to block sizes.  We conclude with a matching negative result: even when transactions are small relative to blocks, no incentive-compatible, sybil proof, and deterministic transaction fee mechanism can guarantee more than 50\% of the maximum-possible welfare.
               
               Note: This paper subsumes an earlier working paper with a somewhat different title, Transaction Fee Mechanism Design with Active Block Producers, arXiv:2307.01686. The results in Section 3 appear also in that paper, while the results in Sections 4 and 5 are new to this paper.},
  address   = {Germany},
  annote    = {Keywords: MEV, Transaction Fee Mechanisms, Auctions},
  issn      = {1868-8969},
  urn       = {urn:nbn:de:0030-drops-209658},
  year      = {2024},
}

@InProceedings{chung2023foundations,
  author    = {Chung, Hao and Shi, Elaine},
  booktitle = {Proceedings of the 2023 Annual ACM-SIAM Symposium on Discrete Algorithms (SODA)},
  title     = {Foundations of Transaction Fee Mechanism Design},
  doi       = {10.1137/1.9781611977554.ch150},
  pages     = {3856--3899},
  publisher = {SIAM},
  abstract  = {In blockchains such as Bitcoin and Ethereum, users compete in a transaction fee auction to get their transactions confirmed in the next block. A line of recent works set forth the desiderata for a “dream” transaction fee mechanism (TFM), and explored whether such a mechanism existed. A dream TFM should satisfy 1) user incentive compatibility (UIC), i.e., truthful bidding should be a user's dominant strategy; 2) miner incentive compatibility (MIC), i.e., the miner's dominant strategy is to faithfully implement the prescribed mechanism; and 3) miner-user side contract proofness (SCP), i.e., no coalition of the miner and one or more user(s) can increase their joint utility by deviating from the honest behavior. The weakest form of SCP is called 1-SCP, where we only aim to provide resilience against the collusion of the miner and a single user. Sadly, despite the various attempts, to the best of knowledge, no existing mechanism can satisfy all three properties in all situations. Since the TFM departs from classical mechanism design in modeling and assumptions, to date, our understanding of the design space is relatively little. In this paper, we further unravel the mathematical structure of transaction fee mechanism design by proving the following results: • Can we have a dream TFM? We prove a new impossibility result: assuming finite block size, no single parameter, non-trivial, possibly randomized TFM can simultaneously satisfy UIC and 1-SCP. Consequently, no non-trivial TFM can satisfy all three desired properties simultaneously. This answers an important open question raised by Roughgarden in his recent work. • Rethinking the incentive compatibility notions. We observe that the prevalently adopted incentive compatibility notions may be too draconian and somewhat flawed. We rectify the existing modeling techniques, and suggest a relaxed incentive compatibility notion that captures additional hidden costs of strategic deviation. We construct a new mechanism called the “burning second-price auction”, and show that it indeed satisfies the new incentive compatibility notions. We additionally prove that the use of randomness is necessary under the new incentive compatibility notions for “useful” mechanisms that resist the coalitions of the miner and at least 2 users. • Do the new design elements make a difference? Unlike classical mechanisms, TFMs may employ a couple new design elements that are idiosyncratic to blockchains. For example, a burn rule (employed by Ethereum's EIP-1559) allows part to all of the payment from the users to be burnt rather than paid to the miner. Some mechanisms also allow unconfirmed transactions to be included in the block, to set the price for others. Our work unveils how these new design elements actually make a difference in TFM design, allowing us to achieve incentive compatible properties that would otherwise be impossible.},
  address   = {Philadelphia, PA},
  year      = {2023},
}

@inproceedings{chung2024collusion,
  author    = {Chung, Hao and Roughgarden, Tim and Shi, Elaine},
  booktitle = {Proceedings of the 25th Conference on Economics and Computation},
  title     = {Collusion-Resilience in Transaction Fee Mechanism Design},
  doi       = {10.1145/3670865.3673550},
  isbn      = {9798400707049},
  location  = {New Haven, CT, USA},
  pages     = {1045–1073},
  publisher = {ACM},
  series    = {EC '24},
  abstract  = {Users bid in a transaction fee mechanism (TFM) to get their transactions included and confirmed by a blockchain protocol. Roughgarden (EC'21) initiated the formal treatment of TFMs and proposed three requirements: user incentive compatibility (UIC), miner incentive compatibility (MIC), and a form of collusion-resilience called OCA-proofness. Ethereum's EIP-1559 mechanism satisfies all three properties simultaneously when there is no contention between transactions, but loses the UIC property when there are too many eligible transactions to fit in a single block. Chung and Shi (SODA'23) considered an alternative notion of collusion-resilience, called c-side-contract-proofness (c-SCP), and showed that, when there is contention between transactions, no TFM can satisfy UIC, MIC, and c-SCP for any c ≥ 1. OCA-proofness asserts that the users and a miner should not be able to "steal from the protocol." On the other hand, the c-SCP condition requires that a coalition of a miner and a subset of users should not be able to profit through strategic deviations (whether at the expense of the protocol or of the users outside the coalition).Our main result is the first proof that, when there is contention between transactions, no (possibly randomized) TFM in which users are expected to bid truthfully satisfies UIC, MIC, and OCA-proofness. This result resolves the main open question in Roughgarden (EC'21). We also suggest several relaxations of the basic model that allow our impossibility result to be circumvented.},
  address   = {NY, USA},
  keywords  = {transaction fee mechanism, blockchain, collusion-resilience},
  numpages  = {29},
  year      = {2024},
}

@inproceedings{ferreira2021dynamic,
  author    = {Ferreira, Matheus V. X. and Moroz, Daniel J. and Parkes, David C. and Stern, Mitchell},
  booktitle = {Proceedings of the 3rd Conference on Advances in Financial Technologies},
  title     = {Dynamic Posted-Price Mechanisms for the Blockchain Transaction-Fee Market},
  doi       = {10.1145/3479722.3480991},
  isbn      = {9781450390828},
  pages     = {86--99},
  publisher = {ACM},
  abstract  = {In recent years, prominent blockchain systems such as Bitcoin and Ethereum have experienced explosive growth in transaction volume, leading to frequent surges in demand for limited block space and causing transaction fees to fluctuate by orders of magnitude. The status quo auctions sell space using a first-price auction [27]; however, users find it difficult to estimate how much they need to bid in order to get their transactions accepted onto the chain. If they bid too low, their transactions can have long confirmation times. If they bid too high, they pay larger fees than necessary.In light of these issues, new transaction fee mechanisms have been proposed, most notably EIP-1559 [4], aiming to provide better usability. EIP-1559 is a history-dependent mechanism that relies on block utilization to adjust a base fee. We propose an alternative design - a dynamic posted-price mechanism - which uses not only block utilization but also observable bids from past blocks to compute a posted-price for subsequent blocks. We show its potential to reduce price volatility by providing examples for which the prices of EIP-1559 are unstable while the prices of the proposed mechanism are stable. More generally, whenever the demand for the blockchain stabilizes, we ask if our mechanism is able to converge to a stable state. Our main result provides sufficient conditions in a probabilistic setting for which the proposed mechanism is approximately welfare optimal and the prices are stable. Our main technical contribution towards establishing stability is an iterative algorithm that, given oracle access to a Lipschitz continuous and strictly concave function f, converges to a fixed point of f.},
  numpages  = {14},
  url_video = {https://youtu.be/Tv_Rqdq9kw0},
  year      = {2021},
}

@misc{gafni2022greedy,
  author        = {Gafni, Yotam and Yaish, Aviv},
  title         = {Greedy Transaction Fee Mechanisms for (Non-)myopic Miners},
  doi           = {10.48550/arXiv.2210.07793},
  abstract      = {Decentralized cryptocurrencies are payment systems that rely on aligning the incentives of users and miners to operate correctly and offer a high quality of service to users. Recent literature studies the mechanism design problem of the auction serving as a cryptocurrency's transaction fee mechanism (TFM). We present a general framework that captures both myopic and non-myopic settings, as well as different possible strategic models for users. Within this general framework, when restricted to the myopic case, we show that while the mechanism that requires a user to "pay-as-bid", and greedily chooses among available transactions based on their fees, is not dominant strategy incentive-compatible for users, it has a Bayesian-Nash equilibrium where bids are slightly shaded. Relaxing this incentive compatibility requirement circumvents the impossibility results proven by previous works, and allows for an approximately revenue and welfare optimal, myopic miner incentive-compatible (MMIC), and off-chain-agreement (OCA)-proof mechanism. We prove these guarantees using different benchmarks, and show that the pay-as-bid greedy auction is the revenue optimal Bayesian incentive-compatible, MMIC and 1-OCA-proof mechanism among a large class of mechanisms. We move beyond the myopic setting explored in the literature, to one where users offer transaction fees for their transaction to be accepted, as well as report their urgency level by specifying the time to live of the transaction, after which it expires. We analyze pay-as-bid mechanisms in this setting, and show the competitive ratio guarantees provided by the greedy allocation rule. We then present a better-performing non-myopic rule, and analyze its competitive ratio. The above analysis is stated in terms of a cryptocurrency TFM, but applies to other settings, such as cloud computing and decentralized "gig" economy, as well.},
  archiveprefix = {arxiv},
  copyright     = {Creative Commons Attribution 4.0 International},
  keywords      = {Computer Science and Game Theory (cs.GT), Theoretical Economics (econ.TH), FOS: Computer and information sciences, FOS: Economics and business},
  publisher     = {arXiv},
  year          = {2022},
}

@inproceedings{gafni2024barriers,
  author    = {Gafni, Yotam and Yaish, Aviv},
  booktitle = {Proceedings of the 25th Conference on Economics and Computation},
  title     = {Barriers to Collusion-resistant Transaction Fee Mechanisms},
  doi       = {10.1145/3670865.3673469},
  isbn      = {979-8-4007-0704-9/24/07},
  location  = {New Haven, CT, USA},
  publisher = {ACM},
  series    = {EC '24},
  abbr      = {EC24},
  abstract  = {To allocate transactions to blocks, cryptocurrencies use an auction-like transaction fee mechanism (TFM). A conjecture of Roughgarden [Rou21] asks whether there is a TFM that is incentive compatible for both the users and the miner, and is also resistant to off-chain agreements (OCAs) between these parties, a collusion notion that captures the ability of users and the miner to jointly deviate for profit. The work of Chung and Shi [CS23] tackles the problem using the different collusion resistance notion of side-channel proofness (SCP), and shows an impossibility given this notion. We show that OCA-proofness and SCP are different, with SCP being strictly stronger. We then fully characterize the intersection of deterministic dominant strategy incentive-compatible (DSIC) and OCA-proof mechanisms, as well as deterministic MMIC and OCA-proof ones, and use this characterization to show that only the trivial mechanism is DSIC, myopic miner incentive-compatible (MMIC) and OCA-proof. We also show that a randomized mechanism can be at most 0.842-efficient in the worst case, and that the impossibility of a non-trivial DSIC, MMIC and OCA-proof extends to a couple of natural classes of randomized mechanisms.},
  address   = {NY, USA},
  keywords  = {Optimal Auctions, Blockchain, Mechanism Design, Transaction Fee Mechanisms},
  year      = {2024},
  pages = {1074–1096},
  numpages = {23},
}

@misc{gafni2024scheduling,
  author   = {Gafni, Yotam and Yaish, Aviv},
  title    = {Scheduling With Time Discounts},
  doi      = {10.48550/arXiv.2402.08549},
  abstract = {We study a \emph{financial} version of the online problem of scheduling weighted packets with deadlines. The main novelty is that, while prior works assume packets have \emph{fixed} weights, we consider packets with \emph{time-decaying} values. Such considerations are natural in financial environments, where the present value of future actions may be discounted. We analyze the competitive ratio guarantees of scheduling algorithms under a range of discount rates encompassing the ``traditional'' undiscounted case where weights are fixed (i.e., a discount rate of 1), the fully discounted ``myopic'' case (i.e., a rate of 0), and those in between. We show how existing methods from the literature perform suboptimally in the more general discounted setting. Notably, we devise a novel memoryless deterministic algorithm, and prove that it guarantees the best possible competitive ratio attainable by deterministic algorithms for discount factors up to ≈0.77. Moreover, we develop a randomized algorithm and prove that it outperforms the best possible deterministic algorithm, for any discount rate. While we highlight the relevance of our framework and results to blockchain transaction scheduling in particular, our approach and analysis techniques are general and may be of independent interest.},
  month    = 2,
  year     = {2024},
}

@article{huberman2021monopoly,
  author   = {Huberman, Gur and Leshno, Jacob D and Moallemi, Ciamac},
  title    = {{Monopoly without a Monopolist: An Economic Analysis of the Bitcoin Payment System}},
  doi      = {10.1093/restud/rdab014},
  issn     = {0034-6527},
  number   = {6},
  pages    = {3011--3040},
  volume   = {88},
  abstract = {{Bitcoin provides its users with transaction-processing services which are similar to those of traditional payment systems. This article models the novel economic structure implied by Bitcoin’s innovative decentralized design, which allows the payment system to be reliably operated by unrelated parties called miners. We find that this decentralized design protects users from monopoly pricing. Competition among service providers within the platform and free entry imply no entity can profitably affect the level of fees paid by users. Instead, a market for transaction-processing determines the fees users pay to gain priority and avoid transaction-processing delays. The article (i) derives closed-form formulas of the fees and waiting times and studies their properties, (ii) compares pricing under the Bitcoin Payment System to that under a traditional payment system operated by a profit-maximizing firm, and (iii) suggests protocol design modifications to enhance the platform’s efficiency. The Appendix describes and explains the main attributes of Bitcoin and the underlying blockchain technology.}},
  journal  = {The Review of Economic Studies},
  month    = mar,
  year     = {2021},
}

@inproceedings{lavi2019redesigning,
  author    = {Lavi, Ron and Sattath, Or and Zohar, Aviv},
  booktitle = {The World Wide Web Conference},
  title     = {Redesigning Bitcoin's Fee Market},
  doi       = {10.1145/3308558.3313454},
  isbn      = {9781450366748},
  location  = {San Francisco, CA, USA},
  pages     = {2950--2956},
  publisher = {ACM},
  series    = {WWW '19},
  abstract  = {The Bitcoin payment system involves two agent types: Users that transact with the currency and pay fees and miners in charge of authorizing transactions and securing the system in return for these fees. Two of Bitcoin's challenges are (i) securing sufficient miner revenues as block rewards decrease, and (ii) alleviating the throughput limitation due to a small maximal block size cap. These issues are strongly related as increasing the maximal block size may decrease revenue due to Bitcoin's pay-your-bid approach. To decouple them, we analyze the “monopolistic auction” [8], showing: (i) its revenue does not decrease as the maximal block size increases, (ii) it is resilient to an untrusted auctioneer (the miner), and (iii) simplicity for transaction issuers (bidders), as the average gain from strategic bid shading (relative to bidding one's true maximal willingness to pay) diminishes as the number of bids increases.},
  address   = {NY, USA},
  keywords  = {Fee-market, Blockchain, Bitcoin, Cryptocurrency, Auction-Theory},
  numpages  = {7},
  year      = {2019},
}

@article{myerson1981optimal,
  author    = {Myerson, Roger B},
  title     = {Optimal auction design},
  doi       = {10.1287/moor.6.1.58},
  number    = {1},
  pages     = {58--73},
  volume    = {6},
  journal   = {Mathematics of operations research},
  publisher = {INFORMS},
  year      = {1981},
}

@misc{nisan2023serial,
  author   = {Nisan, Noam},
  title    = {Serial Monopoly on Blockchains},
  doi      = {10.48550/arXiv.2311.12731},
  abstract = {We study the following problem that is motivated by Blockchains where “miners” are serially given the monopoly for assembling transactions into the next block. Our model has a single good that is sold repeatedly every day where new demand for the good arrives every day. The novel element in our model is that all unsatisfied demand from one day remains in the system and is added to the new demand of the next day. Every day there is a new monopolist that gets to sell a fixed supply s of the good and naturally chooses to do so at the monopolist’s price for the combined demand. What will the dynamics of the prices chosen by the sequence of monopolists be? What level of efficiency will be obtained in the long term? We start with a non-strategic analysis of users’ behavior and our main result shows that prices keep fluctuating wildly and this is an endogenous property of the model and happens even when demand is stable with nothing stochastic in the model. These price fluctuations underscore the necessity of an analysis under strategic behavior of the users, which we show results in the prices being stable at the market equilibrium price.},
  year     = {2023},
}

@misc{roughgarden2020transaction,
  author        = {Tim Roughgarden},
  title         = {Transaction Fee Mechanism Design for the Ethereum Blockchain: An Economic Analysis of {EIP-1559}},
  doi           = {10.48550/arxiv.2012.00854},
  archiveprefix = {arxiv},
  bibsource     = {dblp computer science bibliography, https://dblp.org},
  biburl        = {https://dblp.org/rec/journals/corr/abs-2012-00854.bib},
  journal       = {CoRR},
  numpages      = {58},
  timestamp     = {Fri, 04 Dec 2020 12:07:23 +0100},
  year          = {2020},
}

@article{roughgarden2024transaction,
  author     = {Roughgarden, Tim},
  title      = {Transaction Fee Mechanism Design},
  doi        = {10.1145/3674143},
  issn       = {0004-5411},
  number     = {4},
  volume     = {71},
  abstract   = {Demand for blockchains such as Bitcoin and Ethereum is far larger than supply, necessitating a mechanism that selects a subset of transactions to include “on-chain” from the pool of all pending transactions. This article investigates the problem of designing a blockchain transaction fee mechanism through the lens of mechanism design. We introduce two new forms of incentive compatibility that capture some of the idiosyncrasies of the blockchain setting, one (MMIC) that protects against deviations by profit-maximizing miners and one (OCA-proofness) that protects against off-chain collusion between miners and users.This study is immediately applicable to a major change to Ethereum’s transaction fee mechanism, made on August 5, 2021, based on a proposal called “EIP-1559.” Originally, Ethereum’s transaction fee mechanism was a first-price (pay-as-bid) auction. EIP-1559 suggested making several tightly coupled changes, including the introduction of variable-size blocks, a history-dependent reserve price, and the burning of a significant portion of the transaction fees. We prove that this new mechanism earns an impressive report card: it satisfies the MMIC and OCA-proofness conditions, and is also dominant-strategy incentive compatible (DSIC) except when there is a sudden demand spike. We also introduce an alternative design, the “tipless mechanism,” which offers an incomparable slate of incentive-compatibility guarantees—it is MMIC and DSIC, and OCA-proof unless in the midst of a demand spike.},
  address    = {NY, USA},
  articleno  = {30},
  issue_date = {August 2024},
  journal    = {J. ACM},
  keywords   = {Blockchain protocols, transaction fee mechanisms, Ethereum},
  month      = aug,
  numpages   = {25},
  publisher  = {ACM},
  year       = {2024},
}

@inproceedings{shi2023what,
  author    = {Elaine Shi and Hao Chung and Ke Wu},
  booktitle = {14th Innovations in Theoretical Computer Science Conference},
  title     = {What Can Cryptography Do for Decentralized Mechanism Design?},
  doi       = {10.4230/LIPIcs.ITCS.2023.97},
  pages     = {97:1--97:22},
  publisher = {Leibniz-Zentrum f{\"{u}}r Informatik},
  series    = {LIPIcs},
  volume    = {251},
  address   = {Germany},
  bibsource = {dblp computer science bibliography, https://dblp.org},
  biburl    = {https://dblp.org/rec/conf/innovations/ShiC023.bib},
  timestamp = {Thu, 02 Feb 2023 16:55:54 +0100},
  year      = {2023},
}

@article{Stigler1964,
  author    = {George J. Stigler},
  journal   = {Journal of Political Economy},
  title     = {A Theory of Oligopoly},
  year      = {1964},
  issn      = {00223808, 1537534X},
  number    = {1},
  pages     = {44--61},
  volume    = {72},
  doi       = {10.1086/258853},
  publisher = {University of Chicago Press},
}

@inproceedings{yaish2023uncle,
  author    = {Yaish, Aviv and Stern, Gilad and Zohar, Aviv},
  booktitle = {Proceedings of the SIGSAC Conference on Computer and Communications Security (CCS '23)},
  title     = {Uncle Maker: (Time)Stamping Out The Competition in Ethereum},
  doi       = {10.1145/3576915.3616674},
  isbn      = {9798400700507},
  location  = {Copenhagen, Denmark},
  publisher = {ACM},
  series    = {CCS '23},
  abstract  = {We present and analyze an attack on Ethereum 1's consensus mechanism, which allows miners to obtain higher mining rewards compared to their honest peers. This attack is novel in that it relies on manipulating block timestamps and the difficulty-adjustment algorithm (DAA) to give the miner an advantage whenever block races ensue. We call our attack Uncle Maker, as it induces a higher rate of uncle blocks. We describe several variants of the attack. Among these, one that is risk-free for miners.
               
               Our attack differs from past attacks such as Selfish Mining, that have been shown to be profitable but were never observed in practice: We analyze data from Ethereum's blockchain and show that some of Ethereum's miners have been actively running a variant of this attack for several years without being detected, making this the first evidence of miner manipulation of a major consensus mechanism. We present our evidence, as well as estimates of the profits gained by attackers, at the expense of honest miners.
               
               Since several blockchains are still running Ethereum 1's protocol, we suggest concrete fixes and implement them as a patch for geth.},
  address   = {NY, USA},
  keywords  = {cryptocurrency, blockchain, proof of work, consensus, security},
  numpages  = {15},
  pages = {135–149},
  year      = {2023},
}

@article{klempererWhatMatters,
Author = {Klemperer, Paul},
Title = {What Really Matters in Auction Design},
Journal = {Journal of Economic Perspectives},
Volume = {16},
Number = {1},
Year = {2002},
Month = {March},
Pages = {169–189},
DOI = {10.1257/0895330027166},
URL = {https://www.aeaweb.org/articles?id=10.1257/0895330027166}}

@article{harvardLawCartels,
title = "Collusion By Blockchain And Smart Contracts",
author = "T.A.M. Schrepel",
year = "2019",
language = "Undefined/Unknown",
journal = "Harvard Journal of Law \& Technology",
issn = "2153-263X",
}

@article{minersCollusionParlour,
  title={Miner collusion and the bitcoin protocol},
  author={Lehar, Alfred and Parlour, Christine A},
  journal={Available at SSRN 3559894},
  year={2020}
}

@article{flashboys2,
  author       = {Philip Daian and
                  Steven Goldfeder and
                  Tyler Kell and
                  Yunqi Li and
                  Xueyuan Zhao and
                  Iddo Bentov and
                  Lorenz Breidenbach and
                  Ari Juels},
  title        = {Flash Boys 2.0: Frontrunning, Transaction Reordering, and Consensus
                  Instability in Decentralized Exchanges},
  journal      = {CoRR},
  volume       = {abs/1904.05234},
  year         = {2019},
  url          = {http://arxiv.org/abs/1904.05234},
  eprinttype    = {arXiv},
  eprint       = {1904.05234},
  timestamp    = {Wed, 01 Sep 2021 17:26:23 +0200},
  biburl       = {https://dblp.org/rec/journals/corr/abs-1904-05234.bib},
  bibsource    = {dblp computer science bibliography, https://dblp.org}
}

@inproceedings{flashbots2022flashbots, author = {Weintraub, Ben and Torres, Christof Ferreira and Nita-Rotaru, Cristina and State, Radu}, title = {A flash(bot) in the pan: measuring maximal extractable value in private pools}, year = {2022}, isbn = {9781450392594}, publisher = {Association for Computing Machinery}, address = {New York, NY, USA}, url = {https://doi.org/10.1145/3517745.3561448}, doi = {10.1145/3517745.3561448},  booktitle = {Proceedings of the 22nd ACM Internet Measurement Conference}, pages = {458–471}, numpages = {14}, location = {Nice, France}, series = {IMC '22} }

@article{statisticalDetectionSelfishMining,
  title={Statistical detection of selfish mining in proof-of-work blockchain systems},
  author={Li, Sheng-Nan and Campajola, Carlo and Tessone, Claudio J},
  journal={Scientific Reports},
  volume={14},
  number={1},
  pages={6251},
  year={2024},
  publisher={Nature Publishing Group UK London}
}

@inproceedings{NFTSecurity, author = {Das, Dipanjan and Bose, Priyanka and Ruaro, Nicola and Kruegel, Christopher and Vigna, Giovanni}, title = {Understanding Security Issues in the NFT Ecosystem}, year = {2022}, isbn = {9781450394505}, publisher = {Association for Computing Machinery}, address = {New York, NY, USA}, url = {https://doi.org/10.1145/3548606.3559342}, doi = {10.1145/3548606.3559342},  booktitle = {Proceedings of the 2022 ACM SIGSAC Conference on Computer and Communications Security}, pages = {667–681}, numpages = {15}, keywords = {blockchain, decentralized finance (defi), non-fungible token (nft)}, location = {Los Angeles, CA, USA}, series = {CCS '22} }

@article{frassGreer77,
 ISSN = {00221821, 14676451},
 URL = {http://www.jstor.org/stable/2098328},
 author = {Arthur G. Frass and Douglas F. Greer},
 journal = {The Journal of Industrial Economics},
 number = {1},
 pages = {21--44},
 publisher = {Wiley},
 title = {Market Structure and Price Collusion: An Empirical Analysis},
 urldate = {2025-09-14},
 volume = {26},
 year = {1977}
}

@article{hayKelley74,
author = {Hay, George A. and Kelley, Daniel},
title = {An Empirical Survey of Price Fixing Conspiracies},
journal = {The Journal of Law and Economics},
volume = {17},
number = {1},
pages = {13-38},
year = {1974},
doi = {10.1086/466782},

URL = {   
        https://doi.org/10.1086/466782
},
eprint = {  
        https://doi.org/10.1086/466782
}

}

@article{posner70,
author = {Posner, Richard A.},
title = {A Statistical Study of Antitrust Enforcement},
journal = {The Journal of Law and Economics},
volume = {13},
number = {2},
pages = {365-419},
year = {1970},
doi = {10.1086/466698},

URL = { 
        https://doi.org/10.1086/466698
},
eprint = { 
        https://doi.org/10.1086/466698
}
}

@article{compteJehiel,
 ISSN = {00223808, 1537534X},
 URL = {http://www.jstor.org/stable/10.1086/653093},
 abstract = {We study collective search processes in which members of a committee decide whether to accept the current proposal or continue searching. The acceptance decision is made according to majority rule. We study which members have more impact on the decision, as well as the degree of randomness of the decision. When proposals vary along a single dimension, the acceptance set is small, and at most two members determine the outcome whatever the majority rule. When proposals vary along many dimensions, the acceptance set is large except under unanimity and all members affect the distribution of decisions. Various implications are drawn.},
 author = {Olivier Compte and Philippe Jehiel},
 journal = {Journal of Political Economy},
 number = {2},
 pages = {189--221},
 publisher = {The University of Chicago Press},
 title = {Bargaining and Majority Rules: A Collective Search Perspective},
 urldate = {2025-09-14},
 volume = {118},
 year = {2010}
}

@misc{roughgarden2020eip1559,
      title={Transaction Fee Mechanism Design for the Ethereum Blockchain: An Economic Analysis of EIP-1559}, 
      author={Tim Roughgarden},
      year={2020},
      eprint={2012.00854},
      archivePrefix={arXiv},
      primaryClass={cs.GT},
      url={https://arxiv.org/abs/2012.00854}, 
}

@misc{roughgarden8reasons,
  title = {8 Reasons Why Blockchain Mechanism Design is Hard},
  author={{Tim Roughgarden}},
  COMMENThowpublished = {\url{https://a16zcrypto.com/posts/article/8-reasons-why-blockchain-mechanism-design-is-hard/}},
  year={2024},
  note = {Accessed: 2025-09-15}
}

@article{vickrey1961,
author = {Vickrey, William},
title = {COUNTERSPECULATION, AUCTIONS, AND COMPETITIVE SEALED TENDERS},
journal = {The Journal of Finance},
volume = {16},
number = {1},
pages = {8-37},
doi = {https://doi.org/10.1111/j.1540-6261.1961.tb02789.x},
url = {https://onlinelibrary.wiley.com/doi/abs/10.1111/j.1540-6261.1961.tb02789.x},
eprint = {https://onlinelibrary.wiley.com/doi/pdf/10.1111/j.1540-6261.1961.tb02789.x},
year = {1961}
}

@article{selfishMining, author = {Eyal, Ittay and Sirer, Emin G\"{u}n}, title = {Majority is not enough: bitcoin mining is vulnerable}, year = {2018}, issue_date = {July 2018}, publisher = {Association for Computing Machinery}, address = {New York, NY, USA}, volume = {61}, number = {7}, issn = {0001-0782}, url = {https://doi.org/10.1145/3212998}, doi = {10.1145/3212998}, journal = {Commun. ACM}, month = jun, pages = {95–102}, numpages = {8} }

@article{porterZona1993,
 ISSN = {00223808, 1537534X},
 URL = {http://www.jstor.org/stable/2138774},
 abstract = {This paper examines bidding in auctions for state highway construction contracts, in order to determine whether bid rigging occurred. Detection of collusion is possible because of limited participation in the collusive scheme. Collusion did not take the form of a bid rotation scheme. Instead, several ring members bid on most jobs. One was a serious bidder, and the others submitted phony higher bids. The bids of noncartel firms, as well as their rank distribution, were related to cost measures. In contrast, the rank distribution of higher cartel bids was unrelated to similar cost measures and differed from that of the low cartel bid.},
 author = {Robert H. Porter and J. Douglas Zona},
 journal = {Journal of Political Economy},
 number = {3},
 pages = {518--538},
 publisher = {University of Chicago Press},
 title = {Detection of Bid Rigging in Procurement Auctions},
 urldate = {2025-09-15},
 volume = {101},
 year = {1993}
}

@incollection{krishna2002,
title = {Chapter Eleven - Bidding Rings},
editor = {Vijay Krishna},
booktitle = {Auction Theory (Second Edition)},
publisher = {Academic Press},
edition = {Second Edition},
address = {San Diego},
pages = {157-169},
year = {2010},
isbn = {978-0-12-374507-1},
doi = {https://doi.org/10.1016/B978-0-12-374507-1.00028-5},
url = {https://www.sciencedirect.com/science/article/pii/B9780123745071000285},
author = {Vijay Krishna}
}

@article{grahamMarshall,
 ISSN = {00223808, 1537534X},
 URL = {http://www.jstor.org/stable/1831119},
 author = {Daniel A. Graham and Robert C. Marshall},
 journal = {Journal of Political Economy},
 number = {6},
 pages = {1217--1239},
 publisher = {University of Chicago Press},
 title = {Collusive Bidder Behavior at Single-Object Second-Price and English Auctions},
 urldate = {2025-09-15},
 volume = {95},
 year = {1987}
}

@TechReport{musolf2025,
type={CESifo Working Paper Series},
institution={CESifo},
author={Bruno Baránek and Leon Musolff and Vitezslav Titl},
title={Detection of Collusive Networks in Multistage Auctions},
year={2025},
number={12073},
abstract={We develop a method for detecting cartels in multistage auctions. Our approach allows a firm to be collusive when facing members of its cartel yet competitive when facing others. Intuitively, as initial bids are shaded, close initial bids not only imply similar costs but also provide an incentive to undercut. We detect firm pairs that ignore this incentive when facing each other. Our algorithm predicts Ukraineâ€™s Antimonopoly Committeeâ€™s sanctions: firm pairs classified as collusive are 8.98 times more likely (standard error 2.65 times) to be sanctioned. It also uncovers additional collusion: 1,857 collusive firms participate in 15.57\% of auctions, increasing costs by 1.95\%.},
keywords={public procurement; collusion; online markets},
doi={None},
url={https://ideas.repec.org/p/ces/ceswps/_12073.html},
}

@article{kawaiNa,
author = {Sylvain Chassang AND Kei Kawai AND Jun Nakabayashi AND Juan Ortner},
title = {Robust Screens for Noncompetitive Bidding in Procurement Auctions},
journal = {Econometrica},
volume = {90},
number = {1},
pages = {315-346},
doi = {https://doi.org/10.3982/ECTA17155},
url = {https://onlinelibrary.wiley.com/doi/abs/10.3982/ECTA17155},
eprint = {https://onlinelibrary.wiley.com/doi/pdf/10.3982/ECTA17155},
abstract = {<p>We document a novel bidding pattern observed in procurement auctions from Japan: winning bids tend to be isolated, and there is a missing mass of close losing bids. This pattern is suspicious in the following sense: its extreme forms are inconsistent with competitive behavior under arbitrary information structures. Building on this observation, we develop systematic tests of competitive behavior in procurement auctions that allow for general information structures as well as nonstationary unobserved heterogeneity. We provide an empirical exploration of our tests, and show they can help identify other suspicious patterns in the data.</p>},
year = {2022}
}

@ARTICLE{agranovYariv,
title = {Collusion through communication in auctions},
author = {Agranov, Marina and Yariv, Leeat},
year = {2018},
journal = {Games and Economic Behavior},
volume = {107},
number = {C},
pages = {93-108},
abstract = {We study the extent to which communication can serve as a collusion device in one-shot first- and second-price sealed-bid auctions. In an array of laboratory experiments we vary the amount of interactions (communication and/or transfers without commitment) available to bidders. We find that communication alone leads to statistically significant but limited price drops. When, in addition, bidders can exchange transfers, revenues decline substantially, with over 70% of our experimental auctions culminating in the object being sold for approximately the minimal price. Furthermore, the effects of communication and transfers are similar across auction formats. We contrast these results with those generated in repeated auctions. By and large, repeated auctions yield lower collusion and lower efficiency levels.},
keywords = {Auctions; Communication; Collusion; Experiments;},
url = {https://EconPapers.repec.org/RePEc:eee:gamebe:v:107:y:2018:i:c:p:93-108}
}

@misc{golowichLi2022,
      title={On the Computational Properties of Obviously Strategy-Proof Mechanisms}, 
      author={Louis Golowich and Shengwu Li},
      year={2022},
      eprint={2101.05149},
      archivePrefix={arXiv},
      primaryClass={econ.TH},
      url={https://arxiv.org/abs/2101.05149}, 
}

@article{rubinsteinBargaining,
 ISSN = {00129682, 14680262},
 URL = {http://www.jstor.org/stable/1912531},
 author = {Ariel Rubinstein},
 journal = {Econometrica},
 number = {1},
 pages = {97--109},
 publisher = {[Wiley, Econometric Society]},
 title = {Perfect Equilibrium in a Bargaining Model},
 urldate = {2025-09-15},
 volume = {50},
 year = {1982}
}

@article{welfareIncreasingCollusion,
  title={Transaction Fee Mechanisms Robust to Welfare-Increasing Collusion},
  author={Gafni, Yotam and Yaish, Aviv},
  journal={Available at SSRN 5165100},
year={2025}
}

@article{candleAuctionsPolkadot,
  title={The Candle Auction in the Field and the Lab},
  author={Gehrlein, Jonas and H{\"a}fner, Samuel and Oechssler, J{\"o}rg},
  journal={Available at SSRN 5109856},
  year={2025}
}

@misc{smartCollusion,
      author = {Mahimna Kelkar and Aadityan Ganesh and Aditi Partap and Joseph Bonneau and S. Matthew Weinberg},
      title = {Breaking Omertà: On Threshold Cryptography, Smart Collusion, and Whistleblowing},
      howpublished = {Cryptology {ePrint} Archive, Paper 2025/1582},
      year = {2025},
      url = {https://eprint.iacr.org/2025/1582}
}

@INPROCEEDINGS{ethereumLatency,
  author={Zhang, Lin and Lee, Brian and Ye, Yuhang and Qiao, Yuansong},
  booktitle={2021 11th IFIP International Conference on New Technologies, Mobility and Security (NTMS)}, 
  title={Evaluation of Ethereum End-to-end Transaction Latency}, 
  year={2021},
  volume={},
  number={},
  pages={1-5},
  keywords={Smart contracts;Web and internet services;Blockchain;Quality of service;Internet;Delays;Security;Ethereum;DApps;gas price;End-to-end transaction latency},
  doi={10.1109/NTMS49979.2021.9432676}}

@misc{zhang2024transactionfeeestimationbitcoin,
      title={Transaction Fee Estimation in the Bitcoin System}, 
      author={Limeng Zhang and Rui Zhou and Qing Liu and Chengfei Liu and M. Ali Babar},
      year={2024},
      eprint={2405.15293},
      archivePrefix={arXiv},
      primaryClass={cs.CR},
      url={https://arxiv.org/abs/2405.15293}, 
}

@misc{garimidi2025dag,
      title={Transaction Fee Mechanism Design for Leaderless Blockchain Protocols}, 
      author={Pranav Garimidi and Lioba Heimbach and Tim Roughgarden},
      year={2025},
      eprint={2505.17885},
      archivePrefix={arXiv},
      primaryClass={cs.GT},
      url={https://arxiv.org/abs/2505.17885}, 
}

@InProceedings{garimidiPBS,
author="Bahrani, Maryam
and Garimidi, Pranav
and Roughgarden, Tim",
editor="Clark, Jeremy
and Shi, Elaine",
title="Centralization in Block-Building and Proposer-Builder Separation",
booktitle="Financial Cryptography and Data Security",
year="2025",
publisher="Springer Nature Switzerland",
address="Cham",
pages="331--349",
isbn="978-3-031-78676-1"
}

\appendix

\begin{comment}
\section{$1$-SCP and $2$-SCP Separation}
\label{sec:discount_separation}

\cite{chung2024collusion} use the discount auction to show separation between $1$-SCP and Global-SCP mechanisms. We observe that it also shows a separation between $1$-SCP and $2$-SCP. 

\begin{definition}
\textit{The Discount Auction.}
For some $r > 0$, let $$f(t) = \begin{cases} r & t \leq 10 \\ \frac{r}{2} & t > 10\end{cases}. $$ Given bids $b_1, \ldots, b_n$, let $t^*$ be the maximal integer so that there are $t^*$ bids with value greater than $f(t^*)$. Confirm all these bids, with payment $f(t^*)$ each that is fully burned. 
\end{definition}

\begin{lemma}
    The discount auction is $1$-SCP but not $2$-SCP. 
\end{lemma}
\begin{proof}
    \cite{chung2024collusion} prove that the discount auction is $1$-SCP. As for $2$-SCP, consider $11$ bidders with values $2\cdot r, \ldots, 2\cdot r, \frac{r}{2} - \epsilon$. Consider a collusion between bidder $11, 1$ and the miner, where bidder $11$ raises their bid by $2\epsilon$. The joint utility of the colluders satisfies $u_{miner}(A) + u_1(A ; A) + u_{11}(A ; A) = 0 + (2\cdot r - r) + 0 < 0 + (2 \cdot r - \frac{r}{2}) - \epsilon = u_{miner}(B) + u_1(A ; A) + u_{11}(A ; A)$, if we take $\epsilon < \frac{r}{2}$. 
\end{proof}
\end{comment}

\section{Missing Proofs}

\ActivePassiveMiner*

\begin{proof}
    Consider a beneficial collusion in the active miner model. It consists of bidders changing their bid, the miner omitting bids and adding fake bids. Consider that for every added fake bid $b'_j$, we instead consider a bidder with value $0$ that changes their bid to $b'_j$, and add it to the collusion. Then, the joint utility of the revised collusion is the same before and after, and is thus also beneficial, without the miner adding fake bids. Regarding omitting bids, consider the setting $X$ where the miner omits all the bids it omits from $A$, but the bidders do not change their bids. We claim that there is either a beneficial collusion of the miner and the coalition $C$ from $A$ to $X$, or from $X$ to $B$. Assume towards contradiction otherwise. Then, the SCP conditions yield:
    \begin{equation}
    \label{eq:sophies_choice}
    \begin{split}
        u_{miner}(A) + u_C(A ; A) \geq u_{miner}(X) + u_C(X ; A), \\
        u_{miner}(X) + u_C(X ; X) \geq u_{miner}(B) + u_C(B ; X).
    \end{split}
    \end{equation}

    However, notice that $u_C(X ; X) = u_C(X ; A), u_C(B ; X) = u_C(B ; A)$, since any confirmed bid would come out of bidders that were not omitted, and their values do not change. Thus, Eq.~\ref{eq:sophies_choice} yields $u_{miner}(A) + u_C(A ; A) \geq u_{miner}(B) + u_C(B ; A)$, which is in contradiction to there being a beneficial collusion from $A$ to $B$. Thus, at least one of the two inequalities of Eq.~\ref{eq:sophies_choice} must be violated.

    If the first inequality of Eq.~\ref{eq:sophies_choice} is violated, then there is a beneficial collusion where no bidder moves, and the miner omits some bids. If we have $u_{miner}(X) > u_{miner}(A)$, then let $i$ be the agent with maximal $u_i(X ; A) - u_i(A ; A)$, a collusion of the miner with this user is a $1$-SC. If $u_{miner}(X) \leq u_{miner}(A)$, then there is a collusion where the miner adds fake bidders to $X$ (rather than omit these bidders from $A$), and as noted we can transform this into a collusion where $0$ bidders change their bids rather than the miner manipulates. 

    If the second inequality of Eq.~\ref{eq:sophies_choice} is violated, then we have a passive miner beneficial collusion. 
\end{proof}

\begin{claim}
    Proof of (*) in Lemma~\ref{lem:single_mover}
\end{claim}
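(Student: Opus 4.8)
Recall that $(\ast)$ is the assertion, inside Eq.~\ref{eq:final_form}, that
\begin{equation*}
\sum_{i\in\sigma}(b_i-p_i^B)-\Bigl(\sum_{i\in\sigma\setminus\{i^*\}}(b'_i-p_i^B)+1[i^*\in\sigma]\cdot(b_{i^*}-p_{i^*}^B)\Bigr)+\sum_{i\in\sigma\setminus\{i^*\}}(b'_i-b_i)=0 .
\end{equation*}
The plan is to prove this as a purely algebraic cancellation, splitting on the indicator $1[i^*\in\sigma]$. All the substantive content — that $\sigma=U_I\cup D_I$ is the confirmed set of $B$, that $u_C(B;A)=\sum_{i\in\sigma}(b_i-p_i^B)$ and that $u_C(B;X^{i^*}_{after})$ equals the bracketed term, and that only bidders of $\sigma$ contribute $b'_i-b_i$ while the rest contribute $0$ — is already established earlier in the proof of Lemma~\ref{lem:single_mover} via Lemma~\ref{lem:car_parking}; $(\ast)$ itself is only the final bookkeeping.

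First I would dispose of the case $i^*\notin\sigma$. Here the indicator vanishes and $\sigma\setminus\{i^*\}=\sigma$, so the three terms collapse to a single sum over $\sigma$ that I combine index-by-index:
\begin{equation*}
\sum_{i\in\sigma}\bigl[(b_i-p_i^B)-(b'_i-p_i^B)+(b'_i-b_i)\bigr]=\sum_{i\in\sigma}0=0 ,
\end{equation*}
since within each summand the payment $p_i^B$ cancels and the bid difference $(b_i-b'_i)+(b'_i-b_i)$ is zero.

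Next I would treat $i^*\in\sigma$. I would peel the $i^*$ term off the first sum, writing $\sum_{i\in\sigma}(b_i-p_i^B)=\sum_{i\in\sigma\setminus\{i^*\}}(b_i-p_i^B)+(b_{i^*}-p_{i^*}^B)$, and note that the isolated $(b_{i^*}-p_{i^*}^B)$ cancels exactly the subtracted indicator term $1[i^*\in\sigma]\cdot(b_{i^*}-p_{i^*}^B)=(b_{i^*}-p_{i^*}^B)$. What survives is once more a single index-by-index sum over $\sigma\setminus\{i^*\}$, identical in form to the previous case, hence equal to $0$.

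The computation is elementary and presents no real obstacle; the only point requiring care is the treatment of $i^*$ — making sure the peeled-off $(b_{i^*}-p_{i^*}^B)$ matches the indicator term when $i^*\in\sigma$, and remembering that $\sigma\setminus\{i^*\}=\sigma$ when $i^*\notin\sigma$ so that no term is spuriously dropped. Both cases reduce to the same per-index cancellation, which yields $(\ast)$.
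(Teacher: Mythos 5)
Your proof is correct and takes essentially the same route as the paper's: a case split on whether $i^*\in\sigma$, followed by the per-index cancellation of the $p_i^B$ terms and of $(b_i-b'_i)+(b'_i-b_i)$. Nothing is missing.
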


\begin{proof}
    If $i^* \in \sigma$, then 

\[
\begin{split}
    & \sum_{i\in \sigma} (b_i - p_i^B) - \left( \sum_{i\in \sigma \setminus \{i^*\}} (b'_i - p_i^B) + 1[i^* \in \sigma] \cdot (b_{i^*} - p_{i^*}^B) \right) \\
    & + \sum_{i\in \sigma \setminus \{i^*\}} (b'_i - b_i) \\
    & = \sum_{i\in \sigma} (b_i - p_i^B) - \left( \sum_{i\in \sigma \setminus \{i^*\}} (b'_i - p_i^B) + (b_{i^*} - p_{i^*}^B) \right) + \sum_{i\in \sigma \setminus \{i^*\}} (b'_i - b_i) \\
    & = \sum_{i\in \sigma} b_i - \left( \sum_{i\in \sigma \setminus \{i^*\}} b'_i  + b_{i^*} \right) + \sum_{i\in \sigma \setminus \{i^*\}} (b'_i - b_i) = 0,
\end{split}
\]

and if $i^* \not \in \sigma$, then
\[
\begin{split}
    & \sum_{i\in \sigma} (b_i - p_i^B) - \left( \sum_{i\in \sigma \setminus \{i^*\}} (b'_i - p_i^B) + 1[i^* \in \sigma] \cdot (b_{i^*} - p_{i^*}^B) \right) \\
    & + \sum_{i\in \sigma \setminus \{i^*\}} (b'_i - b_i) \\
    & = \sum_{i\in \sigma} (b_i - p_i^B) - \left( \sum_{i\in \sigma} (b'_i - p_i^B) \right) + \sum_{i\in \sigma } (b'_i - b_i) \\
    & = 0.
\end{split}
\]

\end{proof}

\SingleMoverIncrease*

\begin{proof}
    Let $A$ be the setting before the collusion and $B$ the setting after, and assume $a_i^B = 0$. Then, we write two inequalities for the $1$-SCP condition for the miner and agent $i$, when increasing from $b_i$ to $b'_i$, and vice versa.

\begin{equation}
    u_{miner}(B) \leq u_{miner}(A) + u_i(A ; A) = u_{miner}(A) + a_i^A \cdot (b_i - p_i^A),
\end{equation}

\begin{equation}
    u_{miner}(A) + a_i^A \cdot (b'_i - p_i^A) = u_{miner}(A) + u_i(A ; B) \leq u_{miner}(B). 
\end{equation}

Overall, this yields $a_i^A \cdot b'_i \leq a_i^A \cdot b_i$, which only holds when $a_i^A = 0$. I.e., it would violate $1$-SCP if $a_i^A = 1, a_i^B = 0$. For the case of $a_i^A = a_i^B$, the two inequalities yield $u_{miner}(A) = u_{miner}(B)$, and thus $u_{miner}(A) + u_i(A ; A) = u_{miner}(B) + u_i(B ; A)$. We show that we get the same equality if $a_i^B = a_i^A = 1$:

Consider if $a_i^B = a_i^A = 1$. We again write two inequalities for the $1$-SCP condition for the miner and agent $i$, when increasing from $b_i$ to $b'_i$, and vice versa. 

\begin{equation}
    u_{miner}(B) + b_i - p_i^B \leq  u_{miner}(A) + b_i - p_i^A,
\end{equation}

\begin{equation}
    u_{miner}(A) + b'_i - p_i^A \leq u_{miner}(B) + b'_i - p_i^B.
\end{equation}

This yields $u_{miner}(A) - p_i^A = u_{miner}(B) - p_i^B$ and thus $u_{miner}(A) + u_i(A ; A) = u_{miner}(B) + u_i(B ; A)$. 

Now, we show that given this condition, that we have shown to hold whenever $a_i^B = a_i^A$, we can find a $2$-SCP if the collusion condition for the entire collusion of the miner and the coalition $C$ holds:

\begin{equation}
    u_{miner}(B) + \sum_{i\in C} u_i(B ; A) >  u_{miner}(A) + \sum_{i\in C} u_i(A ; A), 
\end{equation}

yields:
$$\sum_{j \in C \setminus \{i\}} (u_j(B;A) - u_j(A ; A) ) >  0,$$

and so there must be some $j$ with $u_j(B;A) - u_j(A ; A) > 0$. If we consider the coalition of the miner, $i$ and $j$, we get a $2$-SCP. 

We then conclude that bidder $i$ is unconfirmed when bidding $b_i$, confirmed when bidding $b'_i$, and if we let $b_i^*$ be the critical bid so that the bidder becomes confirmed (which must exist by our previous discussion) then moving from $b_i$ to any bid lower than $b_i^*$, in particular, $b_i^* - \frac{\epsilon}{2}$, does not change the value of the coalition, and moving from $b_i^*+\frac{\epsilon}{2}$ to $b_i'$ also does not change the value of the coalition, and thus moving from $b_i^* - \frac{\epsilon}{2}$ to $b_i^*+\frac{\epsilon}{2}$ must increase the coalition value by $\Delta$. 

\end{proof}

\SCPTwoSCP*

\begin{proof}

    Every SCP mechanism is also $2$-SCP by logical inclusion. We thus focus on showing that if there is an SCP against a mechanism, there is also a $2$-SCP against the mechanism. We assume towards contradiction the contrary: There is an SCP against a mechanism, but the mechanism is $2$-SCP-proof. 

    By Lemma~\ref{lem:single_mover}, if there is any collusion against the mechanism, then we have a collusion where a single agent $i$ changes their bid.

Consider if $b_i < b'_i$ (the single mover increases their bid). By Lemma~\ref{lem:single_mover_increase}, we can assume that $b'_i - b_i \leq \epsilon$ for any $\epsilon > 0$, and $a_i^B = 1, a_i^A = 0$. By the two $1$-SCP inequalities (increasing from $b_i$ to $b'_i$ and vice versa) we have:

\begin{equation}
    u_{miner}(B) + u_i(B ; A) = u_{miner}(B) + b_i - p_i^A \leq u_{miner}(A),
\end{equation}

\begin{equation}
\begin{split}
    & u_{miner}(A) \leq u_{miner}(B) + b'_i - p_i^A \\
    & < u_{miner}(B) + b_i - p_i^A + \epsilon = u_{miner}(B) + u_i(B ; A) + \epsilon. 
    \end{split}
\end{equation}

Similarly, if $b_i > b'_i$ (the single mover decreases their bid), we can assume $b_i - b'_i \leq \epsilon$, and consider the $1$-SCP condition for the miner and agent $i$, when decreasing from $b_i$ to $b'_i$, where $b_i < b'_i + \epsilon$. Then,

    $$u_{miner}(B) \leq u_{miner}(A) + b_i - p_i^A, $$

$$ u_{miner}(A) + b'_i - p_i^A \leq u_{miner}(B),
$$

and so 

\[
\begin{split}
& u_{miner}(B) + u_i(B ; A) = u_{miner}(B) + p_i^A - b_i \leq u_{miner}(A) \\
& \leq u_{miner}(B) + p_i^A - b'_i  = u_{miner}(B) + p_i^A - b_i + \epsilon \\
& = u_{miner}(B) + u_i(B ; A) + \epsilon,
\end{split}
\]
and so in both cases we have:
\begin{equation}
\label{eq:eps_condition}
  u_{miner}(B) + u_i(B ; A) \leq u_{miner}(A) \leq   u_{miner}(B) + u_i(B ; A) + \epsilon. 
\end{equation}

Then, the collusion condition for the entire collusion of the miner and the coalition $C$ yields:

\begin{equation}
\begin{split}
    & u_{miner}(B) + \sum_{j\in C} u_j(B ; A) \geq u_{miner}(A) + \sum_{j\in C} u_j(A ; A) + \Delta \\
    & \geq u_{miner}(B) + u_i(B ; A) + \sum_{j\in C \setminus \{i\}} u_j(A ; A) + \Delta - \epsilon, 
    \end{split}
\end{equation}

where $\Delta$ is the original difference in the collusion before and after the value. If we eliminate $u_{miner}(B) + u_i(B ; A) + \sum_{j\in C \setminus \{i\}}u_j(A ; A)$ from both sides we get:

$$\sum_{j\in C \setminus \{i\}} \left( u_j(B ; A) - u_j(A ; A) \right) \geq \Delta - \epsilon > 2|C| \cdot \epsilon,$$

For a choice of $\epsilon < \frac{\Delta}{2|C| + 1}$, there must thus be some agent $j$ with $u_j(B ; A) - u_j(A ; A) \geq 2 \epsilon$, or the sum of utility differences will be too small to have a $\Delta$ difference in the coalition value. We get a $2$-SCP of the miner, agent $i$ and agent $j$:

\begin{equation}
\begin{split}
    & u_{miner}(B) + u_i(B ; A) + u_j(B ; A) \geq u_{miner}(A) + u_i(A ; A) + \epsilon + u_j(A ; A) \\
    & > u_{miner}(A) + u_i(A ; A) + u_j(A ; A).  
    \end{split}
\end{equation}
\end{proof}

\FullyBurnedPostedPrices*

\begin{proof}

%Given the $2$-SCP $=$ SCP characterization, the lemma is equivalent to saying that no UIC + SCP mechanism exists for infinite block size. Since \cite{} show that SCP $\implies$ Global-SCP, we can use their characterization for UIC + Global-SCP mechanisms. The characterization is that there is a marginal burn function for each confirmed unit (with some condition on the marginal burn, which is unimportant for the proof), where the confirmation maximizes joint welfare and the payment rule is the uniquely determined UIC payment rule, with some explicit formula, which importantly, \textit{depends on the unconfirmed bids}. Thus, there is a $1$-SCP

%The proof largely follows the same construction of Theorem~3.9 of \cite{}, which shows an equivalence between the class of fully burned posted-price mechanisms, and mechanisms that are Global-SCP (a different collusion notion than SCP), UIC, and miner incentive-compatible (MIC). 

(\emph{Fully Burned Posted-prices} $\implies$ UIC + $2$-SCP)
Posted-prices are UIC as they satisfy monotonicity and critical bid \cite{myerson1981optimal}. Burning the payments does not affect bidder incentives, and so this is true also when the payments are burned. The miner utility is always zero (since all payments are burned), and so the value of any coalition is the utility of its bidders. For any coalition $C$ of bidders, their utility is maximized when each of them is confirmed according to $1[b_i \geq r]$, and so it is maximized in the original mechanism without any collusion. 

(UIC + $2$-SCP $\implies$ \emph{Fully Burned Posted-prices})
We know that UIC + $2$-SCP $\stackrel{\text{Theorem~\ref{thm:scp_2scp}}}{\implies}$ UIC + SCP $\stackrel{\text{\cite{welfareIncreasingCollusion}}}{\implies}$ UIC + Global-SCP. Thus, the characterization of \cite{welfareIncreasingCollusion} (their Theorem 3.5) for UIC + Global-SCP mechanisms holds in our case. In this characterization, the amount of burn is a function of the number of confirmed bids $\beta^{cardinal}(k)$, and we can encode it as a function $\beta^{diff}(k)$ so that $\beta^{cardinal}(k) = \sum_{j=1}^k\beta^{diff}(k)$. The characterization requires $\beta^{avg}(k) \stackrel{def}{=} \frac{1}{k}\sum_{j=1}^k\beta^{diff}(k)$ to be an increasing function of $k$, and specifies the payments for the bidders. The confirmed bids are the maximal set of highest bids, so that each makes a marginal positive contribution to the joint utility (i.e., its value is higher than the additional burn). Now consider if for some $k$ and $k+1$, $\beta^{avg}(k)$ is a strictly increasing function (and constant for any smaller values of $k$). Let $\beta^{avg}(k) < \rho < \beta^{avg}(k+1)$, and consider $k+1$ bidders with value $\rho$. Then, exactly $k$ of the bidders are confirmed (as they have a positive marginal contribution to the joint utility up to the value $k$). By the payments given by the characterization, they each pay the minimum between the unconfirmed bid $\rho$, and some expression that is at least $\beta^{diff}(k+1) \geq \beta^{avg}(k+1) > \rho$, and thus they each pay $\rho$. Thus, the payments are $k \cdot \rho > k \cdot \beta^{avg}(k)$, which is the burn. However, \cite{chung2023foundations} show that any UIC+$1$-SCP TFM has zero miner revenue, and so we know all payments are burned, which is a contradiction. We conclude that $\beta^{avg}(k)$ is weakly increasing (by the characterization of \cite{welfareIncreasingCollusion}), but never strictly increasing, i.e., it is constant. As Theorem~3.9 of \cite{welfareIncreasingCollusion} shows, these are exactly the class of fully burned posted-price mechanisms. 

\end{proof}

\coNPComplete*

\begin{proof}
    We have already shown that $2$-SCPDP is in coNP. We show it is coNP-complete by a reduction from BOOLEAN TAUTOLOGY. Consider a circuit $C$ with $n$ input bits $q_1, \ldots, q_n$. 
    We construct circuits $C^1, \ldots, C^{n+2}$, each with $n+2$ input bits, in the following way: \[
    \begin{split}
    & C^1(1,0, q_1, \ldots, q_n) = C(q_1, \ldots, q_n), \\
    & C^1(0,1,q_1, \ldots, q_n) = 1 - C(q_1, \ldots, q_n),\\
    & C^1(1,1, q_1, \ldots, q_n) = 1, C^1(0,0,q_1, \ldots, q_n) = 0.
    \end{split}
    \]
    Let 
    \[
    \begin{split}
    & C^2(1,0, q_1, \ldots, q_n) = 1 - C(q_1, \ldots, q_n), \\
    & C^2(0,1,q_1, \ldots, q_n) = C(q_1, \ldots, q_n), \\
    & C^2(1,1, q_1, \ldots, q_n) = 0, C^2(0,0,q_1, \ldots, q_n) = 0.
    \end{split}
    \]
    For any $j > 2$, let 
    \[
    \begin{split}
    & C^j(s_1, s_2, q_1, \ldots, q_n) = 1[q_{j-2} > \max_{j' < j-2} \{q_{j'}, s_1, s_2\}].
    \end{split}
    \]
    
    We now use these auxiliary circuits to construct the $2$-SCPDP. Consider possible bidder values in $\{0, 1\}$, and $n+2$ bidders.   
    
    Let $C^a_j = C^j, C^p_j(b_1, \ldots, b_{n+2}) = C^j(b_1, \ldots, b_{n+2}) \cdot b_j, C^{\beta}_j = 0$. We claim that this construction always outputs YES for $2$-SCPDP when $C$ is a YES instance, and NO when it is a NO instance. 
    
    Notice that, if $C$ always outputs $1$, then our construction is a single-item first-price auction: Bidder $1$, whenever it has the highest value ($1$) always wins, all other bidders do not win, the payment equals the winner's value, and there is no burn. If bidder $1$ has value $0$ and bidder $2$ has value $1$, then bidder $2$ always wins. If both have value $0$, then the lowest index bidder among the other bidders with a value of $1$ wins. By our characterization of Theorem~\ref{lem:single_item_char}, this is a $2$-SCP auction. Albeit, that characterization had some different assumptions (anonymous, continuous types). Nevertheless, it can be directly verified that the characterization still holds in this case. 

    Otherwise, if there is an input so that $C$ outputs $0$, then there are some bids by bidders $b_3, \ldots, b_{n+2}$ so that bidder $2$ wins when it bids a value of $0$ and bidder $1$ bids a value of $1$. By our construction of $C^1, C^2$, if bidder $1$ and $2$ switch their bids (bidder $1$ bids $0$, and bidder $2$ bids $1$), then bidder $1$ wins. Thus, there is a side-contract following Lemma~\ref{lem:highest_bidders}, where bidder $1$ and $2$ switch their bids, and allow the higher-value bidder win. 

\end{proof}

\section{A Separation of Active-Miner / Passive-Miner Models}
\label{sec:active-passive_miner}

%An interesting implication of our result will be that these two models yield the same class of $2$-SCP mechanisms. However, this can does not extend to $1$-SCP mechanisms, and in Appendix~\ref{sec:active-passive_miner} we show a mechanism which is $1$-SCP in the passive miner model but not in \cite{chung2023foundations}'s original active miner model. 

We believe the passive miner model may be of independent interest in settings with more regulation over the auctioneer's actions, where bids are verified through some centralized registry and can not be dropped or faked easily.
In our main result, we have shown that the class of SCP mechanisms is the same whether the miner can drop and fake bids, or not. Since $2$-SCP equals SCP, this conclusion holds for $2$-SCP mechanisms as well. However, it does not hold for $1$-SCP mechanisms, as we now demonstrate.

\begin{definition}
    \textit{The Fully-Burned Second-Price Auction.} The auction confirms the single highest bidder, payment is the second highest bid, and the payment is fully burned. 
\end{definition}

Note that this is not the same as \cite{chung2023foundations}'s ``Burning Second-Price Auction", despite the similarity in the name. 

\begin{lemma}
\label{lem:active_passive_miner_separation}
    The Fully-Burned Second-Price Auction is $1$-SCP with passive miner, but not with active miner.
\end{lemma}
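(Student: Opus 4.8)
The plan is to treat the two claims separately, exploiting the single structural feature of this auction: since every payment is fully burned, the miner's utility $u_{miner}(X;Y)$ is identically zero in every setting of the passive model (each real bidder contributes $p_i - \beta_i = 0$, and there are no fake bids), and it remains zero in the active instance I will construct (dropping a bid adds nothing to the miner, and no fake bids are introduced). Consequently, in any collusion the joint utility of the coalition equals exactly the total utility of its participating bidders, which is what makes the two models diverge.

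For the passive-miner direction, a $1$-SC consists of the miner together with at most one bidder $i$, who may change their bid from the honest $v_i$ to some $b'_i$ while the bid set and all opposing bids stay fixed. Because $u_{miner}\equiv 0$, the coalition's joint utility in the deviating setting is just $u_i(B;A)=v_i\cdot a_i^B - p_i^B$, evaluated at $i$'s true value. I would then invoke the standard dominant-strategy truthfulness of the second-price rule: for fixed opposing bids, no report $b'_i\neq v_i$ yields strictly higher utility than $v_i$. Concretely, if $i$ already wins at $v_i$, the payment is the runner-up bid regardless of $b'_i$ as long as $i$ keeps winning, so nothing is gained, while underbidding only risks losing the item; and if $i$ loses at $v_i$, winning by overbidding forces a payment strictly above $v_i$, giving negative utility. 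Hence $u_i(B;A)\le u_i(A;A)$, so no passive $1$-SC is beneficial, and the auction is $1$-SCP with a passive miner.

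For the active-miner direction, I would exhibit an explicit violating instance. Take two bidders with true values $v_1=10$ and $v_2=5$. Honestly, bidder $1$ wins and pays the second price $5$, so for the coalition of the miner and bidder $1$ the joint value is $u_{miner}(A;A)+u_1(A;A)=0+(10-5)=5$. Now let the miner, using its active powers, drop bidder $2$'s bid. Bidder $1$ remains the sole bid and still wins, but the second price collapses to $0$, so $u_1(B;A)=10-0=10$ and the joint value becomes $0+10=10>5$. This is a beneficial $1$-SC, witnessing that the auction is \emph{not} $1$-SCP with an active miner.

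The argument is essentially routine, and the only conceptual point is to identify precisely \emph{why} the models separate: the second-price payment depends on the runner-up bid, which the coalition cannot touch in the passive model but the miner can delete in the active model. The one place requiring care is making the passive-side truthfulness argument airtight against all deviations $b'_i$ — covering both over- and under-bidding and the boundary tie cases under the auction's single-winner confirmation rule — rather than any genuine difficulty; the active-side construction is immediate once the drop-the-runner-up manipulation is spotted.
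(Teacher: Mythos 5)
Your proof is correct and follows essentially the same route as the paper's: for the passive direction, both arguments reduce to the observation that the miner's utility is identically zero and then run the standard second-price truthfulness case analysis on whether the bidder wins in $A$ and in $B$; for the active direction, both use the drop-the-runner-up manipulation (the paper additionally notes this as an instance of the UIC~+~$1$-SCP impossibility of \cite{chung2023foundations}, but exhibits the same concrete deviation you do).
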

\begin{proof}

    Consider a collusion between the miner and a single bidder, that changes their bid from $b_i$ in $A$ to $b'_i$ in $B$. We do a case-analysis based on whether the bidder wins in $A$ and in $B$. 

    If the bidder loses in $B$, then their utility after the collusion is $0$. The miner utility is always $0$, so the joint utility after the collusion is $0$, and by individual rationality, this is not a beneficial collusion.

    If the bidder wins in both $A$ and $B$, then their utility does not change: Their confirmation is the same, and their payment is the same (as the second-price bid does not change). The miner utility is always $0$ and so overall the joint utility does not change, and this is not a beneficial collusion.

    If the bidder wins in $B$ but loses in $A$, then there exists a bid $b_j \geq b_i$ (since they lose in $A$), and so their utility after winning in $B$ is at most $b_i - b_j \leq 0$. %so the joint utility after the collusion is non-positive. 
    By individual rationality, this is not a beneficial collusion.

    The fully-burned second-price auction is UIC (as burns are irrelevant to UIC, and in terms of confirmation and payment it is equivalent to the single-item second-price auction, which is UIC). \cite{chung2023foundations} show there is no UIC+$1$-SCP mechanism for a finite number of items, where miners can be active. In particular, consider if the miner and the winning bidder collude so that the miner drops the second-highest bid, thus increasing the winner's utility (lower payment), but not affecting the zero miner's utility. 
\end{proof}

Notice that the result of Lemma~\ref{lem:single_item_1scp_2scp} is for the active-miner model. Thus, another way to articulate the result of Lemma~\ref{lem:active_passive_miner_separation} is that while in the active-miner model with a single item, $1$-SCP and $2$-SCP are equivalent, this is not true in the passive-miner model. 

\section{Theory and Evidence of Collusion in TFMs and Auctions}

%\yotam{This needs to answer a few questions}

By now, there is overwhelming evidence of miners engaging in strategic behavior. Miners strategically withhold transactions to increase fees in Bitcoin \cite{minersCollusionParlour}, and follow a pattern of colluding with each other in doing so. \cite{statisticalDetectionSelfishMining} finds mining cartels in Monacoin engage in what is known as a Selfish Mining attack \cite{selfishMining}. 
Miners extract value through MEV \cite{flashboys2, flashbots2022flashbots}, and miners use intricacies of the time-stamping protocol to increase revenue \cite{yaish2023uncle}. 
\cite{NFTSecurity} find evidence for shill-bidding used to increase miner revenue in NFT sales. They estimate the revenue generated in this way to be at least $13$ million USD. This directly relates to the SCP notion, as the form of shill bidding they consider falls under the definition of SCP (a colluding bidder issues bids with no intention of winning, to increase the payment to the miner in an English auction). They also find evidence for \textit{bid shielding}, which is more of a behavioral pattern: In an auction where bidders can retract their bids, a high bidder deters competition, and at the last minute pulls out so that a low bidder can win at a low price. The shielded bids amount to a total of about $940,000$ USD. While this is a form of user-user collusion, it is interesting that it is an order of magnitude less manifested, somewhat supporting the SCP's notion focus on miner-user collusion. However, \cite{candleAuctionsPolkadot}  observes user-user collusion in the Polkadot Parachains candle auctions, and show evidence that bidders engage in turn-taking, so as to jointly lower payments (`you avoid bidding against me, and I will avoid bidding against you').

There is extensive evidence document collusion in traditional auctions. First, there is the legal evidence: At some points (in particular, this was prevalent in the 1980s), more than $50\%$ of Antitrust cases in the US involved bid rigging in auctions \cite{porterZona1993}, and this remained a substantial issue thereafter \cite{krishna2002}. Second, there is the industry insiders / anecdotal evidence: \cite{grahamMarshall} mention an auctioneer who said ``that in 40 years of auctioneering, he
had yet to attend an auction at which a ring was not present''. Third, there is an extensive economic literature developing methods to identify collusion empirically through statistical tests, based on publicly known data. This approach relies on identifying ``tells'' that would arise in collusive bidding but not in competitive bidding. In user-user collusion, bidders may avoid undercutting one another \cite{musolf2025,kawaiNa}. Winning bids will be drawn from a distribution that corresponds to and makes sense as real valuations, but losing bids will not be constrained in this way \cite{porterZona1993}. Fourth, controlled lab experiments show that bidders that communicate and make transfers significantly reduce miner revenue, which is interpreted as facilitating user-user collusion \cite{agranovYariv}. 

%\cite{} describe a beautiful (well, opinions may vary, based on morality) form of collusion against a second-price or an English auction, which they term PAKT. In essence, PAKT follows the following two rules: (i) Bidders in the ring do not bid against each other in the public auction, (ii) They arrive at the efficient confirmation within the ring by pre-auctioning the right to speak in the public auction. Similar ideas can be used in other auction formats. \cite{} finds evidence of (i) in the Ukrainian \textit{procurement} multi-stage auction. \cite{} finds evidence of this behavior in Japan's first-price sealed bids procurement auctions.  
%\subsubsection{Small vs. Big Collusion}

Intuitively, there are several issues that prohibit a larger-size collusion. \cite{Stigler1964} raises the issue of the possible lack of \textit{trust} among the colluders. This may prohibit the collusion in at least two ways: 
(1) A colluder may implement a policy that violates the agreed-upon collusion. 
\cite{ferreira2021dynamic} explores this issue in the blockchain setting, by characterizing TFMs robust only to a collusion where it is incentive-compatible for bidders to follow the collusion, and not further deviate from it for self-gain. To counter that, \cite{smartCollusion} introduce the idea of \textit{smart collusion}, where colluders can impose penalties on defectors, in line with the models of many works in traditional economic theory. %Neither individuals defecting, nor penalties, are part of the SCP notion, so one may think of these as micro-foundations that may support / undermine the notion. Like most TFM works, we take an axiomatic approach to characterizing mechanisms that are SCP, 

(2) A colluder may defect and report the existence of the collusion. \cite{harvardLawCartels} writes that the ``leniency program'', which allows members of cartels to self-report, "has become the most effective tool in the fight against cartels". 
Another issue in a larger-size collusion is that of \textit{bargaining}. Results in cooperative game theory \cite{compteJehiel} show that bargaining among a larger coalition results in higher loss of economic efficiency. 

Empirical evidence shows that a small-sized collusion is more prevalent than a larger-size one, across different works \cite{posner70,hayKelley74,FrassGreer77}. \begin{comment}
In Figure~\ref{fig:collusion_size_histogram}, we aggregate the results. Our choice of histogram brackets is a bit limited, since it is dictated by the intersection of brackets supplied by the various works. Notice that the third bracket is double the size of the first two, and the last bracket contains the entire long tail of the empirical distribution. Intuitively, any effects of selection-bias should accentuate a larger collusion size. That is since the Department of Justice's data (which is the source used for the studies mentioned) comes, among other sources such as competitors' reports, from self-reports of defecting cartel members. Thus, larger cartels present more opportunities for detection. However,  \cite{posner70} concludes that this effect, if present, is small, using evidence that shows that the \textit{longevity} of large cartels in their data is higher than that of small cartels.

\begin{figure}[!htb]

\begin{minipage}{0.4\textwidth}
\includegraphics[width=\linewidth]{}
\caption{A stacked histogram of collusion size vs. number of cases, as found by different empirical works.}
\label{fig:collusion_size_histogram}
\end{minipage}  

\end{figure}
\end{comment}

\end{document}